\renewcommand{\geq}{\geqslant}
\renewcommand{\leq}{\leqslant}
\newtheorem{prop}{Proposition}[section]
\newtheorem{thm}[prop]{Theorem}
\newtheorem{question}[prop]{Question}
\newtheorem{cor}[prop]{Corollary}
\newtheorem{problem}{Problem}
\theoremstyle{definition}
\newtheorem{definition}{Definition}
\theoremstyle{remark}
\newtheorem{rmk}[prop]{Remark}
\newtheorem{eg}{Example}
\newcommand{\R}{\mathbb{R}}
\newcommand{\Z}{\mathbb{Z}}
\newcommand{\N}{\mathbb{N}}
\newcommand{\Q}{\mathbb{Q}}
\newcommand{\TP}{\mathbb{TP}}
\newcommand{\sS}{\mathcal{S}}
\newcommand{\barep}{\bar{\varepsilon}}
\newcommand{\tildeep}{\tilde{\varepsilon}}
\newcommand{\hilbertnorm}[1]{\|#1\|_{H}}
\newcommand{\HB}{B^{m-1}_H(\varepsilon)}
\newcommand{\HBd}[1]{B_{H}^{#1}(\delta)}
\newcommand{\HBall}[3]{B_{H}^{#1}(#2, #3)}
\newcommand{\TBall}{B_2^{m-1}(\frac{\varepsilon}{\sqrt{2}})}
\newcommand{\tconv}{\operatorname{tconv}}
\newcommand{\tper}{\operatorname{tper}}
\newcommand{\conv}{\operatorname{conv}}
\newcommand{\troprank}{\operatorname{trank}}
\newcommand{\Vol}{\operatorname{Vol}}
\newcommand{\poly}{\operatorname{poly}}
\newcommand{\type}{\operatorname{type}}
\newcommand{\C}[2]{C^{#1}_{m,n,#2,W}}
\newcommand{\radius}{\operatorname{r}}
\newcommand{\aff}{\operatorname{aff}}
\newcommand{\lift}[1]{\bm{#1}}
\newcommand{\puiseux}{\mathbb{K}}
\newcommand{\val}{\operatorname{val}}
\numberwithin{equation}{section}
\begin{document}
\title{Approximating the Volume of Tropical Polytopes is Difficult}
\author{St\'ephane Gaubert and Marie MacCaig}
\address[A1,A2]{INRIA and CMAP, \'{E}cole polytechnique, CNRS, Universit\'e Paris Saclay,  
91128 Palaiseau Cedex, FRANCE}
\email{stephane.gaubert@inria.com, m.maccaig.maths@outlook.com}
\date{April 28, 2017}
\subjclass[2010]{14T05, 52B55, 16Y60}
\begin{abstract} We investigate the complexity of counting the number of integer points in tropical polytopes, and the complexity of calculating their volume.  We study the tropical analogue of the outer parallel body and establish bounds for its volume.  We deduce that there is no approximation algorithm of factor $\alpha=2^{\poly(m,n)}$ for the volume of a tropical polytope given by $n$ vertices in a space of dimension $m$, unless P$=$NP.  Neither is there such an approximation algorithm for counting the number of integer points in tropical polytopes described by vertices.  If follows that approximating these values for tropical polytopes is more difficult than for classical polytopes.  Our proofs use a reduction from the problem of calculating the tropical rank.  For tropical polytopes described by inequalities we prove that counting the number of integer points and calculating the volume are $\#$P-hard.

\noindent\textbf{Keywords.} Volume, Counting Integer Points, Tropical Geometry, Polytopes, Approximation algorithms, Computational complexity, Outer parallel body, Hilbert's projective metric
\end{abstract}
\maketitle

\section{Introduction}

Tropical polytopes have appeared in a number of different areas,
including algebraic combinatorics~\cite{TropConv,TropRank,JSY07},
discrete event systems~\cite{ccggq99,katz07,uli2013}, 
scheduling~\cite{MSS04}, program verification~\cite{AGG08},
or game theory~\cite{AGG}. They provide a fundamental
class of modules or convex cones over the tropical semiring~\cite{cgq02}.
They are special instances of tropical convex sets, which arise as log-limits of classical convex sets~\cite{BriecHorvath04}. Tropical polytopes coincide with images by the nonarchimedean valuation of polytopes over non-archimedean fields~\cite{develin_yu,benchimol2013}. We refer the reader to the monographs~\cite{BCOQ92,butkovic,maclagan_sturmfels} for more background on tropical algebra. 

Computing the volume of classical polytopes is a central problem in discrete geometry, as is the question of counting the number of integer points. In this paper, we study the analogous questions for tropical polytopes. 

Recall that computing the volume of classical polytopes is $\#$-P-hard regardless of whether the polytope is described by oracles \cite{Elekes}, inequalities or vertices \cite{DyerFrieze}.  Approximating the volume of a classical polytope is also hard \cite{BaranyFuredi} but a random polynomial time approximation algorithm does exist~\cite{RandomAlg}.

Classically, counting integer points is $\#$P-hard for integral polytopes described by vertices or inequalities (Chapter 7 \cite{DiscreteCompGeometry}).  For polytopes $P=\{x: Ax\leq b\}$, even when restricting the matrix $A$ to the subclass of unimodular matrices which describe contingency tables the problem remains $\#$P-hard \cite{DKM97}, see also \cite{LS03,Mount00}.  Barvinok's algorithm for counting integer points is polynomial for fixed dimension \cite{Bar94}.  For more detail on counting integer points, see the survey \cite{Loera05a}.  

Tropical polytopes are known to be polyhedral complexes whose cells, which are both classically and tropically convex, are called polytropes~\cite{TropConv}.
In this way, the Euclidean volume of a tropical polytope is well defined as the sum of the volumes of its different convex cells. In the aforementioned applications, tropical polytopes typically represent feasible sets.
We are interested
in attaching a measure of importance to these sets, and the Euclidean
volume appears to be the most natural measure. 
For instance, it follows from known results~\cite{TropRank,AGG,Grigoriev}
(see Subsection \ref{subsec:details.AGG.G.DSS} for details)
that determining whether a tropical polytope has zero Euclidean volume is equivalent to deciding whether a mean payoff game is winning.  Hence, the Euclidean volume can play the role of a condition number, measuring how close a game is from not being winning. Moreover, one of our main motivations is to count the number of integer points of tropical polytopes. 
Integer points in the tropical space arise in particular when considering
images by a {\em discrete} nonarchimedean valuation of polytopes over
nonarchimedean fields (like Laurent series), see e.g.~\cite{JSY07}.
As for classical polytopes, the Euclidean volume of a tropical polytope determines the asymptotic behaviour of the number of integer points of a family of scaled polytopes, hence, the Euclidean volume appears to be the canonical notion from this perspective. The reader may note, however, that there are interesting non-classical notions of volumes for tropical polytopes.
These notions have very different properties from the complexity theoretical point of view~\cite{DGJ16}.
Unless otherwise specified, the term ``volume'' will always refer to the Euclidean volume in the sequel. 


We show in this paper that, for tropical polytopes described by a list of vertices, it is hard to approximate either the number of integer points or the volume.  For tropical polytopes described by inequalities, both these problems are proved to be $\#$P-hard.  More precisely, our main results are the following theorems. 

\begin{thm}\label{thm:hardnessApproxVol}  There is no polynomial time approximation algorithm of factor $\alpha=2^{\poly(m,n)}$ for the volume of a full dimensional tropical polytope given by $n$ vertices in a space of dimension $m$ provided P$\neq$NP. 
\end{thm}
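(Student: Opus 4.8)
The plan is to reduce from the problem of computing the tropical rank, which is NP-hard, and then to convert the resulting \emph{decision} gap into a \emph{multiplicative volume} gap by passing to the tropical outer parallel body. Concretely, I would start from an instance of the tropical rank problem, i.e. an integer matrix $A$, and view its $n$ columns as points in $\TP^{m-1}$, forming the tropical polytope $P=\tconv(A)$. By the correspondence between the tropical rank and the dimension of the tropical span recalled in Subsection~\ref{subsec:details.AGG.G.DSS}, one can arrange the reduction so that the NP-hard instance answers ``yes'' exactly when $P$ is full-dimensional (so $\Vol(P)>0$), and ``no'' exactly when $P$ is contained in a lower-dimensional tropical skeleton, of dimension at most $m-2$, so that $\Vol(P)=0$. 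Thus merely \emph{deciding positivity} of the Euclidean volume of a vertex-described tropical polytope is already NP-hard; the remaining work is to upgrade this to hardness of \emph{approximation} for full-dimensional inputs.

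The amplification step is where the outer parallel body enters. I would fix a small $\varepsilon>0$, to be chosen later, and replace $P$ by its tropical outer parallel body $P_\varepsilon=\{x:\operatorname{dist}_H(x,P)\le\varepsilon\}$, the $\varepsilon$-neighbourhood of $P$ in Hilbert's projective metric. The key structural facts I would invoke are that $P_\varepsilon$ is again a tropical polytope, that it is always full-dimensional (so it is a legitimate input for Theorem~\ref{thm:hardnessApproxVol}), and that its vertices are polynomially many and computable in polynomial time from $A$ and $\varepsilon$. I would then apply the volume bounds for the outer parallel body established earlier in the paper, which take a Steiner-type form whose lowest-order term in $\varepsilon$ is governed by the codimension of $P$. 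In the full-dimensional case these bounds give $\Vol(P_\varepsilon)\ge\Vol(P)\ge L$ for an explicit positive lower bound $L$; in the degenerate case, since $P$ has codimension at least one, they give $\Vol(P_\varepsilon)\le U(\varepsilon)$ where $U(\varepsilon)$ carries at least one factor of $\varepsilon$, hence tends to $0$ at a controlled rate.

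To conclude, I would choose $\varepsilon=2^{-\poly(m,n)}$, of polynomial bit-size so that the whole instance $P_\varepsilon$ remains polynomial-size, and large enough power in the exponent that $L/U(\varepsilon)>\alpha^2=2^{\poly(m,n)}$. An $\alpha$-approximation algorithm for the volume applied to $P_\varepsilon$ would then return values lying in two disjoint ranges in the two cases, and would therefore decide the underlying tropical rank instance in polynomial time, forcing P$=$NP. This yields Theorem~\ref{thm:hardnessApproxVol}.

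I expect the main obstacle to be the two-sided volume estimates for $P_\varepsilon$, rather than the reduction itself. The upper bound in the degenerate case must genuinely decay like a positive power of $\varepsilon$ with an explicitly controlled constant, which requires understanding how fattening a lower-dimensional tropical complex in the Hilbert metric inflates its Euclidean volume; the lower bound in the full-dimensional case must be bounded away from this by the required factor, which forces one to guarantee that at least one full-dimensional cell of the polyhedral complex does not collapse. A secondary, but necessary, technical point is to verify that $P_\varepsilon$ has polynomially many vertices of polynomial bit-size and that all constants in the bounds survive the choice $\varepsilon=2^{-\poly(m,n)}$, so that a clean $2^{\poly(m,n)}$ separation is actually produced.
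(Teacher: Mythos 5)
Your overall architecture (reduce from tropical rank, fatten the polytope into its Hilbert outer parallel body, and pick $\varepsilon$ of polynomial bit-size but exponentially small so that the volume gap beats $\alpha^2$) is the same as the paper's. However, there is a fatal flaw at the very first step: you assume the tropical rank instance can be arranged so that ``yes'' corresponds to $P$ being full-dimensional ($\Vol(P)>0$) and ``no'' to $\dim P\le m-2$ ($\Vol(P)=0$), and you conclude that ``deciding positivity of the Euclidean volume is already NP-hard.'' This is false, and the paper itself warns against it: deciding whether a tropical polytope described by vertices has zero volume is equivalent to deciding whether $\troprank(A)$ is maximal, which by Theorem~\ref{thm:AGG.G.DSS} is polynomial-time equivalent to deciding whether a mean payoff game is winning. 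That problem lies in NP~$\cap$~co-NP, so it is not NP-hard unless NP$=$co-NP; moreover, for square matrices, maximality of the tropical rank is even decidable in polynomial time \cite{BH85}. The NP-hardness of TROPRANK (Kim--Roush, Shitov) is for the threshold question ``is $\troprank(A)\ge k$?'' with $k$ an arbitrary part of the input, and this hardness does \emph{not} specialize to the case $k=m$. Consequently the decision gap your reduction rests on (full-dimensional versus degenerate) comes from a problem that is very unlikely to be NP-hard, and no hardness of approximation follows.

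The missing idea, which is how the paper's proof actually works, is that the volume of the outer parallel body must be made to reveal the \emph{exact} rank, not merely whether the rank is maximal. Theorem~\ref{thm:VolBounds} shows $\Vol^{m-1}(P+B^{m-1}_H(\varepsilon))$ is squeezed between $\C{-}{k}\varepsilon^{m-k}$ and $\C{+}{k}\varepsilon^{m-k}$, where $k=\troprank(A)$: the rank appears in the \emph{exponent} of $\varepsilon$. Taking $\varepsilon<\min(\barep,R)$ (Proposition~\ref{prop:IntervalsOrderedNestled}) makes the $m$ intervals $[\C{-}{k}\varepsilon^{m-k},\C{+}{k}\varepsilon^{m-k}]$, $k=1,\dots,m$, pairwise disjoint, and replacing $\varepsilon$ by $\tildeep=\alpha^{-2}\varepsilon$ keeps even the $\alpha$-inflated intervals $[\alpha^{-1}\C{-}{k}\tildeep^{\,m-k},\alpha\C{+}{k}\tildeep^{\,m-k}]$ disjoint (Proposition~\ref{prop:ApproxVolIntervals}). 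Hence an $\alpha$-approximation of $\Vol^{m-1}(P+B^{m-1}_H(\tildeep))$ pins down $\troprank(A)$ exactly, for \emph{every} possible value of $k$, which is what allows the reduction from the genuinely NP-hard threshold problem ``$\troprank(A)\ge k$'' to go through. Your two-case estimate (a uniform lower bound $L$ when $k=m$ versus a bound $U(\varepsilon)\to 0$ when $k\le m-1$) is the $k=m$ versus $k\le m-1$ slice of this picture and cannot be the engine of an NP-hardness proof; you need the full family of disjoint intervals, one per rank value.
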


\begin{thm}\label{thm:HardnessApproximateCounting}  There is no polynomial time approximation algorithm of factor $\alpha=2^{\poly(m,n)}$ for counting the number of integer points in a full dimensional tropical polytope given by $n$ vertices in a space of dimension $m$ provided P$\neq$NP. 
\end{thm}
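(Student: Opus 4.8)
The plan is to reduce from the problem of deciding whether the tropical rank of an integer matrix is full, which is NP-hard, and to reuse the construction and volume estimates that would also yield Theorem~\ref{thm:hardnessApproxVol}. Given a matrix $A$, I would form the tropical polytope $P=\tconv(a_1,\dots,a_n)\subseteq\TP^{m-1}$ spanned by its columns; by the correspondence between the tropical rank and the dimension of the tropical span, $P$ is full dimensional exactly when $\troprank(A)=m$, and otherwise $\dim P\le m-2$. To satisfy the full-dimensionality hypothesis of the theorem in \emph{both} cases, I would replace $P$ by its outer parallel body $P_\varepsilon$ in Hilbert's projective metric, which is again a tropical polytope whose generators are computable from those of $P$ and from $\varepsilon$, and which is full dimensional for every $\varepsilon>0$.

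The mechanism is to convert the drop in dimension into a drop in the growth rate of the lattice-point count. I would scale the generators by a large integer $t$, which scales $P_\varepsilon$ by the factor $t$ in the Euclidean sense while keeping the input size polynomial as long as $t=2^{\poly(m,n)}$. The core estimate is an Ehrhart-type sandwich
\[
\Vol(tP_\varepsilon)-C\,t^{m-2}\;\le\;|tP_\varepsilon\cap\Z^{m-1}|\;\le\;\Vol(tP_\varepsilon)+C\,t^{m-2},
\]
with $C$ bounded polynomially in $m$, $n$, and the bit sizes of $A$ and $\varepsilon$. Combined with the volume bounds $\Vol(P_\varepsilon)\ge c>0$ when $\troprank(A)=m$, and $\Vol(P_\varepsilon)\le C'\varepsilon$ when $\troprank(A)<m$ (the latter holding because $P$ then has Euclidean dimension at most $m-2$, so only the thickening contributes volume), this yields a count of order $t^{m-1}$ in the full-rank case and of order only $t^{m-2}$ in the deficient case, provided $\varepsilon$ is small enough that the thickening does not restore an $\Omega(t^{m-1})$ term.

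I would then fix the parameters to force a gap exceeding $\alpha^2$. Choosing $\varepsilon=2^{-\poly(m,n)}$ with $C'\varepsilon<c/(4\alpha^2)$ and $t=2^{\poly(m,n)}$ with $t>\varepsilon^{-1}$ and $C/t<c/(4\alpha^2)$, the full-rank count is at least $\tfrac{c}{2}t^{m-1}$ while the deficient-rank count is at most $\tfrac{c}{2\alpha^{2}}t^{m-1}$, so the two counts differ by more than the factor $\alpha^{2}$. Since the constructed instance still has dimension $m-1$ and a polynomial number of vertices, any polynomial-time approximation algorithm of factor $\alpha=2^{\poly(m,n)}$ for the number of integer points would, applied to $tP_\varepsilon$, return values lying in disjoint intervals in the two cases, and would therefore decide whether $\troprank(A)=m$ in polynomial time, contradicting P$\neq$NP.

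The step I expect to be the main obstacle is the Ehrhart upper bound in the deficient case: there $tP_\varepsilon$ is a thin slab whose Euclidean volume is tiny but whose boundary is large, so a crude surface-area estimate need not keep the error term at order $t^{m-2}$ with a polynomially bounded constant. The key will be to exploit the explicit polyhedral structure of the parallel body --- that it is a union of polynomially many polytropes of controlled coordinate size --- to bound the lattice points cell by cell, thereby ensuring that the thin direction contributes no spurious factor of $t$ once $t\varepsilon$ is kept below the spacing of the lattice hyperplanes transverse to the affine hull of $P$.
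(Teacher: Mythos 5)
Your reduction starts from the wrong problem, and this breaks the whole argument. You propose to reduce from deciding whether $\troprank(A)=m$, i.e., whether $P=\tconv(A)$ is full dimensional, and your gap construction (count at least $\tfrac{c}{2}t^{m-1}$ in the full-rank case versus at most $\tfrac{c}{2\alpha^2}t^{m-1}$ in the deficient case) only allows an approximation algorithm to distinguish ``full rank'' from ``not full rank''. But that decision problem is not NP-hard unless NP$=$co-NP: by Theorem~\ref{thm:AGG.G.DSS}, deciding whether a rectangular rational matrix fails to have maximum tropical rank is polynomial-time equivalent to deciding whether a mean payoff game is winning, a problem lying in NP~$\cap$~co-NP; the introduction of the paper makes exactly this point about the zero-volume (equivalently, non-full-dimensionality) question. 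The NP-hardness result available here (Kim--Roush \cite{KR05}, Shitov \cite{Sh13}) concerns the threshold problem: given $A\in\{0,1\}^{m\times n}$ and an arbitrary $k$, decide whether $\troprank(A)\geq k$. In the hard instances $k$ need not be anywhere near $m$, so a test that only detects whether the rank equals $m$ decides a (presumably easier) mean-payoff-game-equivalent problem and yields no contradiction with P$\neq$NP.

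To get an actual proof you must make the algorithm's output pin down the rank at \emph{every} level $k$, which is what the paper does, by a route different from yours in a second respect as well: no thickening is used for counting. Since a rank-$k$ polytope is a $(k-1)$-dimensional polyhedral complex, its integer-point count under dilation already grows like $s^{k-1}$; Theorem~\ref{thm:CountingBounds} gives explicit bounds $L_{k-1}(s)\le|sP\cap\Z^{m-1}|\le U_{k-1}(s)$ with $L_{k-1}(s)$ of order $(s\radius(X_T))^{k-1}$ and $U_{k-1}(s)$ of order $3^{m+n-2}(2sR\sqrt{m-1})^{k-1}$. Choosing $s$ with polynomially many bits but large enough (of order $\alpha^2$ times combinatorial factors, and divisible by $1,\dots,k$ so that $s\radius(X_T)$ is an integer), the intervals $[\alpha^{-1}L_{k-1}(s),\,\alpha U_{k-1}(s)]$ for $k=1,\dots,m$ are pairwise disjoint (Propositions~\ref{prop:CountIntDisjointInterval} and~\ref{prop:ApproxCountingIntervals}), so an $\alpha$-approximate count of $|sP\cap\Z^{m-1}|$ determines $\troprank(A)$ exactly and in particular answers the NP-hard query $\troprank(A)\geq k$. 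Your Ehrhart-sandwich machinery could conceivably be retooled to the same end (the count of $tP_\varepsilon$ tracks $t^{m-1}\Vol(P_\varepsilon)$, whose leading term $\varepsilon^{m-k}$ encodes $k$, so with disjointness over all $k$ one recovers the rank), and your use of the Hilbert outer parallel body does have the merit of keeping every constructed instance full dimensional; but as written, the proposal fails at the very first step, the choice of the problem you reduce from.
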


The above two theorems demonstrate that these problems are more difficult for tropical polytopes than they are for classical polytopes.  Indeed, in the classical case, a suitably defined convex body $K$ can be bounded by scalings of an ellipsoid $E$ (Theorem 4.6.1 of \cite{GLS93}): $E(\gamma, a)\subseteq K\subseteq E(\Gamma, a)\text{ where } {\Gamma}/{\gamma}=n(n+1)^2.$  It follows that there is a polynomial time deterministic approximation algorithm of factor $(n(n+1)^2)^n$ for the volume of a facet defined classical polytope. This major discrepancy between complexity bounds in the tropical and in the classical setting can be understood intuitively by noting that tropical polytopes can be encoded, loosely speaking, by classical polytopes whose generators have large coefficients (with a number of bits exponential in the size of the input)~\cite{1405.4161}. Hence, it seems
to be a general principle so far that results of computational complexity
in the Turing (bit) model of computation cannot be transferred from the classical setting to the tropical setting.

It should be noted that deciding whether a tropical polytope has zero volume is simpler than approximating the volume.
Indeed, as mentioned above, deciding the former property is equivalent  to deciding whether a mean payoff game is winning. 
Mean payoff games are a well known problem, with an unsettled complexity. They are not known to be in P, but they are in NP $\cap$ co-NP. Hence, deciding if a tropical polytope has zero volume is not NP-hard unless NP$=$co-NP. 

Our proof of the main theorems relies on the following construction, of independent interest. Recall that if $P$ is a classical convex body, the outer parallel bodies of $P$ are defined as the Minkowski sums of $P$ with Euclidean balls.  If $P$ is a tropical polytope, for each $\varepsilon>0$, we define the {\em Hilbert outer parallel body} of $P$ to be the usual Minkowski sum $P+B_H(\varepsilon)$,
where $B_H(\varepsilon)$ is a ball of radius $\varepsilon$ in Hilbert's seminorm; $\hilbertnorm{x}=\max_{i=1,\dots,n} x_i- \min_{i=1,\dots,n} x_i.$ The latter seminorm is related to Hilbert's projective metric, it was shown in~\cite{cgq02} to 
be a canonical metric in tropical convexity (for instance, the classical best approximation results in Euclidean geometry carry over to this metric). We shall see that the Hilbert outer parallel body of a tropical polytope is still a tropical polytope. 

We recall that the tropical rank determines the dimension of a tropical polytope, thought of as a polyhedral complex~\cite{TropRank}.
The main proof ingredient of Theorem~\ref{thm:hardnessApproxVol} consists of the metric estimates in Theorem~\ref{thm:VolBounds},
which imply that the tropical rank $k$ of a polytope can be recovered from the leading term $\varepsilon^{m-k}$ in the asymptotic expansion of the volume of $P+B_H(\varepsilon)$ as $\varepsilon\to 0$.  
These estimates have an intuituive geometric interpretation.
The lower bound involves the notion of {\em inner radius} $r$,
which was introduced by Sergeev in~\cite{DefiniteClosures}, and represents
the maximal radius of a Hilbert ball included
in a polytrope. 
We shall see that every maximal dimensional cell of the polyhedral 
complex defined by $P$ is tropically isomorphic to a full dimensional
cell (polytrope) $X_T$ in a lower dimensional space: the inner radius
of any cell of the latter form arises in our lower bound.
The upper bound involves 
the number $R$ which is a kind of ``outer radius''
of $P$. Thus, the ratio $R/r(X_T)$ measures how far the polytope is
from being ``rotund''. 
More details on the notation can be found in Sections \ref{sec:Hardness} and \ref{sec:HardnessCountingProof}.


\begin{thm}\label{thm:VolBounds} Let $A=(a_{ij})\in\Z^{m\times n}$.  Suppose that the tropical rank of $A$ is $k$ and that $a_{1j}=0, \forall j\in\{1,\dots,n\}$. Let $P=\tconv(A)$ be the tropical polytope
generated by the columns of $A$.  Then, for all $0<\varepsilon\leq R\sqrt{m-1}$, 
\[ 
{{2}}^{\frac{k-m}{2}}\kappa_{m-k} k
\radius(X_T)^{k-1}\varepsilon^{m-k}\leq \Vol^{m-1}(P+B^{m-1}_H(\varepsilon))\leq 
2^{m+k-1}3^{m+n-2}(m-1)^{\frac{k-1}{2}}
R^{k-1}
\varepsilon^{m-k}
\] where $\kappa_{m-k}$ is the volume of the $(m-k)$-dimensional unit ball,  $X_T$ is a cell tropically isomorphic to a cell of maximal dimension
of the polyhedral complex of $P$, and
 $$R:= \max_{i=2,\dots,m}\hilbertnorm{A_{i\cdot}}.$$\end{thm}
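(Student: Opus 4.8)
The plan is to reduce both inequalities to elementary estimates on a single maximal cell, mediated by two comparisons between the Hilbert ball and Euclidean balls. I work in $\R^m/\R\mathbf 1$ (equivalently, I fix $x_1=0$, as permitted by the hypothesis $a_{1j}=0$), so that $\Vol^{m-1}$ is Lebesgue measure on the $(m-1)$-dimensional quotient. The two comparisons are $\TBall\subseteq\HB\subseteq[-\varepsilon,\varepsilon]^{m-1}$: the left inclusion follows from $\hilbertnorm{x}\leq\sqrt2\,\euclideannorm{x}$, and the right one is immediate since $x_1=0$ forces $|x_i|\leq\varepsilon$ whenever $\hilbertnorm{x}\leq\varepsilon$. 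I also record the volume identity that the Euclidean $(k-1)$-volume of a Hilbert ball of radius $\rho$ in $\TP^{k-1}$ equals exactly $k\rho^{k-1}$ (checked directly for small $k$ from the description $\{\max_i z_i-\min_i z_i\leq\rho\}$, and proved in general by an elementary integration). This is the source of both the factor $k$ and the exponent $k-1$ carried by $\radius(X_T)$ and by $R$.

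For the upper bound I decompose $P=\tconv(A)$ into its maximal cells, each a polytrope of dimension $k-1$ whose affine hull is a translate of a coordinate-difference subspace $V=\{x:\operatorname{supp}(x)\subseteq S,\ \sum_{i\in S}x_i=0\}$ for a $k$-set $S$; let $V^\perp$ be its $(m-k)$-dimensional orthogonal complement. The central estimate is the orthogonal splitting, valid for any $C\subseteq V$,
\[ C+\HB\ \subseteq\ \bigl(C+\pi_V\HB\bigr)\times\pi_{V^\perp}\HB, \]
so $\Vol^{m-1}(C+\HB)\leq\Vol^{k-1}(C+\pi_V\HB)\cdot\Vol^{m-k}(\pi_{V^\perp}\HB)$. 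The transverse factor is controlled by $\HB\subseteq[-\varepsilon,\varepsilon]^{m-1}$, whose projection to $V^\perp$ has $(m-k)$-volume at most a constant times $(2\varepsilon)^{m-k}$, yielding the base $2^{m-k}$ with no power of $m-1$. For the tangential factor, $\pi_V\HB$ lies in the $(k-1)$-dimensional Hilbert ball of radius $\varepsilon$ in $V$, while $C\subseteq P\subseteq\prod_{i=2}^m[\min_j a_{ij},\max_j a_{ij}]$ has Hilbert diameter at most $2R$; using $\varepsilon\leq R\sqrt{m-1}$ to absorb $\varepsilon$ into $R\sqrt{m-1}$, the set $C+\pi_V\HB$ sits in a Hilbert ball of radius $\leq2R\sqrt{m-1}$, of $(k-1)$-volume $k(2R\sqrt{m-1})^{k-1}$ by the identity above. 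Multiplying the factors and summing over the maximal cells—whose number I bound combinatorially by $3^{m+n-2}$ through the theory of types—produces the stated upper bound, the powers of $2$ and of $(m-1)^{1/2}$ emerging exactly as predicted.

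For the lower bound I retain only the distinguished cell. By definition of the inner radius, $X_T$ contains a Hilbert ball of radius $\radius(X_T)$ inside its $(k-1)$-dimensional span $V$; combining this with $\TBall\subseteq\HB$ gives
\[ P+\HB\ \supseteq\ X_T+\TBall\ \supseteq\ B_{H,V}(\radius(X_T))\times B_2^{m-k}\Bigl(\tfrac{\varepsilon}{\sqrt2}\Bigr), \]
the final set being a genuine product in $V\oplus V^\perp$ obtained by slicing the Euclidean ball along $V^\perp$. Its volume is $k\,\radius(X_T)^{k-1}\cdot\kappa_{m-k}(\varepsilon/\sqrt2)^{m-k}=2^{(k-m)/2}\kappa_{m-k}k\,\radius(X_T)^{k-1}\varepsilon^{m-k}$, which is precisely the lower bound.

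The main obstacle is the structural input shared by both halves: that a maximal cell genuinely has a coordinate-difference subspace as affine hull, and that the tropical isomorphism carrying it to the full-dimensional polytrope $X_T$ in $\TP^{k-1}$ is volume-preserving, so that $\radius(X_T)$ and the identity $k\rho^{k-1}$ transfer without a distortion constant—here the unimodularity of the spanning-tree edge vectors $e_i-e_{i'}$ is what makes the Euclidean $(k-1)$-volume on $V$ agree with the standard measure on $\TP^{k-1}$. Pinning down this identification, together with the combinatorial bound on the number of maximal cells, is where the real work lies; the remaining steps are the elementary ball comparisons and volume computations sketched above.
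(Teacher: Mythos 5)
Your upper bound contains two genuine gaps. First, you ``decompose $P$ into its maximal cells'' and sum only over those; but a tropical polytope of rank $k$ is \emph{not} the union of its $(k-1)$-dimensional cells. It generically has lower-dimensional ``tentacles'' contained in no maximal cell (see the first polytope of Figure \ref{Figure1}: a square with two attached segments), and the parts of $P+\HB$ coming from such tentacles are simply not covered by your sum. This is why the paper sums over cells of \emph{every} dimension $d=0,\dots,k-1$, using the count $|\sS^d|\leq 3^{m+n-d-2}$ from Corollary \ref{prop:bound.cells.P} and then a geometric series --- and it is precisely in controlling that series that the hypothesis bounding $\varepsilon$ by a multiple of $R$ gets used. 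Second, your transverse factor is unjustified and, as stated, false: the orthogonal projection of the cube $[-\varepsilon,\varepsilon]^{m-1}$ onto an $(m-k)$-dimensional subspace does \emph{not} have volume ``a constant times $(2\varepsilon)^{m-k}$ with no power of $m-1$''. For the hyperplane orthogonal to the diagonal, the shadow of this cube has $(m-2)$-volume $\sqrt{m-1}\,(2\varepsilon)^{m-2}$, and such subspaces do arise: for the tropical segment $\tconv\{0,t\mathbf{1}\}$ the orthogonal complement of the (diagonal) cell is exactly this hyperplane. In general the shadow carries a factor as large as $\binom{m-1}{m-k}^{1/2}$, so the inclusion $\HB\subseteq[-\varepsilon,\varepsilon]^{m-1}$ is too lossy for your purpose; with any corrected transverse estimate your factors no longer multiply out to the stated constant $2^{m+k-1}3^{m+n-2}(m-1)^{\frac{k-1}{2}}R^{k-1}$, because an extra dimension-dependent factor attaches itself to $\varepsilon^{m-k}$. (Such a weakened bound would still serve the complexity application, but it is not the inequality to be proved.)

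Your lower bound, by contrast, is essentially sound and close to the paper's: keep one maximal cell, use $\TBall\subseteq\HB$ (Proposition \ref{prop.Hilbertball>Ball}), and account for the transverse directions --- where the paper invokes the Steiner formula for intrinsic volumes (Proposition \ref{prop.intrinsicvolumesum}), your orthogonal slicing gives the same term more elementarily, which is a legitimate simplification. However, the structural input you flag as the crux is stated incorrectly in both halves. The affine hull of a cell is a translate of a coordinate-\emph{equality} subspace (coordinates constant on the blocks of the type graph), not a ``coordinate-difference subspace supported on a $k$-set''; and the tropical isomorphism identifying the maximal cell with the full-dimensional polytrope $X_T\subseteq\TP^{k-1}$ is \emph{not} volume-preserving --- no unimodularity argument makes the two Euclidean measures agree. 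For example, the diagonal segment $\tconv\{0,t\mathbf 1\}\subset\TP^{m-1}$ has Euclidean length $t\sqrt{m-1}$, while its image $X_T\subset\TP^{1}$ has length $t$. What is true, and all that your lower bound needs, is that the identification is volume \emph{non-decreasing}; this is Proposition \ref{prop:VolandIntPointsBoundXSbyXT}, proved in the paper by a Cauchy--Binet computation showing the Jacobian factor is at least $1$. So the lower-bound half is repairable by citing that weaker statement, but the upper-bound half needs the missing cells and a genuinely different handling of the transverse directions before it proves the theorem as stated.
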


We deduce from this theorem that a polynomial-time algorithm for approximating the tropical volume would allow us, by applying this algorithm 
to the tropical polytope $P+B^{m-1}_H(\varepsilon)$,
to compute the tropical rank in polynomial time, whereas it was proved by Kim and Roush \cite{KR05} and Shitov \cite{Sh13} that computing the tropical rank is NP-hard.  The validity of our
reduction relies on the observation that the combinatorial terms appearing as a factor of 
$\radius(X_T)^{k-1}\varepsilon^{m-k}$ or 
$R^{k-1} \varepsilon^{m-k}$ in the latter inequalities have
logarithms that are polynomially bounded in the size of the input. 
Finding the optimal factors is probably difficult,
the merit of these ones is to  be explicit and to allow
relatively short proofs.

In a somewhat similar manner, the proof of Theorem~\ref{thm:HardnessApproximateCounting} relies on the following bounds for the number of integer points of a scaled tropical polytope. 
\begin{thm}\label{thm:CountingBounds} Let $A\in\Z^{m\times n}$, $P=\tconv(A)$, 
and let $k$ denote the tropical rank of $A$.  Then, for any $s\in\mathbb{N}$, $$(\lfloor s\radius(X_T)\rfloor+1)^{k}-\lfloor s\radius(X_T)\rfloor^k\le|sP\cap\Z^{m-1}|\leq 3^{m+n-2}\frac{(1+2sR\sqrt{m-1})^{k-1}-1}{2sR\sqrt{m-1}}$$
where $\radius(X_T)$
is defined as in Theorem~\ref{thm:VolBounds}.
\end{thm}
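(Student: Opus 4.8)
The plan is to mirror the proof of Theorem~\ref{thm:VolBounds}, replacing Euclidean volume by the number of integer points while keeping the same two geometric mechanisms: an inscribed Hilbert ball inside one maximal cell for the lower bound, and a circumscribed box around every cell for the upper bound. Throughout I would use that $sP=\tconv(sA)$, since classical scaling commutes with the tropical hull, and I would reuse the structural facts established for Theorem~\ref{thm:VolBounds}: the polytope $P$ is the union of at most $3^{m+n-2}$ polytropes (the cells of its natural polyhedral complex), each cell has dimension at most $k-1$, and every maximal cell is tropically isomorphic to a full-dimensional polytrope $X_T$ in $\TP^{k-1}=\R^{k}/\R(1,\dots,1)$ whose inner radius is $\radius(X_T)$. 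The one new ingredient, absent from the volume argument, is that these tropical isomorphisms and the affine relations cutting out each cell have integer coefficients, because the entries $a_{ij}$ are integers; hence the maps involved are lattice preserving, which is what lets integer-point counts pass between a cell and its lower-dimensional model.

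For the lower bound I would isolate a single maximal cell $C$ of $sP$ together with its tropical isomorphism $\phi$ onto $sX_T\subseteq\TP^{k-1}$. Since $X_T$ contains a Hilbert ball of radius $\radius(X_T)$, the scaled model $sX_T$ contains a Hilbert ball of radius $s\,\radius(X_T)$, so it suffices to count the integer points of a Hilbert ball of radius $\rho$ in $\R^{k}/\R(1,\dots,1)$. Normalising each class so that its minimal coordinate is zero identifies these points with the integer vectors in $\{0,1,\dots,\lfloor\rho\rfloor\}^{k}$ having at least one vanishing coordinate, of which there are exactly $(\lfloor\rho\rfloor+1)^{k}-\lfloor\rho\rfloor^{k}$. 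Because $\phi^{-1}$ is integer-affine, it carries these distinct integer points injectively into $sC\subseteq sP$, and taking $\rho=s\,\radius(X_T)$ yields the claimed lower bound.

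For the upper bound I would cover $sP$ by its at most $3^{m+n-2}$ scaled cells. Each cell is a polytrope of some dimension $d\le k-1$ lying in a $d$-dimensional integer affine subspace, so projecting to $d$ suitably chosen coordinates injects its integer points into a Euclidean box whose side the Hilbert-to-Euclidean estimates of Theorem~\ref{thm:VolBounds} bound by $2sR\sqrt{m-1}$; consequently the cell carries at most $(1+2sR\sqrt{m-1})^{d}$ integer points. To avoid counting points on shared faces more than once, I would instead count, for each cell, only the integer points in its relative interior, and then sum over all cells of dimensions $0,1,\dots,k-1$. The resulting per-dimension geometric series collapses to the rational expression on the right-hand side of the stated inequality, with the factor $3^{m+n-2}$ bounding the number of cells contributing in each dimension.

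The step I expect to be the crux is the lattice bookkeeping rather than the geometry. For the lower bound one must verify that the tropical isomorphism identifying a maximal cell with its lower-dimensional model is genuinely lattice preserving, so that the clean count $(\lfloor\rho\rfloor+1)^{k}-\lfloor\rho\rfloor^{k}$ transfers exactly; this rests on the cell being cut out by integer affine relations and parametrised unimodularly by its $k$ free coordinates. For the upper bound the delicate point is the face-by-face accounting: one needs an inclusion–exclusion over the face lattice of the complex that simultaneously eliminates the multiple counting of boundary points and telescopes to the advertised geometric sum, together with the verification that the circumscribing box has side exactly $2sR\sqrt{m-1}$ under the scaling by $s$.
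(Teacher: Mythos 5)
Your proposal is correct and follows essentially the same route as the paper: the lower bound via an inscribed Hilbert ball in a maximal cell, transported through the integer-preserving tropical isomorphism onto the full-dimensional model $X_T\subseteq\TP^{k-1}$ (exactly the mechanism of Propositions~\ref{prop:HilbertBallIntPoints}, \ref{prop:InscribedHilbertBall}, \ref{prop:VolandIntPointsBoundXSbyXT} and Corollary~\ref{cor:VolandIntPointsX_S(usingX_T)}, assembled in Proposition~\ref{prop:LowerBoundIntPoints}), and the upper bound via the per-dimension cell count $3^{m+n-t-2}$ of Corollary~\ref{prop:bound.cells.P}, a box bound of $(1+2sR\sqrt{m-1})^{d}$ integer points per cell, and a geometric series, as in Proposition~\ref{prop:IntPointUpperBoundCell} and Corollary~\ref{cor:Upperbound.Intpoints}. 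One correction to what you flag as the crux of the upper bound: no inclusion--exclusion over the face lattice, and no restriction to relative interiors, is needed, because over-counting integer points on shared faces only enlarges the right-hand side; the paper simply uses the covering bound $|sP\cap\Z^{m-1}|\leq\sum_{t=0}^{k-1}\sum_{S\in\sS^t}|sX_S\cap\Z^{m-1}|$ (your relative-interior variant would also work, but the ``delicate accounting'' you anticipate is a non-issue). Finally, note that the geometric series $\sum_{t=0}^{k-1}(1+2sR\sqrt{m-1})^{t}$ collapses to $\bigl((1+2sR\sqrt{m-1})^{k}-1\bigr)/\bigl(2sR\sqrt{m-1}\bigr)$, with exponent $k$; this is what your argument, like the paper's Corollary~\ref{cor:Upperbound.Intpoints}, actually establishes, whereas the statement of Theorem~\ref{thm:CountingBounds} displays exponent $k-1$ --- a discrepancy internal to the paper rather than a flaw in your proposal.
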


The paper is organised as follows.  Section \ref{sec:Prelim} contains the required background for our results.  We give simple formula for the volume and number of integer points in Hilbert Balls in Section \ref{sec:HilbertBalls}, which we extend to give lower bounds on these values for polytropes by considering inscribed Hilbert balls.  
Section \ref{sec:ProofBounds} is dedicated to proving bounds on the volume of the Hilbert outer parallel body (Theorem \ref{thm:VolBounds}), which are then used to prove Theorem \ref{thm:hardnessApproxVol} in Section \ref{sec:Hardness}.
We prove bounds on the number of integer points in the dilation of a tropical polytope (Theorem \ref{thm:CountingBounds}) in Section \ref{sec:CountingIntegerPoints} and prove hardness of approximating this value in Section \ref{sec:HardnessCountingProof}.  

In the above results, the tropical polytope is described by generators.  We may ask what the present results become when the polytope is described by inequalities.  We prove in Section \ref{sec:SharpPforInequalities} that, in the latter case, computing the volume or counting the number of integer points is $\#$P-hard.

We finally point out some open problems in Section~\ref{sec-comments}.

\section{Definitions and preliminary results} \label{sec:Prelim}
\subsection{Tropical polytopes}
We work in the tropical semifield $(\R_{\min}, \oplus, \odot)$ where $\R_{\min}:=\R\cup\{\infty\}$,  $a\oplus b:=\min(a,b)$, and the multiplication operation is, $a\odot b:=a+b$.  The pair $(\oplus, \odot)$ extends to matrices and vectors in the same way as in linear algebra, that is (assuming compatibility of sizes)\begin{align*} (A\oplus B)_{ij}&=a_{ij}\oplus b_{ij},\ (A\odot B)_{ij}=\bigoplus_k a_{ik}\odot b_{kj}\text{ and }(\alpha\odot A)_{ij}=\alpha\odot a_{ij}.\end{align*} 

Throughout we use $\oplus$ and $\odot $ to denote the operations in tropical algebra. The notation $+$ denotes classical addition and the Minkowski sum.  Finally $\cdot$ denotes classical multiplication and $ab$ (for $a, b\in\R$) will always mean $a\cdot b$.
   
A subset $S$ of $\R^n$ is a  \emph{tropical convex cone} if $$a\odot x\oplus b\odot y\in S \ \forall x,y\in S, \ a,b\in\R.$$  
We shall call these sets \emph{tropically convex}, for brevity.  The smallest tropical convex subset of $\R^m$ containing $S$ is the \emph{tropical convex hull} of $S$, denoted $\tconv(S)$.  In a slight abuse of notation, we will, for a matrix $A\in\R^{m\times n}$, write $\tconv(A)$ to mean $\tconv(\{A_j:j\in N\})$, that is, the tropical convex hull of the vertices described by the columns, $A_j$, of $A$.

Tropically convex sets can be identified with their image in the $(m-1)$ dimensional \emph{tropical projective space} $$\TP^{m-1}=\R^m/(1,\dots,1)\R.$$  
We may identify an equivalence class $x\in\TP^{m-1}$ to its unique
representative in the space 
$H_j:= \{x\in \R^m\mid x_j =0\}$ where $1\leq j\leq m$ is a fixed
index. By default, we will choose $j=1$. Then, we may represent the
equivalence class of $(0,x_2,\dots,x_m)^T$
by the point $(x_2, x_3, \dots, x_m)^T$. In this way, subsets
of $\TP^{m-1}$ may be visualised as subsets of $\R^{m-1}$.
 We refer the reader to \cite{cgq02,TropConv} for more background on tropical convexity. 

A \emph{tropical polytope} is the tropical convex hull of a finite subset $S$.  Every tropical polytope in $\mathbb{TP}^{m-1}$ is the support of a polyhedral complex, which is described by Develin and Sturmfels in~\cite{TropConv}. We next recall some properties of this complex.

\begin{prop}[{\cite{TropConv}}]\label{prop:PolytopeCellSummary} Given is a tropical polytope $P=\tconv(A)\subseteq\TP^{m-1}$ where $A\in\R^{m\times n}$

1.  Each point $x\in\TP^{m-1}$ has a type, $\type(x)=(S_1,\dots,S_m)$, where $S_i\subseteq \{1,\dots,n\}$ defined by $$j\in S_i\text{ if } a_{ij}-x_i=\min_{k=1,\dots,m}(a_{kj}-x_k)$$

2.  Let $X_S=\{x\in\TP^{m-1}: S\subseteq \type(x)\}.$  Then $X_S$ is a closed, convex polyhedron in the usual sense, \begin{equation}\label{eqn:CellInequalities} X_S=\{x\in\TP^{m-1}: x_k-x_i\leq a_{kj}-a_{ij} \forall i,k\in\{1,\dots, m\} \text{ s.t. } j\in S_i\}.\end{equation}

3.  Each $X_S$ is called a cell.

4. Each bounded cell $X_S$ is a tropical polytope, and is also convex in the usual sense.

5.  The collection of bounded cells $X_S$ provide a polyhedral decomposition of the tropical polytope $P$, called the tropical complex generated by $A$.
\end{prop}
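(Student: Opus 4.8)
The plan is to treat the five parts in order, noting first that parts 1 and 3 are purely definitional: they merely name the data $\type(x)$ and the sets $X_S$, so all the content lies in parts 2, 4 and 5. For part 2 I would simply unwind the meaning of the containment $S\subseteq\type(x)$. By the definition in part 1, the condition $j\in S_i$ with $S_i\subseteq\type(x)_i$ says exactly that row $i$ attains the minimum $\min_{k}(a_{kj}-x_k)$ for column $j$, i.e. $a_{ij}-x_i\leq a_{kj}-x_k$ for every $k$, which rearranges to the difference constraint $x_k-x_i\leq a_{kj}-a_{ij}$. Collecting these over all $i$ and all $j\in S_i$ yields precisely the system \eqref{eqn:CellInequalities}, a finite family of linear inequalities in the affine space $\TP^{m-1}$; hence $X_S$ is an intersection of finitely many closed half-spaces and so a closed convex polyhedron.

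For part 4, classical convexity is immediate from the half-space description just obtained. For tropical convexity I would prove the elementary fact that any set cut out by difference constraints $x_k-x_i\leq c_{ki}$ is tropically convex: if $x,y\in X_S$ and $z=a\odot x\oplus b\odot y$ (so $z_\ell=\min(a+x_\ell,b+y_\ell)$), then whichever of the two terms attains the minimum defining $z_i$ also dominates $z_k$, giving $z_k\leq z_i+c_{ki}$, so that $z\in X_S$. Thus every $X_S$ is tropically convex; since a bounded tropically convex polyhedron is the tropical convex hull of its finitely many tropical vertices, a bounded cell is a tropical polytope, and I would cite the relevant structure theorem of \cite{cgq02,TropConv} for this last step.

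The real work is part 5, for which I would argue three points. (a) \emph{The bounded cells are exactly the cells contained in $P$.} I would show that $x\in P$ if and only if every coordinate set $S_i$ of $\type(x)$ is nonempty: if $x=\bigoplus_j\lambda_j\odot A_j$ then each $x_i=\min_j(\lambda_j+a_{ij})$ is attained at some $j^{*}$, and one checks $j^{*}\in S_i$; conversely, if every $S_i$ is nonempty, choosing $j(i)\in S_i$ and setting $\lambda_j=\max_k(x_k-a_{kj})$ recovers $x=\bigoplus_j\lambda_j\odot A_j$ by a direct minimization, since $a_{ij}-\min_k(a_{kj}-x_k)\geq x_i$ always, with equality at $j=j(i)$. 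A short separate check shows that $S_i=\emptyset$ lets $x_i\to-\infty$ inside $X_S$, so ``all $S_i$ nonempty'' coincides with boundedness. (b) \emph{The cells fit together face-to-face.} Every point has a well-defined type, so the cells cover $\TP^{m-1}$; and directly from the defining condition one reads off $X_S\cap X_T=X_{S\cup T}$, where $(S\cup T)_i:=S_i\cup T_i$, so the intersection of two cells is again a cell. It then remains to verify that $X_{S'}$ is a face of $X_S$ whenever $S\subseteq S'$, which follows because passing from $S$ to $S'$ turns some of the inequalities \eqref{eqn:CellInequalities} into equalities along supporting hyperplanes; equivalently, the type decomposition is the common refinement of the $n$ polyhedral subdivisions induced by the individual columns $A_j$. (c) Intersecting this complex with $P$ and invoking (a) then exhibits the bounded cells as a polyhedral decomposition of $P$.

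The main obstacle is part 5(b): establishing that the type decomposition is genuinely a polyhedral complex, not merely a cover of $\TP^{m-1}$ by polyhedra. The clean intersection formula $X_S\cap X_T=X_{S\cup T}$ reduces this to the single claim that enlarging a type passes to a face; the delicate point there is to confirm that the newly activated equalities are supporting hyperplanes of $X_S$, which is most safely handled through the common-refinement viewpoint, where each column $A_j$ defines a tropical hyperplane subdividing $\TP^{m-1}$ into polyhedral sectors and the common refinement of finitely many polyhedral complexes is again one.
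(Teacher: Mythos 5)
The paper offers no proof of this proposition at all --- it is quoted as background with a citation to Develin--Sturmfels \cite{TropConv} --- and your argument is, in substance, a correct reconstruction of precisely that source's proof: the rearrangement giving the half-space description \eqref{eqn:CellInequalities}, the elementary tropical-convexity check for difference-constraint sets, the characterization of $P$ as the set of points whose type has every $S_i$ nonempty (with the witness $\lambda_j=\max_k(x_k-a_{kj})$), and the common-refinement-of-tropical-hyperplane-fans argument for the complex structure are exactly the ingredients used there. The one point to watch is that the face relation ``$X_{S'}$ is a face of $X_S$ whenever $S\subseteq S'$'' is only valid for achieved types, i.e.\ those in which every column $j$ is assigned to some $S_i$ (for the all-empty $S$ one has $X_S=\TP^{m-1}$, whose only face is itself); you correctly flag this as the delicate step, and your common-refinement formulation repairs it, since cells of the refinement automatically assign every column.
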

 
We define $\sS^d$ to be the set of all types of bounded cells of dimension $d$, that is, \begin{equation}\label{eqn:No.CellsDim}\sS^d:=\{S=(S_1,\dots,S_m): X_S \text{ is bounded with dimension } d\}.\end{equation}

Without loss of generality we will assume that the generators of a tropical polytope in $\TP^{m-1}$ are given by a matrix $A\in\R^{m\times n}$ with $a_{1j}=0$ for all $j\in \{1,\dots,m\}.$
This is consistent with representing an element of $\TP^{m-1}$
by a representative $(0,x_2,\dots,x_m)^T\in H_1\simeq \R^{m-1}$.

We define the \emph{dimension} of a tropical polytope in $\mathbb{TP}^{m-1}$ to be the maximal dimension of a convex cell of the associated polyhedral complex.  The zero dimensional cells of a tropical polytope are called its \emph{pseudovertices}.

\begin{prop}[{\cite[Coro.~25]{TropConv}}]\label{prop:number.faces.P}
Let $P$ be a tropical polytope in $\mathbb{TP}^{m-1}$, generated by $n$ points in general position. Then, the number of 
faces of dimension $k$ of the tropical complex associated to $P$ is given by the trinomial coefficient 
$${n+m-k-2 \choose n-k-1, m-k-1, k}.$$
\end{prop}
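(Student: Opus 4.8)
The plan is to pass through the Develin--Sturmfels duality between the bounded subcomplex of the tropical complex and the interior faces of a regular triangulation of a product of simplices, and then to carry out the face count on the combinatorial side.

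\textbf{Step 1 (reduction to a triangulation of $\Delta_{m-1}\times\Delta_{n-1}$).} First I would invoke the fundamental correspondence of \cite{TropConv}: the matrix $A$, viewed as a height function on the vertices $(i,j)$ of the product of simplices $\Delta_{m-1}\times\Delta_{n-1}$, induces a regular subdivision $\mathcal{T}_A$, and there is an inclusion-reversing bijection between the bounded cells $X_S$ of the tropical complex of $P=\tconv(A)$ and the interior faces of $\mathcal{T}_A$, sending a cell of dimension $k$ to an interior face of dimension $(m+n-2)-k$. When the $n$ generators are in general position $\mathcal{T}_A$ is a triangulation, and since the vertex configuration of $\Delta_{m-1}\times\Delta_{n-1}$ is totally unimodular, every such triangulation is unimodular. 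Thus counting the $k$-dimensional faces of the tropical complex amounts to counting the interior $(m+n-2-k)$-dimensional simplices of a unimodular triangulation of $\Delta_{m-1}\times\Delta_{n-1}$.

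\textbf{Step 2 (combinatorial encoding of faces).} I would encode a simplex of $\mathcal{T}_A$ by the bipartite graph on $\{1,\dots,m\}\sqcup\{1,\dots,n\}$ whose edges are the vertices $(i,j)$ it uses; affine independence of these vertices forces the graph to be a forest, so a $d$-dimensional simplex corresponds to a forest with $d+1$ edges. As the facets of $\Delta_{m-1}\times\Delta_{n-1}$ are exactly the loci where one row or one column index is omitted, a simplex is \emph{interior} precisely when its forest has no isolated vertex, i.e.\ meets every row and every column. Hence a $k$-dimensional bounded cell corresponds to a spanning forest of $K_{m,n}$ (minimum degree $\ge 1$) with $m+n-1-k$ edges, equivalently with $k+1$ connected components, occurring as a face of $\mathcal{T}_A$. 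As a consistency check, for $k=0$ these are the spanning trees occurring in $\mathcal{T}_A$, whose number is the normalized volume $\binom{m+n-2}{m-1}$ of the product, matching the trinomial coefficient at $k=0$.

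\textbf{Step 3 (the count and its independence of the configuration).} The explicit value I would obtain by computing on one convenient generic configuration, the one realizing the staircase triangulation, whose maximal simplices are the monotone lattice paths from $(1,1)$ to $(m,n)$. A codimension-$k$ interior face arises from such a path by replacing $k$ of its corners by diagonal (two-cell) steps; encoding a down-step by $D$, a right-step by $R$ and a diagonal step by $B$, these faces biject with the words on $\{D,R,B\}$ having $m-k-1$ letters $D$, $n-k-1$ letters $R$ and $k$ letters $B$, of which there are exactly $\binom{m+n-k-2}{m-k-1,\,n-k-1,\,k}$. It remains to argue that this count is the same for \emph{every} generic configuration, not merely the staircase one. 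This is the main obstacle: as already the $2\times 2$ case shows, distinct triangulations realize genuinely different sets of forests, even though their totals agree. I would settle it using unimodularity: the Ehrhart $h^{*}$-polynomial of $\Delta_{m-1}\times\Delta_{n-1}$ is an intrinsic invariant, and for a unimodular triangulation it coincides with the $h$-polynomial of the triangulation; together with Stanley's reciprocity for the interior faces of a triangulated ball, this forces the number of interior faces of each dimension to be independent of the unimodular triangulation chosen. Combining this invariance with the staircase computation then yields the asserted trinomial coefficient for arbitrary generators in general position.
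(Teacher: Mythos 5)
Your proposal is correct, and it is essentially the argument used for this statement in the literature: the paper itself does not reprove the proposition but cites \cite{TropConv} (Corollary 25), whose proof proceeds exactly as you do — via the duality between bounded cells of the tropical complex and interior faces of the induced regular triangulation of $\Delta_{m-1}\times\Delta_{n-1}$, the observation that all such triangulations are unimodular so that the (interior) $f$-vector is triangulation-independent, and the explicit count on the staircase triangulation by lattice paths with $k$ diagonal steps. Your reconstruction, including the reciprocity argument pinning down the interior face numbers from the $h$-vector, fills in the details faithfully.
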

\begin{rmk}We note here that, if the points are not in general position, then Proposition \ref{prop:number.faces.P} provides an upper bound on the number of bounded cells of each dimension.
\end{rmk}

Given a set $T\subset \R^m$ let $\Vol^m(T)$ denote the usual $m$-dimensional volume (the Lebesgue measure) of $T$.  We define the volume of a tropical polytope in dimension $m-1$ to be the sum of the volumes of its full dimensional
(i.e., ($m-1)$-dimensional) bounded cells; that is, given $A\in\R^{m\times n}$ the polytope $P=\tconv(A)\subseteq\TP^{m-1}$ has volume

$$\Vol^{m-1}(P):=\sum_{S\in\sS^{m-1}} \Vol^{m-1}(X_S),$$
each $X_S$ being identified to a subset of $\R^{m-1}$. 

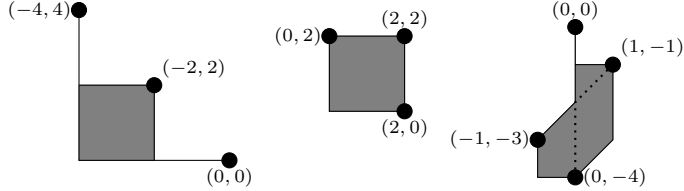
\begin{figure}\begin{center} 
\tiny
\begin{tikzpicture} 
\filldraw[color=gray] (-2,1)--(-1,1)--(-1,0)--(-2,0)--(-2,1);
\draw (0,0)--(-2,0)--(-2,2);
\draw (-1,0)--(-1,1)--(-2,1);
\filldraw (0,0) circle (0.1cm) (-1,1) circle (0.1cm) (-2,2) circle (0.1cm);
\draw (0,0) node[anchor=north]  {$(0,0)$} (-1,1) node[anchor=south west] {$(-2,2)$}  (-2,2) node[anchor=east]  {$(-4,4)$};\end{tikzpicture}
\begin{tikzpicture}
\filldraw[color=gray] (-1,2)--(-1,1)--(0,1)--(0,2)--(-1,2);
\draw (-1,2)--(-1,1)--(0,1)--(0,2)--(-1,2);
\filldraw (0,1) circle (0.1cm) (0,2) circle (0.1cm) (-1,2) circle (0.1cm);
\draw (0,1) node[anchor=north]  {$(2,0)$} (0,2) node[anchor=south] {$(2,2)$}  (-1,2) node[anchor=east]  {$(0,2)$}
(0,0) node {};\end{tikzpicture}
\begin{tikzpicture}
\filldraw[color=gray] (0,0)--(0,-1)--(-0.5, -1.5)--(-0.5,-2)--(0,-2)--(0.5,-1.5)--(0.5,-0.5)--(0,-0.5);
\draw (0,0)--(0,-0.5)--(0,-1)--(-0.5, -1.5)--(-0.5,-2)--(0,-2)--(0.5,-1.5)--(0.5,-0.5)--(0,-0.5);
\filldraw (0,0) circle (0.1cm) (-0.5,-1.5) circle (0.1cm) (0,-2) circle (0.1cm) (0.5,-0.5) circle (0.1cm);
\draw (0,0) node[anchor=south]  {$(0,0)$} (0.5,-0.5) node[anchor=south west] {$(1,-1)$}  (0,-2) node[anchor=west]  {$(0,-4)$} (-0.5, -1.5) node[anchor=east] {$(-1,-3)$};
\draw[thick, dotted] (0.5,-0.5)--(0,-1)--(0,-2);\end{tikzpicture}
\end{center}\caption{\small{Three tropical polytopes in $\TP^2$ (drawn in $\R^2$).}} \label{Figure1}\end{figure}
\normalsize

\begin{eg}In Figure \ref{Figure1} we have drawn three tropical polytopes in $\TP^2$.  The first two are generated by three points in $\R^3$ and have one full dimensional bounded cell.  The last has four generators, and three full dimensional bounded cells.  Each of the tropical polytopes has volume $4$.\end{eg}

\begin{rmk}
The definition of volume of a tropical polytope is independent of which coordinate we choose to fix when identifying $\TP^{m-1}$ to $\R^{m-1}$. Indeed, we may preserve the symmetry between coordinates by identifying
$\TP^{m-1}$ to $H:=\{x\in \R^m\mid \sum_i x_i =0\}$. Then, the absolute value of the determinant of the linear isomorphism $\pi_j: H \to H_j, x\mapsto x-x_j e_j$, where $e_j$ is the $j$th basis vector, can be checked to be independent of the choice of $j$. 
Now, if $P\subset \TP^{m-1}$, we may identify $P$ to a subset $P'$ of $H$.
It follows that $\Vol^{m-1}(\pi_j P')= |\det \pi_j| \Vol^{m-1}(P')$ is independent of the choice of $1\leq j\leq m$. Observe that $\Vol^{m-1}(\pi_1 P')$ coincides
with our initial definition of volume. 
\end{rmk}

\begin{eg}  Let $$A=\begin{pmatrix} -1&-4&-7\\-3&-2&2\\2&-1&-3\end{pmatrix}.$$  Let $P=\tconv(A)$.  In Figure \ref{Figure2} we have drawn each of the polytopes $P\cap H_1$, $P\cap H_2$ and $P\cap H_3$.  Observe that the volume of the tropical polytope is always 4.

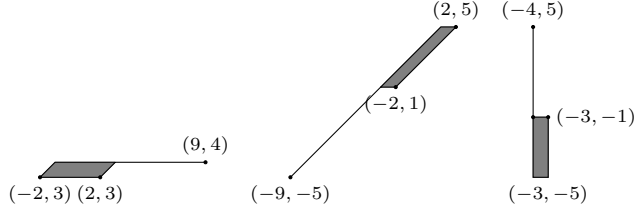
\begin{figure}\begin{center} 
\tiny
\begin{tikzpicture}[scale=0.2]
\filldraw[color=gray] (3,4)--(-1,4)--(-2,3)--(2,3)--(3,4);
\draw (9, 4)--(-1,4)--(-2,3)--(2,3)--(3,4);
\filldraw (9,4) circle (0.1cm) (2,3) circle (0.1cm) (-2,3) circle (0.1cm);
\draw (9,4) node[anchor=south]  {$(9,4)$} (2,3) node[anchor=north] {$(2,3)$}  (-2,3) node[anchor=north]  {$(-2,3)$};
\end{tikzpicture}
\begin{tikzpicture}[scale=0.2]
\filldraw[color=gray] (-3,1)--(1,5)--(2,5)--(-2,1)--(-3,1);
\draw (-9,-5)--(1,5)--(2,5)--(-2,1)--(-3,1);
\filldraw (-9,-5) circle (0.1cm) (-2,1) circle (0.1cm) (2,5) circle (0.1cm);
\draw (-9,-5) node[anchor=north]  {$(-9,-5)$} (-2,1) node[anchor=north] {$(-2,1)$}  (2,5) node[anchor=south]  {$(2,5)$};
\end{tikzpicture}
\begin{tikzpicture}[scale=0.2]
\filldraw[color=gray] (-4,-1)--(-4,-5)--(-3,-5)--(-3,-1)--(-4,-1);
\draw (-4,5)--(-4,-5)--(-3,-5)--(-3,-1)--(-4,-1);
\filldraw (-4,5) circle (0.1cm) (-3,-1) circle (0.1cm) (-4,-1) circle (0.1cm);
\draw (-4,5) node[anchor=south]  {$(-4,5)$} (-3,-1) node[anchor=west] {$(-3,-1)$}  (-3,-5) node[anchor=north]  {$(-3,-5)$};
\end{tikzpicture}
\end{center}\caption{\small{On the left, $P\cap H_1$.  In the center, $P\cap H_2$.  On the right, $P\cap H_3$.}}\label{Figure2} \end{figure}
\normalsize

\end{eg}

\begin{rmk}Whilst we consider in this paper the Euclidean volume of tropical polytopes, we remark that there is another natural definition of the volume of a tropical polytope, arising from dequantisation, as discussed in \cite{DGJ16}.  Every tropical polytope $P=\tconv(\{V_1,\dots,V_n\})$ corresponds to a classical polytope over the field of absolutely convergent Puiseux series, $\lift{P}_t=\conv(\{\lift{V}_1,\dots,\lift{V}_n\})$ through the valuation $\val: \puiseux\rightarrow \R$ which sends $f\in\puiseux$ to its highest/leading exponent.  The authors define the dequantised tropical volume as  $$\lim_{t\to \infty}\frac{\log{ \Vol(\lift{P}_t)}}{\log{t}},$$ and show that, under genericity assumptions, it can be computed in polynomial time.  \end{rmk}

The Hilbert seminorm is given by $$\hilbertnorm{x}=\max_{i=1,\dots,m} x_i- \min_{i=1,\dots,m} x_i.$$  It is the tropical analogue of the Euclidean norm.  The Hilbert ball in dimension $d-1$ (centered at the origin), is \begin{equation}\label{eqn:HilbertGeneratingMatrix}B^{d-1}_H(\varepsilon):=\{ x\in\mathbb{TP}^{d-1}: \max_{i=1,\dots,d} x_i- \min_{i=1,\dots,d}x_i\leq \varepsilon\}\end{equation} where $\varepsilon>0$.  Note that $B^{d-1}_H(\varepsilon)$ is the tropical polytope $\tconv(H)$ for \begin{equation}\label{eqn:MatrixHforHilbertBall} H=\begin{pmatrix}0&&&\\&0&\varepsilon&\\&\varepsilon&\ddots&\\&&&0 \end{pmatrix}\in\R^{d\times d}\end{equation} where all off diagonal entries are $\varepsilon$.  The Hilbert Ball in $\TP^2$ is shown in Figure \ref{Figure3}.

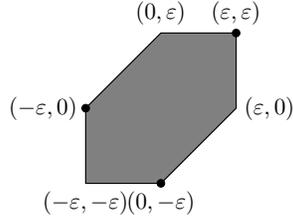
\begin{figure}\begin{center} 
\begin{tikzpicture} [scale=0.5, every node/.style={scale=0.8}]
\filldraw[color=gray] (0,-2)--(2,0)--(2,2)--(0,2)--(-2,0)--(-2,-2)--(0,-2);
\draw (0,-2)--(2,0)--(2,2)--(0,2)--(-2,0)--(-2,-2)--(0,-2);
\draw (0,-2) node[anchor=north] {$(0,-\varepsilon)$} (2,0) node[anchor=west] {$(\varepsilon,0)$} (2,2) node[anchor=south] {$(\varepsilon,\varepsilon)$} (0,2) node[anchor=south] {$(0,\varepsilon)$} (-2,0) node[anchor=east] {$(-\varepsilon,0)$} (-2,-2) node[anchor=north] {$(-\varepsilon,-\varepsilon)$};
\filldraw (-2,0) circle (0.1cm) (0,-2) circle (0.1cm) (2,2) circle (0.1cm);
\end{tikzpicture}
\end{center}\caption{\small{The Hilbert Ball in $\TP^2$.}} \label{Figure3}\end{figure}

Given $A\in\R^{m\times m}$ its \emph{tropical permanent} is $$\tper(A)=\min_{\pi} a_{1\pi(1)}+\dots + a_{m\pi(m)}$$ where the minimum is taken over all permutations $\pi$ of $\{1,\dots, m\}$.  A square matrix $A\in\R^{m\times m}$ is \emph{tropically singular} if the minimum in $\tper(A)$ is attained at least twice. 
The \emph{tropical rank} of a matrix $A$, denoted by $\troprank(A)$, is the largest integer $r$ such that $A$ has a tropically nonsingular $r\times r$ minor.

\begin{thm}[{\cite{TropRank}}]\label{thm:trank.dim}
Let $A\in\R^{m\times n}$. 
The tropical rank of $A$ is equal to $1+d$ where $d$ is the dimension of the polytope $\tconv(A)$ in $\TP^{m-1}$.
\end{thm}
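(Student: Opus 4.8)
The plan is to establish both inequalities $d\le \troprank(A)-1$ and $d\ge \troprank(A)-1$ by reducing everything to a single statement about when a rectangular matrix has a full dimensional tropical span.

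\textbf{Key Lemma.} For $C\in\R^{p\times n}$ with $p\le n$, the tropical polytope $\tconv(C)$ is full dimensional in $\TP^{p-1}$ (that is, has dimension $p-1$) if and only if $\troprank(C)=p$, i.e.\ some $p\times p$ submatrix of $C$ is tropically nonsingular.

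Granting this, I would first handle the lower bound $d\ge \troprank(A)-1$. Writing $r=\troprank(A)$, choose rows $I$ and columns $J$ with $|I|=|J|=r$ so that $A_{I,J}$ is tropically nonsingular. Since a tropical (min-plus) combination commutes with deleting coordinates, the coordinate projection $\pi_I:\TP^{m-1}\to\TP^{r-1}$ satisfies $\pi_I(\tconv(A))=\tconv(A_{I,\cdot})\supseteq \tconv(A_{I,J})$; as $\pi_I$ is linear it cannot increase dimension, so $d\ge \dim\tconv(A_{I,J})=r-1$ by the Key Lemma. For the upper bound I would start from a bounded cell $X_S$ of maximal dimension $d$ and a point $x$ in its relative interior. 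The binding equalities $x_k-x_i=a_{kj}-a_{ij}$ holding on $X_S$ (those arising from columns $j\in S_i\cap S_k$) organise the rows $\{1,\dots,m\}$ into the connected components of a graph, and a spanning-tree count gives exactly $d+1$ components. Picking one representative row from each component yields a set $I$, $|I|=d+1$, for which $\pi_I(x)$ lies in a full dimensional cell of $\tconv(A_{I,\cdot})$ (no binding equality survives among the chosen rows); the Key Lemma then produces a tropically nonsingular $(d+1)\times(d+1)$ submatrix of $A$, so $\troprank(A)\ge d+1$.

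It remains to prove the Key Lemma, which carries the real content. For the direction ``full dimensional $\Rightarrow$ nonsingular'' I would take an interior point $x$; full dimensionality forces $\min_k(c_{kj}-x_k)$ to be attained at a unique row for every column $j$ (a tie would impose a binding equality via \eqref{eqn:CellInequalities}, lowering the dimension), giving a surjection $\phi:\{1,\dots,n\}\to\{1,\dots,p\}$. Choosing a column $j_i$ with $\phi(j_i)=i$ for each $i$ yields $p$ distinct columns $J=\{j_1,\dots,j_p\}$, and the strict inequalities $c_{i j_i}-x_i<c_{k j_i}-x_k$ for $k\ne i$ imply, after reindexing by a permutation and cancelling the $x$-terms, that the identity is the \emph{unique} optimal permutation of $C_{[p],J}$; hence that submatrix is tropically nonsingular. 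The converse is the crux. Given a tropically nonsingular $p\times p$ submatrix $B$ with, after relabelling columns, the identity as unique optimum, I would seek $x$ with $x_k-x_\ell< b_{k\ell}-b_{\ell\ell}$ for all $k\ne\ell$. This is a difference (shortest-path) system, feasible precisely when every directed cycle has strictly positive weight, and the weight of a cycle on indices $\ell_1,\dots,\ell_t$ equals $\sum_s b_{\ell_{s+1}\ell_s}-\sum_s b_{\ell_s\ell_s}$, i.e.\ the gap in the tropical permanent between a nontrivial permutation and the identity, which is positive exactly because the identity is the unique optimum. A feasible $x$ lies in an open region where each column is covered and every relevant inequality is strict, so by Proposition~\ref{prop:PolytopeCellSummary} it sits in a bounded, full dimensional cell of $\tconv(B)\subseteq\tconv(C)$. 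I expect this converse to be the main obstacle: turning unique optimality of the tropical permanent into the positive-cycle feasibility condition, and then verifying that the resulting open region really is a \emph{bounded} cell of the complex rather than an unbounded one.
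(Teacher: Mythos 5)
First, a caveat on the comparison: the paper does not prove Theorem~\ref{thm:trank.dim} at all --- it is imported from \cite{TropRank} --- so there is no internal proof to measure you against; the natural benchmark is the Develin--Santos--Sturmfels argument itself, and your proposal is in substance a reconstruction of it (the same key lemma equating tropical nonsingularity with full-dimensionality of the span, the same coordinate-projection trick for one inequality, and the same ``one row per connected component of the type graph'' extraction for the other). Within your Key Lemma both directions are sound: the unique-permutation computation in the forward direction is correct, and in the converse the positive-cycle criterion for strict feasibility of the difference system is exactly unique optimality of the identity permutation. The boundedness worry you flag at the end dissolves: your constructed point has type $S_i=\{i\}$ for every $i$, and a point all of whose type entries are nonempty lies in $\tconv(B)$ (this is the standard membership criterion from \cite{TropConv}), hence in a bounded cell of the complex; since a finite union of cells covering an open set must contain a full-dimensional cell, $\tconv(C)$ is full dimensional.

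The genuine gap is in your upper bound, at the parenthetical claim that $\pi_I(x)$ lies in a full-dimensional cell of $\tconv(A_{I,\cdot})$ because ``no binding equality survives among the chosen rows.'' Choosing one representative per component only kills the \emph{old} ties: for each column $j$ the set of rows attaining $\min_k(a_{kj}-x_k)$ is a clique of the type graph, so it meets $I$ in at most one row. But restricting the minimum to $I$ can create \emph{new} ties: when the global minimizer for column $j$ lies outside $I$, the restricted minimum $\min_{k\in I}(a_{kj}-x_k)$ may be attained simultaneously by two representatives of distinct components, which inserts an edge in the type graph of $\pi_I(x)$ and drops the dimension of its cell. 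So the claim is false for an arbitrary $x$ in the relative interior of $X_S$. Two repairs work. (i) Genericity: a new tie is the condition $x_k-x_l=a_{kj}-a_{lj}$ with $k,l\in I$ in distinct components; since the affine hull of $X_S$ is cut out exactly by the old binding equalities, the function $x\mapsto x_k-x_l$ is non-constant on it, so these finitely many loci are proper affine slices of $X_S$ and a generic $x$ in the relative interior avoids them. (ii) Project the cell rather than the point: on $X_S$ every coordinate is pinned to that of its component's representative, so $\pi_I$ is injective on $X_S$ and $\pi_I(X_S)$ is a $d$-dimensional subset of $\tconv(A_{I,\cdot})\subseteq\TP^{d}$; hence some bounded cell of $\tconv(A_{I,\cdot})$ is full dimensional, and you can then run the forward direction of your Key Lemma at an interior point of that cell. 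Relatedly, your ``spanning-tree count gives exactly $d+1$ components'' is the statement that $\dim X_S$ equals the number of components minus one; the inequality $\leq$ is immediate from the binding equalities, but your argument genuinely needs equality (if there were more than $d+1$ components, the projected image could not be full dimensional in $\TP^{|I|-1}$), and the reverse inequality requires the observation that the affine hull of $X_S$ is defined by precisely the constraints of \eqref{eqn:CellInequalities} that are tight at a relative interior point, namely those with $j\in S_i\cap S_k$. This is a known fact from \cite{TropConv}; cite it or prove it rather than waving at it.
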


\subsection{The equivalence between zero volume and mean payoff games}\label{subsec:details.AGG.G.DSS}

In this section we recall that zero volume is equivalent to mean payoff games.  This section is self contained and can be skipped without affecting readability of the paper.  

Recall that the column vectors of a matrix $A$ are \emph{tropically dependent} if there exists some $x$ not identically $\infty$,  such that in the expression $\min_{j}(a_{ij}+x_j)$ the minimum is attained at least twice for every $i$.  The columns are called \emph{tropically linearly independent} otherwise.  The following basic result links tropical linear independence with tropical nonsingularity. It was proved by Develin, Santos and Sturmfels in \cite{TropRank} (for matrices with finite entries), by Izhakian and Rowen in~\cite{IR09}, and by Akian, Gaubert and Guterman in~\cite{AGG}.

\begin{thm}[{\cite{TropRank,IR09,AGG}}]\label{prop:LinIndepColandNonsingular}
Let $A\in\R_{\min}^{m\times n}$ with $m\geq n$.  The columns of $A$ are tropically linearly independent if and only if $A$ has an $n\times n$ submatrix which is tropically nonsingular.
\end{thm}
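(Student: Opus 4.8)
The plan is to prove both directions of Theorem~\ref{prop:LinIndepColandNonsingular} using the characterization of tropical singularity via the tropical permanent, combined with a careful counting of the contributions to the minimum.

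First I would recall the setup: a matrix $A\in\R_{\min}^{m\times n}$ with $m\geq n$ has tropically linearly independent columns precisely when there is \emph{no} vector $x$ (not identically $\infty$) such that $\min_j(a_{ij}+x_j)$ attains its minimum at least twice in $j$, for every row $i\in\{1,\dots,m\}$. The cleanest route for the ($\Leftarrow$) direction is the contrapositive: assuming the columns are tropically dependent, I would produce a witness $x$ and show that \emph{every} $n\times n$ submatrix is tropically singular. Fix an $n\times n$ submatrix $B$ indexed by rows $I$. Evaluating the dependence condition on the $n$ rows of $I$ gives, for each such row, a double minimum; I would then argue that the permutation $\pi$ achieving $\tper(B)$ cannot be unique, because the doubly-attained minima force an alternative assignment of the same cost. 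Concretely, I expect to build a bipartite graph on columns whose edges record the tie-breaking indices, find an alternating structure (a cycle or an augmenting swap), and use it to exhibit two distinct optimal permutations, hence singularity.

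For the ($\Rightarrow$) direction, again via contrapositive, I would assume every $n\times n$ submatrix is tropically singular and construct a nonzero dependence vector $x$. The natural candidate is to take $x$ related to the optimal dual/potential variables of an assignment problem: since tropical nonsingularity of an $n\times n$ submatrix is exactly unique optimality in the associated assignment problem, universal singularity means that for every choice of $n$ rows the assignment admits alternative optima. I would then patch these local alternatives into a single global $x$, using complementary slackness to guarantee that in each row the minimum $\min_j(a_{ij}+x_j)$ is attained twice. A convenient device is to pass to the lifted/perturbed setting or to invoke the min-plus Cramer-type identity linking $\tper$ and tied optima, so that the tie in the permanent directly yields a tied minimum in each row.

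The main obstacle I anticipate is the ($\Rightarrow$) direction: going from singularity of \emph{all} $n\times n$ submatrices (a combinatorial condition about permutations) to the \emph{existence of a single vector} $x$ witnessing dependence across all $m$ rows simultaneously. Local ties in each submatrix are easy; stitching them into one consistent potential $x$ is where the work lies, and I would likely need a compactness or LP-duality argument (the set of valid $x$ is a polyhedral cone defined by the tie constraints, and I must show it is nonempty) rather than a direct combinatorial construction. Since the statement is attributed to three independent proofs in~\cite{TropRank,IR09,AGG}, I expect the intended argument is the assignment-problem/duality one, and I would follow that line, citing Theorem~\ref{thm:trank.dim} only if a dimension-counting shortcut shortens the patching step.
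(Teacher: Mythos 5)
There is no in-paper proof to compare against here: the paper imports this equivalence as a known theorem, attributing it to Develin--Santos--Sturmfels (for finite entries), Izhakian--Rowen, and Akian--Gaubert--Guterman, so your proposal must stand on its own. Its first half does. Your ($\Leftarrow$) direction (contrapositive: a dependence witness $x$ forces every $n\times n$ submatrix to be singular) is correct and completable along the lines you sketch: adding $x_j$ to column $j$ and subtracting row minima changes all permutation weights of a fixed submatrix by the same constant, hence preserves the set of optimal permutations, and leaves a nonnegative matrix with at least two zeros per row; starting from any row and alternately following a zero edge that avoids the optimal permutation $\pi$ and then an edge of $\pi$, the walk can never get stuck, so it closes into an alternating cycle, and exchanging $\pi$ along that cycle produces a second permutation of weight at most (hence equal to) $\tper$. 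The only detail you gloss over is that entries of $A$ and coordinates of $x$ may equal $+\infty$ (the theorem is stated over $\R_{\min}$, and $x$ is merely required to be not identically $\infty$); this is a technicality, but it is precisely the point where the finite-entry argument of Develin--Santos--Sturmfels stops applying and the other two references are needed.

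The ($\Rightarrow$) direction, however, is a genuine gap, and your proposed fallback would fail. The set of dependence witnesses $\{x:\ \min_j(a_{ij}+x_j)\ \text{is attained at least twice for every } i\}$ is \emph{not} a polyhedral cone; it is a finite union of polyhedra (a tropical prevariety), one polyhedron for each choice, in each row, of which two columns realize the tie. Hence there is no single linear program whose feasibility or duality theory certifies its nonemptiness, and no compactness argument applies either: nonemptiness of this union, under the hypothesis that all $n\times n$ minors are singular, is exactly the content of the theorem, not a technical step one can outsource to LP duality. The same objection hits the patching idea: different $n\times n$ submatrices can certify their singularity by alternative optimal permutations supported on incompatible equality subgraphs, and complementary slackness gives you dual potentials per submatrix with no mechanism forcing a single $x$ that works simultaneously for all $\binom{m}{n}$ of them. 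This is where each of the three cited proofs invests its real work (type/Cramer-style constructions, supertropical algebra, and symmetrized-semiring or game-theoretic arguments, respectively). As written, your proposal establishes only one implication.
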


The next theorem follows by combining results of \cite{AGG} and of Grigoriev and Podolskii \cite{Grigoriev}.

\begin{thm}[Corollary of~{\cite{AGG,Grigoriev}}]\label{thm:AGG.G.DSS} The following problems are equivalent with respect to polynomial time Karp (see \cite{Karp}) reductions:

(P1) Determining whether a tropical polytope given by generators with rational entries has zero volume,

(P2) Deciding whether a rectangular matrix $A\in\mathbb{Q}^{m\times n}$ does not have maximum tropical rank,

(P3) Checking whether a mean payoff game is winning.
\end{thm}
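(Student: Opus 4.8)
The plan is to establish the two equivalences (P1)$\Leftrightarrow$(P2) and (P2)$\Leftrightarrow$(P3) separately; composing the resulting Karp reductions then yields the full cycle. Since the statement is advertised as a corollary of \cite{AGG} and \cite{Grigoriev}, the work consists essentially of assembling these cited results with Theorems~\ref{thm:trank.dim} and~\ref{prop:LinIndepColandNonsingular}, while keeping track of matrix sizes and encoding lengths.

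For (P1)$\Leftrightarrow$(P2), the key observation is Theorem~\ref{thm:trank.dim}: for $A\in\Q^{m\times n}$, the dimension $d$ of $P=\tconv(A)\subseteq\TP^{m-1}$ satisfies $\troprank(A)=1+d$. Since $P$ has positive volume precisely when it is full dimensional, i.e.\ $d=m-1$, we obtain that $P$ has zero volume if and only if $\troprank(A)<m$. The only subtlety is matching this with ``maximum tropical rank'', which equals $\min(m,n)$. When $m\leq n$ the two conditions coincide and the identity map serves as the reduction in both directions. When $m>n$ we use that the tropical rank is invariant under transposition: given an instance of (P2) we first transpose so that the number of rows does not exceed the number of columns, and given an instance of (P1) with $m>n$ the polytope automatically has zero volume (since $\troprank(A)\leq n<m$), so we may output any fixed rank-deficient matrix as the (P2) instance. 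All these operations are clearly computable in polynomial time.

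For (P2)$\Leftrightarrow$(P3) we first reduce, via Theorem~\ref{prop:LinIndepColandNonsingular}, the maximality of the tropical rank to tropical linear (in)dependence of the columns: after transposing so that $m\geq n$, the matrix $A$ fails to have maximum tropical rank if and only if its columns are tropically linearly dependent, that is, if and only if there exists $x$, not identically $\infty$, such that for every row $i$ the minimum $\min_{j}(a_{ij}+x_j)$ is attained at least twice. This is exactly the feasibility of a min-plus (tropical) linear system. The substantive step is then the result of Akian, Gaubert and Guterman~\cite{AGG}, combined with the analysis of tropical linear systems by Grigoriev and Podolskii~\cite{Grigoriev}, which shows that deciding feasibility of such a system is polynomial-time equivalent to deciding whether a mean payoff game is winning. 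Feeding the system above into this equivalence yields (P2)$\Leftrightarrow$(P3).

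I expect the (P2)$\Leftrightarrow$(P3) step to be the main obstacle: whereas (P1)$\Leftrightarrow$(P2) is a direct consequence of Theorem~\ref{thm:trank.dim} modulo bookkeeping on matrix dimensions, the equivalence with mean payoff games is the genuine input. Care is needed to phrase the tropical dependence condition as precisely the class of systems treated in~\cite{AGG,Grigoriev}, in particular to enforce the requirement that $x$ be not identically $\infty$, and to keep all entries rational and of polynomially bounded encoding length so that every reduction remains polynomial time. Beyond this matching of formulations, no argument is required other than the careful concatenation of the cited equivalences.
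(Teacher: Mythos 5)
Your proposal is correct and follows essentially the same route as the paper: (P1)$\Leftrightarrow$(P2) via Theorem~\ref{thm:trank.dim} and the definition of volume through full-dimensional cells, and (P2)$\Leftrightarrow$(P3) via Theorem~\ref{prop:LinIndepColandNonsingular} together with the cited results of \cite{AGG} (Section 4.2, for the reduction to mean payoff games) and \cite{Grigoriev} (Corollary 8, for the converse). Your treatment of the rectangular case ($m>n$, handled by transposition and trivial instances) is in fact more careful than the paper's proof, which passes over this bookkeeping silently.
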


\proof

{\em P1 $\Leftrightarrow$ P2:}
Recall that $\sS^d$ denotes the set of types of $d$-dimensional cells of the polyhedral
complex associated to $P=\tconv(A)$, with $A\in\Q^{m\times n}$,
see \eqref{eqn:No.CellsDim}.  Now $\Vol^{m-1}(P)=0\Leftrightarrow \sS^{m-1}=\emptyset.$
Further, $m-1$ dimensional cells correspond to $m\times m$ submatrices of $A$ so, by Theorem \ref{thm:trank.dim}, $(\exists S\in\sS^{m-1})\Leftrightarrow \troprank(A)=m.$  The claim follows.

{\em P2 reduces to P3:}
Tropical linear independence reduces to the absence of a winning strategy in a MPG by Section 4.2 of~\cite{AGG}.  Since checking whether the columns of $A$ are tropically dependent is equivalent to $A$ not having maximum tropical rank by Theorem~\ref{prop:LinIndepColandNonsingular}, it follows that P2 reduces to P3.

{\em P3 reduces to P2:}
This follows from Corollary 8 of \cite{Grigoriev}, using again
the equivalence in Theorem \ref{prop:LinIndepColandNonsingular}.
\endproof

\section{Hilbert Balls}\label{sec:HilbertBalls}

Recall that the Hilbert ball in dimension $d-1$ of radius $\delta$ centered at $v\in\TP^{d-1}$ is $$\HBall{d-1}{v}{\delta}:=\{v+x\in\TP^{d-1}:x_i-x_j\leq\delta, \text{ for all }i,j\in\{1,\dots, d\},\ x_1=0\}.$$  We write $\HBd{d-1}$ for $\HBall{d-1}{0}{\delta}$.
\subsection{The number of integer points and volume of Hilbert balls}

We prove exact formulas for the volume of, and number of integer points in, Hilbert Balls. 
We will use the following result on the Ehrhart polynomial \cite{Ehr62}, see also \cite{BLDPS05}.  

\begin{thm}[{\cite{Ehr62}}]\label{thm:EhrhartPoly} The number of lattice points in an integer scaling of a lattice polytope of dimension $d$ in $\R^m$ is a rational polynomial function of degree $d$.
Further, the leading coefficient of this polynomial is the volume of $P$, \begin{equation}\label{eqn:EhrhartPoly} |sP\cap\Z^m|=\Vol^d(P)s^d+c_{d-1}k^{d-1}+\dots+a_0.\end{equation}  \end{thm}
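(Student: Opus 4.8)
The plan is to prove this classical fact (Ehrhart's theorem) by the generating-function homogenization method. Writing $L_P(s):=|sP\cap\Z^m|$, I would first package these counts into the Ehrhart series $E_P(t):=\sum_{s\geq 0}L_P(s)\,t^s$ and show that it is a rational function of the shape $h(t)/(1-t)^{d+1}$ with $h$ a polynomial of degree at most $d$. Extracting coefficients then yields both the polynomiality and the degree bound, and a separate asymptotic argument identifies the leading coefficient.

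First I would homogenize: place $P$ at height one in $\R^{m+1}$ and form the cone $C:=\operatorname{cone}\{(v,1):v \text{ a vertex of }P\}$. By construction, a lattice point $z\in C\cap\Z^{m+1}$ with last coordinate $z_{m+1}=s$ is exactly a lattice point of $sP$, so $E_P(t)=\sum_{z\in C\cap\Z^{m+1}}t^{z_{m+1}}$ is the generating function of $C$ graded by height. I would then triangulate $P$ into lattice $d$-simplices using its lattice points, reducing the computation of $E_P$ to that of the simplices by inclusion--exclusion over the shared lower-dimensional faces of the triangulation. For a single lattice $d$-simplex $\Delta$ with cone generators $w_0,\dots,w_d$ (linearly independent, each of height one), the cone decomposes as the disjoint union of the half-open fundamental parallelepiped $\Pi=\{\sum_{i=0}^d\lambda_i w_i:0\leq\lambda_i<1\}$ translated by the free semigroup $\sum_i\Z_{\geq 0}w_i$, giving
\[ E_\Delta(t)=\frac{\sum_{p\in\Pi\cap\Z^{m+1}}t^{\,p_{m+1}}}{(1-t)^{d+1}}=:\frac{h_\Delta(t)}{(1-t)^{d+1}}. \]
The crucial point is that the height of any $p=\sum\lambda_i w_i\in\Pi$ equals $\sum_i\lambda_i\in[0,d+1)$, so $h_\Delta$ has degree at most $d$. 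Summing the signed contributions of all cells shows $E_P(t)=h(t)/(1-t)^{d+1}$ with $\deg h\leq d$.

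Finally I would extract the coefficients. Since $(1-t)^{-(d+1)}=\sum_{s\geq 0}\binom{s+d}{d}t^s$ and $\deg h\leq d$, the coefficient $L_P(s)$ is a $\Q$-linear combination of the polynomials $\binom{s+d-j}{d}$ for $0\leq j\leq d$, hence a genuine rational polynomial in $s$ of degree at most $d$ for every $s\geq 0$. To identify the leading coefficient and confirm the degree is exactly $d$, I would invoke the elementary estimate that the number of lattice points in a large dilate approximates its relative volume, $L_P(s)/s^d\to\Vol^d(P)$ as $s\to\infty$ (a Riemann-sum argument, with $\Vol^d$ normalized by the lattice induced in $\aff(P)$), which forces the degree-$d$ coefficient to equal $\Vol^d(P)>0$.

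The main obstacle is the passage from quasi-polynomiality to genuine polynomiality: for an arbitrary rational polytope one obtains only a quasi-polynomial, and it is precisely the lattice hypothesis---entering through the degree bound $\deg h_\Delta\leq d<d+1$, which annihilates the periodic corrections---that secures a true polynomial. The secondary difficulties are the careful bookkeeping of the inclusion--exclusion over the triangulation and the correct normalization of the lower-dimensional volume in the leading-coefficient step.
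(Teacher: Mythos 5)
The paper does not actually prove this statement: it is imported verbatim from Ehrhart \cite{Ehr62} (see also \cite{BLDPS05}) and used as a black box, so there is no internal proof to compare yours against. Your proposal is the standard modern proof of Ehrhart's theorem --- coning over $P$ at height one, triangulating into lattice simplices, computing the generating function of each simplicial cone by tiling it with translates of the half-open fundamental parallelepiped, and extracting coefficients of $h(t)/(1-t)^{d+1}$ with $\deg h\leq d$ as $\Q$-combinations of $\binom{s+d-j}{d}$ --- and it is essentially sound, including the identification of the leading coefficient via the limit $L_P(s)/s^d\to\Vol^d(P)$.

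Two remarks on the places you flagged as delicate. First, the inclusion--exclusion over the triangulation is where the genuine bookkeeping lives: as stated ("inclusion--exclusion over the shared lower-dimensional faces") the signs are governed by the M\"obius function of the face poset, and the cleanest correct formulation is instead to note that the relative interiors of all faces of the triangulation partition $P$, so that $L_P(s)$ is a sum of open-simplex counts; these are handled by the same parallelepiped argument with the complementary half-open conditions $0<\lambda_i\leq 1$ (or by Ehrhart--Macdonald reciprocity). Either route works, but one of them should be committed to. Second, your normalization caveat is a point where you are more careful than the paper itself: for a $d$-dimensional lattice polytope embedded in $\R^m$ with $d<m$, the leading coefficient is the volume relative to the lattice induced in $\aff(P)$, which need not equal the Euclidean $d$-volume. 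The paper's statement elides this; in its only application (the Hilbert ball, which is full-dimensional in $\TP^{d-1}\simeq\R^{d-1}$) the distinction is harmless, but your version is the correct general statement.
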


\begin{prop}\label{prop:HilbertBallIntPoints} For any integer vector $v$ and $\delta>0$, $$|\HBall{d-1}{v}{\delta}\cap\Z^{d-1}|=|\HBd{d-1}\cap\Z^{d-1}|=(\lfloor\delta\rfloor+1)^d-\lfloor\delta\rfloor^d.$$
\end{prop}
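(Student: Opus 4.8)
The plan is to count integer points directly by setting up a bijection, after two easy reductions. \textbf{Reduction to integer radius and the first equality.} First I would note that for any integer representative $(0,x_2,\dots,x_d)$ the value $\hilbertnorm{x}=\max_i x_i-\min_i x_i$ is a nonnegative integer, so the defining constraint $\hilbertnorm{x}\le\delta$ is equivalent to $\hilbertnorm{x}\le\lfloor\delta\rfloor$. Writing $n:=\lfloor\delta\rfloor$, it therefore suffices to count the integer points $(x_2,\dots,x_d)\in\Z^{d-1}$ for which the class $[(0,x_2,\dots,x_d)]$ has spread at most $n$. The first equality (independence of the center $v$) I would obtain from translation invariance: since $v\in\Z^d$, the map sending the representative $v+x$ to its normalisation in $H_1$ subtracts only the integer $v_1$ from each coordinate, so it carries $\HBd{d-1}$ onto $\HBall{d-1}{v}{\delta}$ by the integer translation $x\mapsto (v_i-v_1+x_i)_{i=2}^d$, which is a bijection of $\Z^{d-1}$ preserving the spread.

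\textbf{The bijection.} The heart of the argument is to identify the integer points of $\HBd{d-1}$ with the lattice points of the cube $[0,n]^d$ lying on the facet of minimal coordinate $0$. An integer point of $\HBd{d-1}$ is, by our normalisation, a class $[(0,x_2,\dots,x_d)]\in\TP^{d-1}$ with integer coordinates and spread at most $n$, and distinct integer points give distinct classes. Every such class admits a unique representative with first coordinate $0$ (subtract the integer $x_1$) and also a unique representative $y\in\Z^d$ with $\min_i y_i=0$ (subtract the integer $\min_i y_i$). The latter satisfies $0\le y_i\le n$ for all $i$, hence $y\in[0,n]^d$; conversely every $y\in[0,n]^d\cap\Z^d$ with $\min_i y_i=0$ has spread at most $n$ and defines such a class. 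Passing between the ``$x_1=0$'' normalisation used for counting and the ``$\min_i y_i=0$'' normalisation thus yields a bijection between the integer points of $\HBd{d-1}$ and the set $\{y\in[0,n]^d\cap\Z^d:\min_i y_i=0\}$.

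\textbf{The count.} Finally I would count the target set by complementation: the cube $[0,n]^d$ contains $(n+1)^d$ lattice points, while those with $\min_i y_i\ge 1$ are exactly the lattice points of $[1,n]^d$, numbering $n^d$. Subtracting gives $(n+1)^d-n^d=(\lfloor\delta\rfloor+1)^d-\lfloor\delta\rfloor^d$, as claimed. I expect the only delicate point to be the bookkeeping in the bijection, namely verifying that the two normalisations are each well defined and compatible with integrality on every projective class of bounded spread; once this is arranged the count is immediate, and in particular no appeal to the Ehrhart polynomial (Theorem~\ref{thm:EhrhartPoly}) is required, since the formula is obtained exactly for every $\delta>0$ rather than asymptotically.
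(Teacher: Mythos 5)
Your proof is correct and follows essentially the same route as the paper: both arguments identify the integer points of $\HBd{d-1}$ (the $x_1=0$ representatives) with the lattice points of the cube $[0,\lfloor\delta\rfloor]^d$ having minimal coordinate $0$, and then count the latter by subtracting $|[1,\lfloor\delta\rfloor]^d\cap\Z^d|$ from $|[0,\lfloor\delta\rfloor]^d\cap\Z^d|$. The only cosmetic difference is that you replace $\delta$ by $\lfloor\delta\rfloor$ at the outset, whereas the paper does so in the final counting step; the bijection and the complementation argument are identical.
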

\proof The first equality is trivial.  We show $|\HBd{d-1}\cap\Z^{d-1}|=(\lfloor\delta\rfloor+1)^d-\lfloor\delta\rfloor^d$  by considering the function $\pi: [0, \delta]^d\rightarrow \R^d$ given by $\pi(x)=x-x_1e$ where $e=(1, 1,\dots, 1)^T$.

Let $S:=\{x\in[0,\delta]^d: \min_ix_i=0\}$.  We claim that $\pi: S\rightarrow \HBd{d-1}$ is a bijection.  Indeed:

\noindent (i) $(\forall x\in S) \pi(x)\in \HBd{d-1}$ because, if $y=\pi(x)$, then clearly $y_1=0$ and $y_i-y_j=(x_i-x_1)-(x_j-x_1)=x_i-x_j\leq\delta.$

\noindent (ii) $\pi$ is injective because $\pi(x)=\pi(y)\Rightarrow y=x+\alpha e$ for some $\alpha\in\R$ where $\min_ix_i=0=\min_i y_i\Rightarrow \alpha=0$.

\noindent (iii)  $\pi$ is surjective because, for $y\in\HBd{d-1}$,  $\pi(x)=y$ where $x=y+(\min_iy_i)e\in S$. 

Hence, there is a bijection between $S$ and $\HBd{d-1}$, meaning that 
\begin{align*}|\HBd{d-1}\cap\Z^{d-1}|&=|S\cap\Z^d|
\\&=|[0,\delta]^d\cap\Z^d|-|\{x\in[0,\delta]^d: \min_ix_i\neq 0\}\cap\Z^d|
\\&=|[0,\delta]^d\cap\Z^d|-|[1,\delta]^d\cap\Z^d|
\\&=|[0,\lfloor\delta\rfloor]^d\cap\Z^d|-|[1,\lfloor\delta\rfloor]^d\cap\Z^d|
\\&=(\lfloor\delta\rfloor+1)^d-\lfloor\delta\rfloor^d.
\end{align*}
\endproof

\begin{cor}\label{cor:HilbertBallVolume} For $\delta\in\Q$, $\Vol^{d-1}(\HBall{d-1}{v}{\delta})=\Vol^{d-1}(\HBd{d-1})=d\delta^{d+1}.$
\end{cor}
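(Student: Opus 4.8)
The plan is to read off the volume from the lattice-point count of Proposition~\ref{prop:HilbertBallIntPoints} via Ehrhart's theorem, combined with the homogeneity of the Hilbert ball under dilation. (I read the target exponent as $d-1$ rather than $d+1$: the body $\HBd{d-1}$ is genuinely $(d-1)$-dimensional, and the hexagon of Figure~\ref{Figure3}, i.e.\ $d=3$, has area $3\delta^2=d\delta^{d-1}$, which fixes the exponent.)

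The first equality $\Vol^{d-1}(\HBall{d-1}{v}{\delta})=\Vol^{d-1}(\HBd{d-1})$ is immediate: inside the representative hyperplane $H_1$, the ball $\HBall{d-1}{v}{\delta}$ is just the translate of $\HBd{d-1}$ by $v$, and $(d-1)$-dimensional Lebesgue measure is translation invariant. For the value, I would first settle the case $\delta=1$. The polytope $B_H^{d-1}(1)$ is full-dimensional in $H_1\cong\R^{d-1}$ (it contains a neighbourhood of the origin, since $\hilbertnorm{y}<1$ for all small $y$) and has integer (pseudo)vertices, so it is a lattice polytope of dimension $d-1$. Its integer dilations obey $s\cdot B_H^{d-1}(1)=B_H^{d-1}(s)$, because $y\in B_H^{d-1}(s)\iff y_1=0,\ \hilbertnorm{y}\le s\iff y/s\in B_H^{d-1}(1)$. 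Proposition~\ref{prop:HilbertBallIntPoints} then gives $|s\cdot B_H^{d-1}(1)\cap\Z^{d-1}|=(s+1)^d-s^d$, and expanding yields $(s+1)^d-s^d=d\,s^{d-1}+(\text{lower order})$, a polynomial in $s$ of degree $d-1$ with leading coefficient $d$. By Theorem~\ref{thm:EhrhartPoly} this leading coefficient is exactly the volume, so $\Vol^{d-1}(B_H^{d-1}(1))=d$.

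Finally I would pass to arbitrary $\delta$ by scaling. Since $B_H^{d-1}(\delta)=\delta\cdot B_H^{d-1}(1)$ and $(d-1)$-dimensional volume is multiplied by $\delta^{d-1}$ under this dilation, $\Vol^{d-1}(\HBd{d-1})=\delta^{d-1}\,\Vol^{d-1}(B_H^{d-1}(1))=d\,\delta^{d-1}$. The hypothesis $\delta\in\Q$ plays no essential role (the identity holds for every $\delta>0$); it only keeps us within the rational data used elsewhere.

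The only genuine points to check are that $B_H^{d-1}(1)$ is indeed a lattice polytope of dimension exactly $d-1$, so that Ehrhart's theorem applies and returns a degree-$(d-1)$ polynomial whose \emph{leading} coefficient is the $(d-1)$-volume, and that the dilation identity holds verbatim; both are routine. I expect the mildest friction to be confirming integrality of the vertices; this is precisely what makes the Ehrhart route cleaner than the alternative self-contained argument, in which one projects $[0,\delta]^d$ onto $B_H^{d-1}(\delta)$ by $\pi(x)=x-x_1e$ (fibre over $y$ a segment of length $\delta-\hilbertnorm{y}$ along $e$) and integrates the fibre length against $\Vol^d([0,\delta]^d)=\delta^d$: there the constant Jacobian of the oblique projection onto $H_1$ is the main bookkeeping nuisance, which the Ehrhart computation sidesteps entirely.
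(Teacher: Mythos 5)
Your proof is correct, and it rests on the same two ingredients as the paper's own argument: the lattice-point count of Proposition~\ref{prop:HilbertBallIntPoints} and Ehrhart's theorem (Theorem~\ref{thm:EhrhartPoly}). You are also right about the exponent: $d\delta^{d+1}$ in the statement is a typo, and the paper's proof, which equates $\Vol^{d-1}(\HBd{d-1})s^{d-1}$ with $d(s\delta)^{d-1}$, yields $d\delta^{d-1}$, matching your hexagon check for $d=3$. The one structural difference is \emph{where} Ehrhart's theorem is invoked. The paper applies it directly to $\HBd{d-1}$ for rational $\delta$, restricting to $s$ with $s\delta\in\N$; strictly speaking this applies a statement about lattice polytopes to a polytope that is only rational when $\delta\notin\Z$ (its vertices have coordinates in $\{0,\pm\delta\}$), so one should either pass to the quasi-polynomial version of Ehrhart's theorem or rescale first. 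Your variant --- apply Ehrhart only to $B_H^{d-1}(1)$, an honest full-dimensional lattice polytope in $\R^{d-1}$, and then use $\HBd{d-1}=\delta\cdot B_H^{d-1}(1)$ together with homogeneity of Lebesgue measure under dilation --- avoids this subtlety entirely, and as you note it proves the formula for every real $\delta>0$, making the hypothesis $\delta\in\Q$ superfluous. The only detail left to pin down in your route, integrality of the vertices of $B_H^{d-1}(1)$, follows either from total unimodularity of the defining system $\{-1\leq y_i\leq 1,\ -1\leq y_i-y_j\leq 1\}$ (the same argument as in Proposition~\ref{prop.intpseudovertices}, since $B_H^{d-1}(1)=\tconv(H)$ with $H$ an integer matrix) or by direct inspection, the vertex coordinates all lying in $\{0,\pm 1\}$. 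In short: same method, but your normalize-then-scale bookkeeping is slightly cleaner and slightly more general than the paper's direct equating of leading coefficients at rational $\delta$.
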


\proof  Let $s\in\N$ satisfy $s\delta\in\N$.  From Theorem \ref{thm:EhrhartPoly}, $$|s\HBd{d-1}\cap\Z^{d-1}|=\Vol^{d-1}(\HBd{d-1})s^{d-1}+\sum_{i=0}^{d-2}c_is^i$$ and further, by Proposition \ref{prop:HilbertBallIntPoints}, $$|s\HBd{d-1}\cap\Z^{d-1}|=(\lfloor s\delta\rfloor+1)^d-\lfloor s\delta\rfloor^d=d(s\delta)^{d-1}+\sum_{i=2}^{d}{{d}\choose{i}}(s\delta)^{d-i}.$$

By equating the leading coefficients we conclude $\Vol^{d-1}(\HBd{d-1})s^{d-1}=d(s\delta)^{d-1}.$
\endproof

\subsection{Inscribed Hilbert balls in tropical polytropes}
\label{sec-innerradius}
For a tropical polytope $P$ we define the \emph{inner radius of $P$} as $$\radius(P):=\max\{\delta:  B_H(u, \delta)\subseteq P\}.$$  
We consider here tropical polytropes which, by \cite{TropConv}, can be considered to be cells in a decomposition described by a tropical polytope $P=\tconv(A)$, and described by \eqref{eqn:CellInequalities}.

To each cell we associate a matrix as follows.  Let $B_S\in\Z^{m\times m}$ have entries $$b_{kj}:=\begin{cases}\min_{i\in S_j}(a_{ki}-a_{ji})&\text{if } i\neq j,\\ \infty&\text{otherwise.}\end{cases}$$  Then $X_S=\{x\in\TP^{m-1}: x_i-x_j\leq (B_S)_{ij}, 1\leq i,j\leq m\}.$

Note that, throughout, matrices denoted with the letter $A$ will describe vertices of tropical polytopes $P=\tconv(A)$, whereas matrices using the letter $B$ refer to an inequality description of the form described above.

We first remind the reader of a few necessary concepts in min-plus spectral theory.  Given a matrix $A\in\R^{m\times m}$ its \emph{mimimum cycle mean}, denoted $\rho_{\min}(A)$, is $$\min_{\text{Cycles }\sigma}\frac{w(\sigma, A)}{l(\sigma)}$$ where $w(\sigma, A)$ denotes the weight of the cycle and $l(\sigma)$ is its length.  Any cycle which attains this minimum is called a \emph{critical cycle} and the critical graph is the digraph containing only nodes and edges from critical cycles.  For cells $X_S$, we may also write $\rho_{\min}(X_S)$ to mean $\rho_{\min}(B_S)$.

Sergeev characterised in~\cite{DefiniteClosures} the inner radius of polytropes.
\begin{thm}[{\cite{DefiniteClosures}}] \label{th-sergeev}
For a polytrope $X_S$, $\radius(X_S)=\rho_{\min}(B_S)$.
\end{thm}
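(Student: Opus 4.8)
The plan is to translate the inner radius into a feasibility question for a system of difference constraints and then read off the answer from min-plus spectral theory. Recall that $X_S=\{x\in\TP^{m-1}:x_i-x_j\le (B_S)_{ij},\ 1\le i,j\le m\}$, and that the Hilbert ball of radius $\delta$ centred at $u$ is, after unfolding the definition of $B_H(u,\delta)$, the set $\{y:\ y_i-y_j\le \delta+u_i-u_j,\ 1\le i,j\le m\}$. The first step is to characterise containment: I claim that $B_H(u,\delta)\subseteq X_S$ if and only if $\delta+u_i-u_j\le (B_S)_{ij}$ for all $i\neq j$. The observation making this clean is that the ball is defined by a \emph{complete} system of difference constraints, so those constraints are already shortest-path tight; concretely $\max_{y\in B_H(u,\delta)}(y_i-y_j)=\delta+u_i-u_j$, and this value is attained (take $y_i-u_i=\delta$ and $y_k-u_k=0$ for $k\ne i$). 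Hence containment reduces to checking each inequality $y_i-y_j\le (B_S)_{ij}$ of $X_S$ against the maximal value of $y_i-y_j$ over the ball, which is exactly the stated entrywise condition. In particular no shortest-path closure of $B_S$ is needed, because it is the ball, not $X_S$, whose constraints are tight.

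Given this, maximising over the centre $u$ and the radius $\delta$ yields
\[
\radius(X_S)=\sup\{\delta\ge 0:\ \exists u\in\R^m \text{ with } u_i-u_j\le (B_S)_{ij}-\delta \ \forall\, i\ne j\}.
\]
The second step is to recognise the right-hand side as the feasibility threshold of a difference-constraint system whose weights are $(B_S)_{ij}-\delta$. Summing the constraints around any cycle $i_1\to i_2\to\cdots\to i_p\to i_1$ makes the $u$-terms telescope to $0$, leaving $0\le w(\sigma,B_S)-p\,\delta$; thus feasibility forces $\delta\le w(\sigma,B_S)/l(\sigma)$ for every cycle $\sigma$, and therefore $\delta\le\rho_{\min}(B_S)$. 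Conversely, for $\delta\le\rho_{\min}(B_S)$ the weighted digraph associated with $B_S-\delta$ has no negative cycle, so the system admits a solution $u$ (shortest-path potentials, e.g.\ via Bellman--Ford). This shows both that the supremum is attained and that it equals $\rho_{\min}(B_S)$.

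I would finish by pinning down attainment at the critical value $\delta=\rho_{\min}(B_S)$ using the min-plus spectral theory recalled above: at this value a feasible centre $u$ is precisely a tropical (min-plus) eigenvector of $B_S$, obtained from a column of the Kleene star of the normalised matrix $B_S-\rho_{\min}(B_S)$, so that the inscribed ball touches the critical cycles of $B_S$. One should also note that $\rho_{\min}(B_S)\ge 0$ since $X_S$ is a nonempty cell, so the supremum ranges over a nonempty set.

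The step I expect to be the main obstacle is the containment characterisation of the first paragraph: one must verify that the Hilbert ball's difference graph is shortest-path closed so that containment is genuinely \emph{entrywise}, and one must keep the orientation of cycles consistent between the constraint system and the weight matrix $B_S$, so that the bound produced is the minimum cycle mean of $B_S$ itself and not of its transpose. Once the orientation is fixed, the cycle-mean computation and the appeal to spectral theory for attainment are routine.
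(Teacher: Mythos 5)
Your proof is correct, and it is in fact more complete than what the paper itself provides: the paper cites Sergeev for the equality and only proves the lower-bound direction (Proposition \ref{prop:InscribedHilbertBall}), namely that $B_H(u,\delta)\subseteq X_S$ for $\delta\leq\rho_{\min}(B_S)$ when $u$ is a subeigenvector, $B_S\odot u\geq \rho_{\min}(B_S)\odot u$. Note that your feasibility condition $u_i-u_j\leq (B_S)_{ij}-\delta$ for all $i\neq j$ is exactly that subeigenvector condition written additively, so your ``conversely'' step (no negative cycle implies existence of shortest-path potentials) reproduces the paper's half of the argument, with Bellman--Ford potentials playing the role of the min-plus spectral theory the paper invokes. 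What you add, and what the paper delegates to the citation, is the matching upper bound: the tightness observation that $\max_{y\in B_H(u,\delta)}(y_i-y_j)=\delta+u_i-u_j$ is attained (so containment is genuinely entrywise), followed by telescoping the constraints around a cycle to force $\delta\leq w(\sigma,B_S)/l(\sigma)$. Both ingredients are sound. One remark: the orientation worry you flag at the end is moot, since transposition gives a weight- and length-preserving bijection between cycles of $B_S$ and of its transpose, so $\rho_{\min}(B_S)=\rho_{\min}(B_S^T)$ and no inconsistency can arise; also, your parenthetical that a feasible centre at $\delta=\rho_{\min}(B_S)$ is ``precisely'' an eigenvector is slightly stronger than needed or true in general (subeigenvectors suffice and need not be eigenvectors), but nothing in the proof depends on it.
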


Since it will be needed in the sequel, we provide the following proposition,
from which Theorem~\ref{th-sergeev} can be recovered.
\begin{prop}\label{prop:InscribedHilbertBall} Let $X_S\subseteq\TP^{m-1}$ be full dimensional with  $X_S=\{x\in\TP^{m-1}: x_i-x_j\leq (B_S)_{ij}\}.$  Let $u$ satisfy $B_S\odot u\geq \rho_{\min}(B_S)\odot u$.  Then, for all $0<\delta\leq\rho_{\min}(X_S)$, $$B_H(u, \delta)\subseteq X_S$$
\end{prop}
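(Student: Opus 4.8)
The plan is to translate the tropical (min-plus) hypothesis on $u$ into ordinary scalar inequalities, after which membership of every point of the Hilbert ball in $X_S$ becomes a one-line verification. Throughout write $\lambda:=\rho_{\min}(B_S)=\radius(X_S)$.

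First I would expand the hypothesis $B_S\odot u\geq \lambda\odot u$ coordinatewise. Its $i$-th component reads $\min_k\big((B_S)_{ik}+u_k\big)\geq \lambda+u_i$, and since a minimum exceeds a given value precisely when each of its terms does, this is equivalent to the family of scalar inequalities
\[
(B_S)_{ij}\geq \lambda+u_i-u_j\qquad\text{for all }i,j.
\]
Second, I would describe a generic point of $B_H(u,\delta)$ as $u+x$ where $x$ satisfies $x_i-x_j\leq\delta$ for all $i,j$ (equivalently $\hilbertnorm{x}\leq\delta$); this is exactly the defining condition of the Hilbert ball. Since $X_S=\{x\in\TP^{m-1}:x_i-x_j\leq (B_S)_{ij}\}$ is cut out by inequalities on coordinate differences, which are invariant under adding a constant vector, it suffices to check these for $u+x$. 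The verification is the chain
\[
(u_i+x_i)-(u_j+x_j)=(u_i-u_j)+(x_i-x_j)\leq (u_i-u_j)+\delta\leq (u_i-u_j)+\lambda\leq (B_S)_{ij},
\]
using $x_i-x_j\leq\delta$, then $\delta\leq\lambda$, and finally the scalar inequality from the first step. Hence $u+x\in X_S$, proving $B_H(u,\delta)\subseteq X_S$.

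I do not expect a genuine obstacle here: the entire content is the correct unpacking of the compact min-plus notation and the recognition that Hilbert-ball membership is precisely the bound $x_i-x_j\leq\delta$. The only point worth flagging is that the hypothesis is non-vacuous — a min-plus eigenvector of $B_S$ (which exists since $\lambda$ is its eigenvalue) satisfies $B_S\odot u=\lambda\odot u$, hence the required inequality — so the statement genuinely produces an inscribed ball. Combined with a matching argument that no strictly larger Hilbert ball fits (the case $\delta>\lambda$), this recovers Sergeev's identity $\radius(X_S)=\rho_{\min}(B_S)$ of Theorem~\ref{th-sergeev}.
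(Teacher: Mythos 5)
Your proof is correct and follows essentially the same route as the paper's: unpack the min-plus hypothesis $B_S\odot u\geq \rho_{\min}(B_S)\odot u$ into the scalar inequalities $(B_S)_{ij}\geq \rho_{\min}(B_S)+u_i-u_j$, then verify membership of $u+x$ in $X_S$ by the same one-line chain of inequalities. The extra remarks on non-vacuousness and the link to Theorem~\ref{th-sergeev} are sound but not needed for the statement itself.
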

\proof  
Fix some $\delta\leq \rho_{\min}(X_S)$.  Write $B=B_S$.  We show $B^{m-1}_H(u, \delta)\subseteq \{x: x_i-x_j\leq B_{ij}\}(=X_S).$
Let $u+h$ be an arbitrary point in $B^{m-1}_H(u, \delta)=\{u+h: |h_i-h_j|\leq \delta\}.$
By our assumptions on $u$, 
$(\forall i\in M)(\forall j\in M)\ B_{ij}-\delta\geq u_i-u_j$ and hence $$(\forall i\in M)(\forall j\in M) \ (u_i+h_i)-(u_j+h_j)= (u_i-u_j)+(h_i-h_j)\leq (B_{ij}-\delta)+\delta=b_{ij}.$$
We conclude that $u+h\in X_S$.
\endproof

\subsection{Consequences of these results on polytropes}

Here we give lower bounds for the volume of, and number of integer points in, polytropes by considering the largest inscribed Hilbert ball.  For full dimensional polytropes the result follows immediately from Propositions \ref{prop:HilbertBallIntPoints}, \ref{prop:InscribedHilbertBall}, Theorem \ref{th-sergeev} and Corollary \ref{cor:HilbertBallVolume}:

\begin{cor}\label{prop:VolandIntBoundsFullDim} Let $X_S\subseteq\TP^{m-1}$ be a bounded polytrope of dimension $m-1$.  Then 

(1)  $\Vol^{m-1}(X_S)\geq k\radius(X_S)^{k-1}$, 

(2)  $|X_S\cap\Z^{m-1}|\geq (\lfloor\radius(X_S)\rfloor+1)^m-\lfloor\radius(X_S)\rfloor^m$.
\end{cor}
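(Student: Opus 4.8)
The plan is to inscribe a Hilbert ball of the largest possible radius inside $X_S$ and then compare both the volume and the lattice-point count of $X_S$ against those of this ball, for which we already have exact formulas. Since $X_S$ is full dimensional in $\TP^{m-1}$ its dimension is $m-1$, so by Theorem~\ref{thm:trank.dim} its tropical rank equals $k=m$; hence in the stated bounds we may read $k=m$ throughout. Write $X_S=\{x:x_i-x_j\leq (B_S)_{ij}\}$ with $B_S\in\Z^{m\times m}$ and set $\delta:=\radius(X_S)$. By Theorem~\ref{th-sergeev}, $\delta=\rho_{\min}(B_S)$, which is rational because $B_S$ is integral. Min-plus spectral theory supplies a finite vector $u$ with $B_S\odot u=\delta\odot u$ — for instance a column of the Kleene star of the matrix obtained by subtracting $\delta$ from every entry of $B_S$, whose critical cycles then have mean $0$ — and any such eigenvector is in particular a subeigenvector, $B_S\odot u\geq\delta\odot u$. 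Proposition~\ref{prop:InscribedHilbertBall} then gives $\HBall{m-1}{u}{\delta}\subseteq X_S$.

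For part (1), volume is monotone under inclusion, so $\Vol^{m-1}(X_S)\geq\Vol^{m-1}(\HBall{m-1}{u}{\delta})$. Applying Corollary~\ref{cor:HilbertBallVolume} with ambient parameter $d=m$ and the rational radius $\delta$, the right-hand side equals $m\,\delta^{m-1}=k\,\radius(X_S)^{k-1}$, which is exactly the claimed bound. This part is routine once the inscribed ball is in hand.

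For part (2), I would first shrink the radius to $\lfloor\delta\rfloor$: since $\delta\geq\lfloor\delta\rfloor$, the same $u$ satisfies $u_i-u_j\leq (B_S)_{ij}-\delta\leq (B_S)_{ij}-\lfloor\delta\rfloor$, so $\HBall{m-1}{u}{\lfloor\delta\rfloor}\subseteq X_S$. The one point requiring care — and the main (if mild) obstacle — is that Proposition~\ref{prop:HilbertBallIntPoints} counts lattice points only for an \emph{integer}-centered ball, whereas $u$ need not be integral. I would therefore replace $u$ by an integer center $u^*$ solving the difference-constraint system $u^*_i-u^*_j\leq (B_S)_{ij}-\lfloor\delta\rfloor$, whose right-hand sides are integers and which is feasible over $\R$ (witnessed by $u$). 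Such systems are feasible over $\Z$: taking shortest-path potentials in the associated constraint digraph with these integer edge weights (equivalently, invoking total unimodularity of the constraint matrix) produces an integer solution, which we normalise to $u^*_1=0$ so that $u^*\in\Z^{m}$ and $\HBall{m-1}{u^*}{\lfloor\delta\rfloor}\subseteq X_S$. Monotonicity of the counting measure and Proposition~\ref{prop:HilbertBallIntPoints} (with $d=m$ and radius $\lfloor\delta\rfloor$) then give $|X_S\cap\Z^{m-1}|\geq|\HBall{m-1}{u^*}{\lfloor\delta\rfloor}\cap\Z^{m-1}|=(\lfloor\delta\rfloor+1)^m-\lfloor\delta\rfloor^m$, establishing (2).
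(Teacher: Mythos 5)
Your proof is correct and takes essentially the same route as the paper, which derives both bounds directly from Theorem~\ref{th-sergeev}, Proposition~\ref{prop:InscribedHilbertBall}, Proposition~\ref{prop:HilbertBallIntPoints} and Corollary~\ref{cor:HilbertBallVolume}, i.e.\ by inscribing a maximal Hilbert ball and invoking the exact formulas for its volume and lattice-point count. Your extra step for part (2) --- replacing the possibly non-integral center $u$ by an integral center $u^*$ via integer feasibility of the difference-constraint system $u^*_i-u^*_j\leq (B_S)_{ij}-\lfloor\radius(X_S)\rfloor$ --- is not a different approach but a worthwhile filling-in of a detail the paper treats as immediate, since Proposition~\ref{prop:HilbertBallIntPoints} is stated only for integer-centered balls and the subeigenvector $u$ need not be integral.
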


The following result tells us that, in the full dimensional case, $\radius(X_S)$ is always positive.

\begin{prop}\label{prop:MCMofFullDimensionalCell} Suppose $A\in\R^{m\times n}$, $P=\tconv(A)$, $\troprank(A)=m$ and let $X_S$ be a full dimensional, bounded cell.  Then $\radius(X_S)>\frac{1}{m}$, and, in fact, $\radius(X_S)={z}/{t}$ where $z\in\N$ and $t\in\{1,\dots,m\}$.
\end{prop}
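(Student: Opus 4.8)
The plan is to reduce the whole statement to an integrality property of the minimum cycle mean of $B_S$, exploiting Sergeev's characterisation of the inner radius.

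Since the asserted form $\radius(X_S)=z/t$ with $z\in\N$ and $t\in\{1,\dots,m\}$ is only meaningful when the data are integral, I read the hypothesis as $A\in\Z^{m\times n}$ (the case in which the Proposition is applied). The off-diagonal entries $b_{kj}=\min_{i\in S_j}(a_{ki}-a_{ji})$ of $B_S$ are then differences of integers, hence integers, while the diagonal entries are $\infty$. By Theorem~\ref{th-sergeev}, $\radius(X_S)=\rho_{\min}(B_S)$, so it suffices to analyse the minimum cycle mean $\rho_{\min}(B_S)=\min_\sigma w(\sigma,B_S)/l(\sigma)$ of this integer matrix.

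The key structural step is that this minimum is attained on a \emph{simple} cycle. Any closed walk in the digraph on $\{1,\dots,m\}$ decomposes into simple cycles, and its mean is a weighted average (with weights proportional to the lengths) of the means of those simple cycles; hence the global minimum is realised by some simple cycle $\sigma^{*}$. Because the diagonal of $B_S$ is $\infty$, $\sigma^{*}$ contains no loop, so its length $t:=l(\sigma^{*})$ satisfies $2\le t\le m$, in particular $t\in\{1,\dots,m\}$. Its weight $z:=w(\sigma^{*},B_S)$ is a sum of $t$ integer entries, hence an integer, so $\radius(X_S)=z/t$ with $t\in\{1,\dots,m\}$.

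Finally I would fix the sign of $z$ and extract the bound. Full dimensionality of $X_S$ gives a nonempty interior, so for an interior point $u$ and small $\delta>0$ the Hilbert ball $B_H(u,\delta)$ fits inside $X_S$; thus $\radius(X_S)>0$ and, by Theorem~\ref{th-sergeev}, $\rho_{\min}(B_S)>0$. With $t>0$ this forces $z\ge 1$, so $z\in\N$ and
\[
\radius(X_S)=\frac{z}{t}\ge\frac{1}{t}\ge\frac{1}{m}.
\]
The step I expect to be the real obstacle is the \emph{strict} inequality. The chain above is tight exactly when $z=1$ and the critical cycle is Hamiltonian ($t=m$), and this extremal configuration does occur: for $m=2$ the segment $\tconv\{(0,0)^{T},(0,1)^{T}\}$ is a full-dimensional cell with $\radius=\tfrac12=\tfrac1m$. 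So integrality alone yields the sharp bound $\radius(X_S)\ge 1/m$, and promoting it to a strict inequality (where it holds) must rely on finer properties of $B_S$ — for instance, a structural reason why a full-dimensional cell of $\tconv(A)$ cannot carry a weight-one Hamiltonian critical cycle — rather than on the cycle-mean estimate itself.
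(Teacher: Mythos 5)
Your proof is correct and, in substance, follows the same route as the paper: reduce via Sergeev's theorem (Theorem \ref{th-sergeev}) to the minimum cycle mean of $B_S$, use integrality of $B_S$ together with the fact that the minimum cycle mean is attained on an elementary cycle of length $t\leq m$ with integer weight $z$, and then establish positivity of $z$. (Your reading of the hypothesis as $A\in\Z^{m\times n}$ is the right one; the paper's own proof also silently assumes $B_S\in\Z^{m\times m}$.) The one step where you genuinely diverge is positivity: the paper argues directly on the inequality description of $X_S$, showing that a cycle of negative weight would make $X_S$ empty, while a cycle of weight zero would force the equation $x_{i_r}-x_{i_1}=b_{i_ri_1}$ to hold on all of $X_S$, contradicting full dimensionality; you instead note that a full-dimensional cell contains a Hilbert ball around any interior point, so $\radius(X_S)>0$, and then apply Theorem \ref{th-sergeev} in the reverse direction to conclude $\rho_{\min}(B_S)>0$. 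Both arguments are valid; yours is shorter but leans on Sergeev's characterisation twice, whereas the paper's is self-contained at that point.

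Your closing remark is not a defect of your proof but a correct diagnosis of an error in the statement itself. The paper's proof establishes exactly what yours does --- $z=w(\sigma,B_S)\geq 1$ and $t=l(\sigma)\leq m$, hence $\radius(X_S)\geq\frac{1}{m}$ --- and the strict inequality $\radius(X_S)>\frac{1}{m}$ does not follow from it. Your counterexample is valid: for $A=\begin{pmatrix}0&0\\0&1\end{pmatrix}$ the full-dimensional cell is $\{x:0\leq x_2-x_1\leq 1\}$, with $B_S=\begin{pmatrix}\infty&0\\1&\infty\end{pmatrix}$, so $\radius(X_S)=\rho_{\min}(B_S)=\frac{1}{2}=\frac{1}{m}$, and equality occurs. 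The misstatement is harmless for the rest of the paper, which only ever invokes the non-strict bound (e.g., ``$\radius(X_T)\geq\frac{1}{k}$'' in Sections \ref{sec:Hardness} and \ref{sec:HardnessCountingProof}), but the proposition should read $\radius(X_S)\geq\frac{1}{m}$.
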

\proof By the assumption on the rank of $A$ we know that $X_S$ has dimension $m-1$.  By Theorem \ref{th-sergeev}, $\radius(X_S)=\rho_{\min}(B_S)$ so we prove the results for the minimum cycle mean.  We write $B=B_S$.

Now $\rho_{\min}(B)=\bigoplus_{\sigma}w(\sigma, B)\odot l(\sigma)^{-1}$ where $l(\sigma)$ is the length of the cycle $\sigma$.  Let $\sigma$ be a cycle attaining the minimum cycle mean.  Since $B\in\Z^{m\times m}$, clearly $w(\sigma, B)\in\Z$.  Further, it is known that it is sufficient to consider elementary cycles when calculating the minimum cycle mean.  Hence $l(\sigma)\in\{1,\dots,m\}$.

It remains to show that $w(\sigma, B)>0$.  Assume first, for a contradiction, that $\sigma=(i_1,\dots, i_r)$ has weight $b_{i_1i_2}+\dots+b_{i_{r-1}i_r}+b_{i_ri_1}< 0$.  Then, by definition of $B$, \begin{align*}&\sum_{k=1}^{r-1} (x_{i_k}-x_{i_{k+1}})\leq\sum_{k=1}^{r-1} b_{i_ki_{k+1}} \\
\therefore \ & x_{i_1}-x_{i_r}\leq b_{i_1i_2}+\dots+b_{i_{r-1}i_r}< -b_{i_ri_1}
\end{align*}

We conclude, $b_{i_ri_1}< x_{i_r}-x_{i_1}\leq b_{i_ri_1}$ which is impossible.  

If, instead, $w(\sigma, B)=0$ then an almost identical argument shows $b_{i_ri_1}\leq x_{i_r}-x_{i_1}\leq b_{i_ri_1}$ which implies $x_{i_r}=x_{i_1}$ which means that $X_S$ cannot have full dimension, a contradiction.
\endproof

Observe, when $\troprank(A)=k<m$, it is possible to have $\radius(X_S)=0$ for cells $X_S$ in $P$.  So the above argument is only valid for full dimensional polytopes. For polytropes $X_S\subseteq\TP^{m-1}$ of dimension $k-1$, $k<m$, we do not have a formula for the radius of the largest inscribed Hilbert ball.  Instead we identify $X_S$ with a full dimensional cell $X_T\subseteq\TP^{k-1}$ to which it is tropically isomorphic.

\begin{prop}\label{prop:VolandIntPointsBoundXSbyXT}  Let $X_S\subseteq\TP^{m-1}$ be a bounded polytrope of dimension $k-1$, where $k<m$.  Suppose $X_S$ is described by $X_S=\{x\in\TP^{m-1}: x_i-x_j\leq (B_S)_{ij}\}$ and let $D\subset\{1,\dots,m\}$ be any set containing exactly one index from each strongly connected component of the critical graph of $B_S$, so $|D|=k$.  Further, let \begin{equation}\label{eqn:XT} X_T:=\{x\in\TP^{k-1}:  x_i-x_j\leq (B_S)_{ij} \forall i,j\in D\}\subseteq\TP^{k-1}.\end{equation} Then

(1)  $\Vol^{k-1}(X_S)\geq \Vol^{k-1}(X_T)$ and

(2) For any $s\in\N$,
$|sX_S\cap\Z^{m-1}|=|sX_T\cap\Z^{m-1}|$
\end{prop}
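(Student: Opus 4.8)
The plan is to realise the coordinate projection that forgets the indices outside $D$ as a tropical isomorphism $\pi_D\colon X_S\to X_T$, and to read off both assertions from its (affine, integral) structure.

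First I would reduce to a normalised situation. Since $X_S$ has dimension $k-1<m-1$, it contains no full-dimensional Hilbert ball, so $\radius(X_S)=0$; by Theorem~\ref{th-sergeev} this means $\rho_{\min}(B_S)=0$. I may therefore replace $B_S$ by its Kleene star (min-plus definite closure), whose $(i,j)$ entry is the shortest-path weight from $i$ to $j$: it describes the same polytrope $X_S$, its entries are the canonical tight constraints, and they are integers because $B_S\in\Z^{m\times m}$ and $\rho_{\min}(B_S)=0$. Relative to this description the critical graph decomposes into $k$ strongly connected components; for indices $i,j$ lying in a common component the weights satisfy $(B_S)_{ij}+(B_S)_{ji}=0$, so on $X_S$ the difference $x_i-x_j$ is frozen at the integer value $(B_S)_{ij}$. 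After relabelling I assume $1\in D$, and for each index $i$ I write $d(i)\in D$ for the representative of its component.

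Next I would construct the bijection. Define $\pi_D$ by discarding the coordinates $x_i$, $i\notin D$, and define its candidate inverse $\phi$ by the forced values $x_i=x_{d(i)}+(B_S)_{i,d(i)}$. Injectivity of $\pi_D$ on $X_S$ is immediate from the freezing of intra-component differences. The crux --- and the step I expect to be the main obstacle --- is surjectivity, i.e. that $\phi$ maps $X_T$ into $X_S$, equivalently $\pi_D(X_S)=X_T$. Here I would use the routing identity
\[
(B_S)_{ab}=(B_S)_{a,d(a)}+(B_S)_{d(a),d(b)}+(B_S)_{d(b),b},
\]
valid because detouring a shortest path through the representative of an endpoint's critical component is free (the two-way shortest-path weights within a component cancel). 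Substituting $\phi$ into the generic constraint $x_a-x_b\le (B_S)_{ab}$ and cancelling the frozen terms, every constraint of $X_S$ collapses to a constraint $x_{d(a)}-x_{d(b)}\le (B_S)_{d(a),d(b)}$ among indices of $D$, i.e. precisely to the defining inequalities of $X_T$. It is exactly here that working with the definite closure is essential: if $(B_S)_{ij}$ were not already the shortest-path value, the raw $D$-constraints defining $X_T$ could be strictly looser than $\pi_D(X_S)$, and the identification --- indeed assertion~(1) --- would fail.

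Finally I would extract the two conclusions. The inverse $\phi$ is the restriction of an affine map whose linear part, in the coordinates $(x_d)_{d\in D\setminus\{1\}}$, is the identity on the $D$-rows and merely copies the representative's coordinate on each remaining row; writing it as a block column $\binom{I}{M}$ with a $0/1$ matrix $M$, its Gram determinant is $\det(I+M^{\top}M)\ge 1$. Hence $\Vol^{k-1}(X_S)=\sqrt{\det(I+M^{\top}M)}\,\Vol^{k-1}(X_T)\ge\Vol^{k-1}(X_T)$, which is~(1). For~(2), dilation by $s\in\N$ turns the frozen differences into $s\,(B_S)_{i,d(i)}\in\Z$, so $\pi_D$ restricts to a bijection between $sX_S\cap\Z^{m-1}$ and $sX_T\cap\Z^{k-1}$: forward by discarding coordinates, backward by the integral fill-in $\phi$. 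Counting the two sides gives~(2).
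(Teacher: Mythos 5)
Your proof is correct, and structurally it is the same proof as the paper's: a bijection $X_T\to X_S$ fixing the $D$-coordinates, the bound $\Vol^{k-1}(X_S)\geq\Vol^{k-1}(X_T)$ from a Gram determinant that is at least $1$ because the differential contains an identity block, and integrality of the bijection in both directions after dilation. The difference lies in how the bijection is obtained: the paper cites Theorems 3.100--3.101 of \cite{BCOQ92} to assert that $\psi(x)=\bigoplus_{d\in D}B^*_{\cdot,d}\odot x_d$ is a tropical isomorphism from $X_T$ onto $X_S$ fixing the coordinates in $D$, whereas you construct the inverse of the projection $\pi_D$ by hand from the Kleene star. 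These are in fact the same map: your routing identity gives $B^*_{id}=B^*_{i,d(i)}+B^*_{d(i),d}$ for $d\in D$, so on $X_T$ the minimum defining $\psi_i$ is always attained at $d(i)$, i.e.\ $\psi=\phi$ there. What your version buys is (a) self-containedness, and (b) the observation that the map is globally \emph{affine}, not merely min-plus linear; this is what actually licenses the paper's equality $\Vol^{k-1}(X_S)=C\,\Vol^{k-1}(X_T)$ with a single constant $C$ (your $\sqrt{\det(I+M^{\top}M)}$ equals the paper's $\sqrt{\sum_J\det((D\psi)_J)^2}$ by Cauchy--Binet).

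Your closing caveat about the definite closure is not excess caution: it repairs a real imprecision in the statement, which the paper's proof inherits invisibly through the citation. The matrices $B_S$ of Section~\ref{sec-innerradius} need not satisfy the triangle inequality, and with the raw entries in \eqref{eqn:XT} the proposition is false for some admissible $D$. Concretely, take $A=\left(\begin{smallmatrix}0&0\\0&1\\0&2\end{smallmatrix}\right)$; its maximal cell, of type $S=(\{2\},\{1\},\{1\})$, is the segment $X_S=\{(0,t,t):0\leq t\leq1\}$, so $k=2<3=m$. Here $(B_S)_{13}=0$ and $(B_S)_{31}=2$, while the path $3\to2\to1$ has weight $(B_S)_{32}+(B_S)_{21}=0+1=1<2$; the only critical cycle is $2\leftrightarrow3$. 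Counting the non-critical node $1$ as a singleton strongly connected component (the only reading under which $|D|=k$ can hold --- note that with the paper's literal definition of the critical graph the hypothesis $|D|=k$ is already unsatisfiable here), the choice $D=\{1,3\}$ is admissible, and then the raw $X_T$ is a segment of length $2$ containing $3$ integer points, while $\Vol^{1}(X_S)=\sqrt{2}<2$ and $|X_S\cap\Z^{2}|=2$: both conclusions (1) and (2) fail. They become true again once $(B_S)_{31}$ is replaced by the shortest-path weight $1$, which is exactly your normalization. So the proposition must be read with $B_S$ replaced by its Kleene star (equivalently, with \eqref{eqn:XT} built from shortest-path weights); your proof makes this hypothesis explicit and shows precisely where it is used, which the paper's does not.
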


\proof 

It follows from Theorems 3.100 and 3.101 in \cite{BCOQ92} that the map $$\psi: (x_{\alpha})_{\alpha\in D}\rightarrow \bigoplus_{d\in D} B^*_{,d}\odot x_d$$ is a tropical isomorphism from $X_T$ to $X_S$ satisfying $\psi_d(x)=x_d, d\in D$.  Hence, after reordering the coordinates, \begin{equation}\label{eqn:PsiMap}\psi(y)=(y, \phi(y)).\end{equation}

(1) By the formula of change of variables, the infinitesimal element of volume, under the action of $\psi$, is multiplied by $$C=\sqrt{\sum_{J}(\det(D\psi)_J)^2}$$ where $D\psi$ denotes the differential map of $\psi$, $(D\psi)_J$ is the maximal minor of $D\psi$ with indices from $J$, and the sum is taken over all maximal minors, see Section 8.72 of \cite{ShilovLA}.

By \eqref{eqn:PsiMap} $D\psi$ has the identity matrix as a maximal submatrix.  It follows that $C\geq 1$ and hence $\Vol^{k-1}(X_S)=C \Vol^{k-1}(X_T)\geq \Vol^{k-1}(X_T).$

(2)   Since $B$ is an integer matrix, it is clear that $\psi$ preserves integer vectors, and is unaffected by an integer scaling, hence $|sX_S\cap\Z^{m-1}|=|sX_T\cap\Z^{k-1}|.$
\endproof

An immediate consequence of Propositions \ref{prop:VolandIntBoundsFullDim} and \ref{prop:VolandIntPointsBoundXSbyXT} is the following.

\begin{cor}\label{cor:VolandIntPointsX_S(usingX_T)} Let $X_S\subseteq\TP^{m-1}$ be a bounded polytrope of dimension $k-1$, where $k<m$.  Suppose $X_S$ is described by $X_S=\{x\in\TP^{m-1}: x_i-x_j\leq (B_S)_{ij}\}.$  Further, let $X_T\subseteq\TP^{k-1}$  be defined by \eqref{eqn:XT}.  Then 

(1)  $\Vol^{k-1}(X_S)\geq  k\radius(X_T)^{k-1}$

(2) $(\forall s\in\N)\ |sX_S\cap\Z^{m-1}|\geq (\lfloor\radius(sX_T)\rfloor+1)^k-\lfloor \radius(sX_T)\rfloor^k.$
\end{cor}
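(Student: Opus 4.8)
The plan is to combine the two preceding results by applying the full-dimensional bounds of Corollary~\ref{prop:VolandIntBoundsFullDim} not to $X_S$ itself but to the lower-dimensional cell $X_T$ (respectively its dilate $sX_T$), which is tied to $X_S$ through Proposition~\ref{prop:VolandIntPointsBoundXSbyXT}. Beyond chaining inequalities, the only thing that genuinely needs checking is that $X_T$ and $sX_T$ satisfy the hypotheses of Corollary~\ref{prop:VolandIntBoundsFullDim}, namely that they are bounded polytropes of full dimension $k-1$ in $\TP^{k-1}$, so that the corollary applies with the ambient dimension $m$ replaced by $k$.

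For part (1), I would first invoke Proposition~\ref{prop:VolandIntPointsBoundXSbyXT}(1) to obtain $\Vol^{k-1}(X_S)\geq\Vol^{k-1}(X_T)$. Since $X_T$ is tropically isomorphic to $X_S$ via the map $\psi$ of \eqref{eqn:PsiMap} and $X_S$ has dimension $k-1$, the cell $X_T\subseteq\TP^{k-1}$ is bounded and of full dimension $k-1$. Corollary~\ref{prop:VolandIntBoundsFullDim}(1), applied in ambient dimension $k$, then gives $\Vol^{k-1}(X_T)\geq k\,\radius(X_T)^{k-1}$, and combining the two inequalities yields the claim.

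For part (2), fix $s\in\N$. Proposition~\ref{prop:VolandIntPointsBoundXSbyXT}(2) gives $|sX_S\cap\Z^{m-1}|=|sX_T\cap\Z^{k-1}|$. The dilate $sX_T$ is described by the inequalities $x_i-x_j\leq s(B_S)_{ij}$ for $i,j\in D$; because $B_S$ is an integer matrix and $s\in\N$, these data are again integral, so $sX_T$ is once more a bounded, full-dimensional polytrope in $\TP^{k-1}$. Applying Corollary~\ref{prop:VolandIntBoundsFullDim}(2) in ambient dimension $k$ to $sX_T$ produces $|sX_T\cap\Z^{k-1}|\geq(\lfloor\radius(sX_T)\rfloor+1)^k-\lfloor\radius(sX_T)\rfloor^k$, and chaining with the previous equality finishes the argument.

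The step needing the most care—though it is hardly an obstacle—is verifying that the tropical isomorphism $\psi$ preserves full dimensionality and boundedness, so that Corollary~\ref{prop:VolandIntBoundsFullDim} is legitimately applicable to $X_T$, and that dilation by an integer $s$ preserves the integer-polytrope structure required for the lattice-point count. Both facts follow immediately from the explicit inequality description in \eqref{eqn:XT} and from $\psi$ being a coordinatewise-affine tropical isomorphism; no new estimate is required, which is why the result is stated as an immediate consequence.
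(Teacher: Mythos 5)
Your proposal is correct and matches the paper's own reasoning: the paper states this corollary as an immediate consequence of Corollary~\ref{prop:VolandIntBoundsFullDim} and Proposition~\ref{prop:VolandIntPointsBoundXSbyXT}, which is exactly the chaining you carry out (apply the isomorphism result to pass from $X_S$ to $X_T$, then the full-dimensional bounds to $X_T$ and $sX_T$ in ambient dimension $k$). Your added verification that $X_T$ and $sX_T$ are bounded, full-dimensional polytropes in $\TP^{k-1}$ is precisely the routine check the paper leaves implicit.
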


\section{Bounding the Volume of the Hilbert Outer Parallel Body} \label{sec:ProofBounds} 

\subsection{The Hilbert outer parallel body of a tropical polytope}

The proof of our complexity result requires us to generate a full dimensional polytope from any matrix $A$.  Thus we introduce the Hilbert outer parallel body, the tropical analogue of the outer parallel body in classical geometry.

Recall that the \emph{Minkowski sum} of two sets of vectors $V_1$ and $V_2$ is $V_1+V_2:=\{v_1+v_2: v_1\in V_1, v_2\in V_2\}.$   Given a (classical) polytope $P$, the (closed) Euclidean unit ball $B^d$ and $\lambda\in\R$, the polytope $P+ \lambda B^d$ is called the \emph{outer parallel body} of $P$. Similarly, 
we make the following definition, replacing the Euclidean norm
by Hilbert's seminorm.
\begin{definition}[Hilbert outer parallel body]
Given a tropical polytope $P=\tconv(A)\subset \TP^{m-1}$, 
we define the \emph{Hilbert outer parallel body}
of parameter $\varepsilon$ to be the Minkowski sum
\[ P+B_H^{m-1}(\varepsilon) \enspace .
\]
\end{definition}
We note that the idea of considering the Minkowski sum of a Hilbert Ball and
of a tropical polytope was used in \cite[\S~5]{AK13} with a different purpose (studying the external representations of tropical polytopes).

We prove in this section Theorem \ref{thm:VolBounds}, that is, we establish the following upper and lower bounds on the volume of the Hilbert outer parallel body of a tropical polytope $P=\tconv(A)\subseteq\TP^{m-1}$ described by $A\in\Z^{m\times n}$ with $\troprank(A)=k$: $$\bigg(\frac{\varepsilon}{\sqrt{2}}\bigg)^{m-k}\kappa_{m-k} k\radius(X_T)^{k-1}\leq \Vol^{m-1}(P+B^{m-1}_H(\varepsilon))\leq 2^m3^{m+n-2}(2R\sqrt{m-1})^{k-1}\varepsilon^{m-k}.$$
The term $\varepsilon^{m-k}$ here is intuitive: $P$ is a polyhedral complex in $\R^{m-1}$ whose cells have dimension at most $k-1$, and so, the volume of the parallel body is expected to be of order $\varepsilon^{(m-1)-(k-1)}=\varepsilon^{m-k}$. 
The factors of $\varepsilon^{m-k}$ have closed form expressions allowing
relatively short proofs, they are not optimal. For the present application,
we only need to know that the logarithm of these factors are polynomially
bounded in the size of the input. 

We begin by showing that the Hilbert outer parallel body of a tropical polytope is also a tropical polytope.





\begin{prop}\label{prop:HOPB.Generators} Given $P=\tconv(A)$ and $\HB=\tconv(H)$ where $A\in\R^{m\times n}$ and $H\in\R^{m\times m}$ is described by \eqref{eqn:MatrixHforHilbertBall}, a (possibly non-minimal) family of generators of $P+\HB$ is $$\{A_j+H_k: j=1,\dots,n, k=1,\dots,m \}.$$
\end{prop}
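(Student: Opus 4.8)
The plan is to prove the set equality $P+\HB=\tconv(\{A_j+H_k\})$ by establishing the two inclusions separately. Throughout I would use the description, immediate from the defining closure property of tropical convexity, that a point lies in $\tconv(M)$ precisely when it is a tropical linear combination $\bigoplus_\ell \alpha_\ell\odot M_\ell$ of the columns of $M$ with coefficients $\alpha_\ell\in\R$; for $\HB=\tconv(H)$ I would also use the description \eqref{eqn:HilbertGeneratingMatrix} of the Hilbert ball as $\{x:\hilbertnorm{x}\le\varepsilon\}$. I work with representatives in $\R^m$; since the Hilbert seminorm and the Minkowski sum both descend to $\TP^{m-1}=\R^m/(1,\dots,1)\R$, nothing is lost.

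For the inclusion $P+\HB\subseteq\tconv(\{A_j+H_k\})$, I would take $x=p+q$ with $p\in P$, $q\in\HB$, and write $p=\bigoplus_j\lambda_j\odot A_j$ and $q=\bigoplus_k\nu_k\odot H_k$. The key point is that classical addition distributes over the coordinatewise minimum: for each coordinate $i$,
\[
x_i=\min_j\big(\lambda_j+(A_j)_i\big)+\min_k\big(\nu_k+(H_k)_i\big)=\min_{j,k}\big((\lambda_j+\nu_k)+(A_j+H_k)_i\big).
\]
Hence $x=\bigoplus_{j,k}(\lambda_j+\nu_k)\odot(A_j+H_k)\in\tconv(\{A_j+H_k\})$. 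Note this direction uses nothing about $H$ and holds for the Minkowski sum of any two tropical polytopes.

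The reverse inclusion $\tconv(\{A_j+H_k\})\subseteq P+\HB$ is where the structure of $H$ is essential. Given $z=\bigoplus_{j,k}\mu_{jk}\odot(A_j+H_k)$ with \emph{arbitrary} coefficients $\mu_{jk}\in\R$, I would set $\lambda_j:=\min_k\mu_{jk}$ and $p:=\bigoplus_j\lambda_j\odot A_j\in P$, so that $p_i=\min_{j,k}\big(\mu_{jk}+(A_j)_i\big)$. Because every entry of $H$ satisfies $0\le (H_k)_i\le\varepsilon$ by \eqref{eqn:MatrixHforHilbertBall}, comparing the two minima term by term yields $p_i\le z_i\le p_i+\varepsilon$ for every $i$. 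Thus the remainder $q:=z-p$ has all coordinates in $[0,\varepsilon]$, so $\hilbertnorm{q}\le\varepsilon$, i.e.\ $q\in\HB$, and $z=p+q\in P+\HB$.

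Combining the two inclusions gives the claimed generating set. The step I expect to be the crux is the reverse inclusion: it is exactly the assertion that $P+\HB$ is tropically convex (equivalently, a tropical polytope, as promised in the introduction), and it genuinely relies on the uniform bound $(H_k)_i\in[0,\varepsilon]$ special to the Hilbert ball. For a general second summand the remainder $q$ need not have controlled oscillation, and indeed the Minkowski sum of two tropical polytopes need not be tropically convex. The only routine care needed is to confirm that the chosen data are legitimate, namely that the $\lambda_j$ are finite reals and that $q$ represents a class of Hilbert seminorm at most $\varepsilon$, both of which are immediate.
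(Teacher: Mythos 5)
Your proof is correct, and it establishes strictly more than the paper's own argument does. Your first inclusion, $P+\HB\subseteq\tconv(\{A_j+H_k\})$, is exactly the paper's proof: the distributivity identity $\bigl(\bigoplus_j \alpha_j\odot A_j\bigr)+\bigl(\bigoplus_k \beta_k\odot H_k\bigr)=\bigoplus_{j,k}\alpha_j\odot\beta_k\odot(A_j+H_k)$, after which the paper simply writes ``the claim follows''. The paper never addresses the reverse inclusion, which you correctly identify as the crux: the identity only produces tropical combinations whose coefficient matrix $\mu_{jk}=\alpha_j\odot\beta_k$ is tropically rank one, whereas a general point of $\tconv(\{A_j+H_k\})$ has arbitrary coefficients, so equality of the two sets is not a formal consequence. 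Your clamping argument (set $\lambda_j:=\min_k\mu_{jk}$, $p:=\bigoplus_j\lambda_j\odot A_j$, and deduce $0\le z_i-p_i\le\varepsilon$ from the fact that every entry of $H$ lies in $[0,\varepsilon]$) supplies precisely this missing half, and your diagnosis that it must exploit the structure of $H$ is also right: for a general second summand the statement is false. For instance, with $A_1=B_1=(0,0,0,0)^T$, $A_2=(0,0,N,N)^T$, $B_2=(0,N,0,N)^T$ and $N>0$, the $4\times4$ matrix of pairwise sums $A_j+B_k$ is tropically nonsingular (its tropical permanent is $0$, attained by a unique permutation), so $\tconv(\{A_j+B_k\})$ is $3$-dimensional in $\TP^3$ by Theorem \ref{thm:trank.dim}, while $\tconv(A)+\tconv(B)$ is a Minkowski sum of two tropical segments and hence at most $2$-dimensional; the inclusion is therefore strict and this Minkowski sum is not tropically convex. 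In short, your write-up proves the inclusion $\tconv(\{A_j+H_k\})\subseteq P+\HB$ on which Corollary \ref{prop:Outer.Body.Poloytope} genuinely rests and which the paper's one-line proof leaves unproved; what the paper's version buys is only brevity.
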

\proof Let $x\in P+\HB$.  Then, there exists $\alpha\in\R^n, \beta\in\R^m$ such that \begin{align*}x&=\Big(\bigoplus_j \alpha_j\odot A_j\Big) +\Big(\bigoplus_k \beta_k\odot H_k\Big)=\bigoplus_j\bigoplus_k \alpha_j\odot \beta_k\odot (A_j+H_k).\end{align*}  The claim follows.
\endproof

As a consequence of Proposition \ref{prop:HOPB.Generators} we obtain:
\begin{cor}\label{prop:Outer.Body.Poloytope} If $P$ is a tropical polytope, then so is the Hilbert outer parallel body, $P+B_H^{m-1}(\varepsilon).$
\end{cor}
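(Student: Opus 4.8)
The plan is to deduce Corollary~\ref{prop:Outer.Body.Poloytope} as an immediate consequence of Proposition~\ref{prop:HOPB.Generators}. Recall that a tropical polytope is by definition the tropical convex hull of a \emph{finite} subset of $\TP^{m-1}$. So to show that $P+B_H^{m-1}(\varepsilon)$ is a tropical polytope, it suffices to exhibit a finite generating set whose tropical convex hull equals the Minkowski sum.

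First I would observe that $P=\tconv(A)$ for some $A\in\R^{m\times n}$, and that the Hilbert ball $B_H^{m-1}(\varepsilon)=\tconv(H)$ for the matrix $H\in\R^{m\times m}$ of \eqref{eqn:MatrixHforHilbertBall}, as established just before Proposition~\ref{prop:HOPB.Generators}. Both $P$ and $B_H^{m-1}(\varepsilon)$ are therefore tropical polytopes, each given by finitely many generators. Proposition~\ref{prop:HOPB.Generators} then provides the finite family
\[
\{A_j+H_k : j=1,\dots,n,\ k=1,\dots,m\}
\]
of $nm$ vectors, and asserts that this family generates $P+B_H^{m-1}(\varepsilon)$ as a tropical convex hull. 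Since this set is finite, $P+B_H^{m-1}(\varepsilon)=\tconv(\{A_j+H_k\})$ is by definition a tropical polytope.

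I expect there to be essentially no obstacle here: the entire content of the corollary is packaged into the preceding proposition, and the role of the corollary is simply to record the qualitative conclusion. The only point worth stating explicitly is that the generating family $\{A_j+H_k\}$ is finite (it has at most $nm$ elements), which is what distinguishes a tropical polytope from a general tropical convex set. One may also note that the family need not be minimal, exactly as flagged in the statement of Proposition~\ref{prop:HOPB.Generators}; this does not affect whether the resulting set is a tropical polytope, since any finite generating set suffices. Hence the proof is a one-line invocation of Proposition~\ref{prop:HOPB.Generators} together with the definition of a tropical polytope.
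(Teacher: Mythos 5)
Your proof is correct and takes essentially the same route as the paper, which states this corollary with no separate argument, as an immediate consequence of Proposition~\ref{prop:HOPB.Generators}: that proposition exhibits the finite generating family $\{A_j+H_k : j=1,\dots,n,\ k=1,\dots,m\}$, so $P+B_H^{m-1}(\varepsilon)$ is the tropical convex hull of a finite set and hence a tropical polytope. Your explicit remarks on finiteness and non-minimality of the family are exactly the (implicit) content of the paper's deduction.
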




\begin{eg}
The Hilbert outer parallel body of $P=\tconv(A)\subseteq\TP^2$ is drawn in Figure \ref{Fig:HOPB} where $$A=\begin{pmatrix}0&0&0\\0&-1&-2\\0&1&2 \end{pmatrix}.$$

\begin{figure}
\begin{center}
\begin{tikzpicture}[scale=0.5, every node/.style={scale=0.5}]
\node at (0,-1) {};
\filldraw[color=gray!20!white] (-8,4)--(-4,4)--(-4,0)--(-8,0);
\draw[thick] (0,0)--(-8,0)--(-8,8) (-8,4)--(-4,4)--(-4,0);
\filldraw (0,0) circle (0.1cm) (-4,4) circle (0.1cm) (-8,8) circle (0.1cm);
\end{tikzpicture}\ \ \ \ \ \ \ 
\begin{tikzpicture}[scale=0.5, every node/.style={scale=0.5}]

\filldraw[color=gray!20!white] (-1,-1)--(-9,-1)--(-9,7)--(-9,8)--(-8,9)--(-7,9)--(-7,8)--(-7,7)--(-7,5)--(-3,5)--(-3,1)--(1,1)--(1,0)--(0,-1)--(-1,-1);
\filldraw[pattern=north west lines] (0,-1)--(1,0)--(1,1)--(0,1)--(-1,0)--(-1,-1)--(0,-1);\draw (0,-1)--(1,0)--(1,1)--(0,1)--(-1,0)--(-1,-1)--(0,-1);
\filldraw[pattern=north west lines] (-8,-1)--(-7,0)--(-7,1)--(-8,1)--(-9,0)--(-9,-1)--(-8,-1);\draw (-8,-1)--(-7,0)--(-7,1)--(-8,1)--(-9,0)--(-9,-1)--(-8,-1);
\filldraw[pattern=north west lines] (-8,7)--(-7,8)--(-7,9)--(-8,9)--(-9,8)--(-9,7)--(-8,7);\draw (-8,7)--(-7,8)--(-7,9)--(-8,9)--(-9,8)--(-9,7)--(-8,7);
\filldraw[pattern=north west lines] (-4,3)--(-3,4)--(-3,5)--(-4,5)--(-5,4)--(-5,3)--(-4,3);\draw (-4,3)--(-3,4)--(-3,5)--(-4,5)--(-5,4)--(-5,3)--(-4,3);
\draw (-1,-1)--(-9,-1)--(-9,7) (-7,8)--(-7,7)--(-7,5)--(-3,5)--(-3,1)--(0,1);
\draw[very thick, dotted] (0,0)--(-8,0)--(-8,8) (-8,4)--(-4,4)--(-4,0);
\filldraw (0,0) circle (0.1cm) (-4,4) circle (0.1cm) (-8,8) circle (0.1cm);
\end{tikzpicture}
\end{center}\caption{A tropical polytope (left) and its Hilbert outer parallel body (right).}
\label{Fig:HOPB}\end{figure}
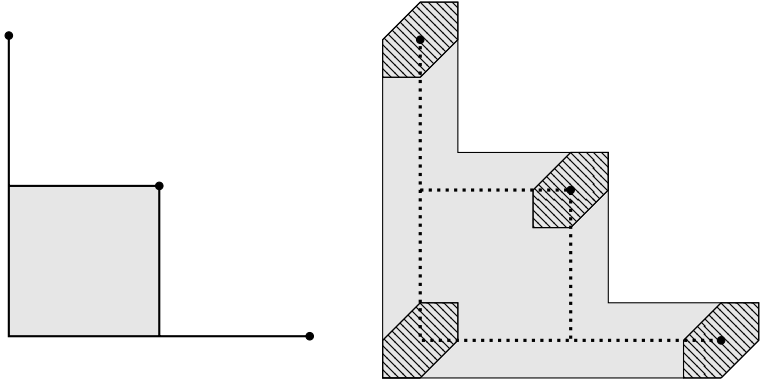
\end{eg}
 
\subsection{Integer tropical polytopes and bounds on number of cells}

An \emph{integer tropical polytope} is a tropical polytope $P$ whose every pseudovertex is integer. 

\begin{prop} \label{prop.intpseudovertices}
Let $A\in\Z^{m\times n}$. Let $P=\tconv(A)$ be a polytope in $\mathbb{TP}^{m-1}$. Then every cell of $P$ has integer pseudovertices, and so, 
$P$ is an integer tropical polytope.\end{prop}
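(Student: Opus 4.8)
The plan is to prove that every pseudovertex of $P=\tconv(A)$ is an integer point, using the cell structure recalled in Proposition~\ref{prop:PolytopeCellSummary}. First I would recall that a pseudovertex is a zero-dimensional cell, i.e.\ a point $x$ that is the unique solution of the system of inequalities~\eqref{eqn:CellInequalities} defining some cell $X_S$, where equality must hold in enough of the constraints $x_k-x_i\leq a_{kj}-a_{ij}$ to pin $x$ down completely. The key observation is that each such active constraint has the form $x_k-x_i=a_{kj}-a_{ij}$, and since $A\in\Z^{m\times n}$, the right-hand sides are all integers.

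Next I would fix the normalisation $x_1=0$ coming from our identification of $\TP^{m-1}$ with $H_1\simeq\R^{m-1}$, so that a pseudovertex is genuinely a point of $\R^{m-1}$ determined by $m-1$ independent active difference constraints. I would argue that, because $x$ is a vertex (zero-dimensional face) of the polyhedron $X_S$, the $m-1$ free coordinates $x_2,\dots,x_m$ are the unique solution of a full-rank subsystem of equations of the form $x_k-x_i = c_{ki}$ with $c_{ki}\in\Z$ together with $x_1=0$. This is a difference system, whose coefficient matrix (a node--arc incidence-type matrix) is totally unimodular; equivalently, one may trace a spanning tree of tight constraints rooted at coordinate $1$ and read off each $x_i$ as a signed sum of integer entries $a_{kj}-a_{ij}$ along the unique tree path, which is manifestly an integer. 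Either formulation yields $x\in\Z^{m-1}$.

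Finally, since every pseudovertex of every cell is integer and the pseudovertices of $P$ are exactly the zero-dimensional cells of the tropical complex, I conclude that $P$ is an integer tropical polytope in the sense defined above. I expect the main obstacle to be the bookkeeping in the middle step: one must verify carefully that a genuine vertex of $X_S$ does arise as the unique solution of a full-rank integer difference subsystem (rather than, say, being cut out partly by the ambient normalisation in a way that could introduce fractions), and that the relevant constraint matrix is indeed totally unimodular. The spanning-tree argument sidesteps any appeal to a general total unimodularity theorem and makes the integrality completely explicit, so I would prefer to carry out the proof in that form.
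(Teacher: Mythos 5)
Your proposal is correct and follows essentially the same route as the paper: the paper writes each cell as $\{x : Mx \leq b\}$ with $b$ an integer vector and $M$ a $\{0,+1,-1\}$ difference-constraint matrix, observes that $M$ is totally unimodular, and concludes that every cell has integer vertices. Your spanning-tree reading of the tight constraints is just an explicit unpacking of that same total-unimodularity step, so the two arguments coincide in substance.
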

\begin{proof}
Consider a cell $X=X_S$, $S=(S_1,\dots,S_m)$ of $P$.  From \cite{TropConv}, see also Proposition \ref{prop:PolytopeCellSummary}, we have 
$$X=\{x\in\TP^{m-1}: x_k-x_j\leq a_{ki}-a_{ji}\ \forall j,k\in\{1,\dots,m\} \text{ s.t. } i\in S_j\}$$ and that $X$ is convex in the usual sense.  Note that $a_{ij}\in\Z$ for all $i,j$.  Thus $X=\{x: Mx\leq b\}$, where $b$ is an integer vector and $M\in\{0, 1, -1\}^{m-1\times p}$, some $p\in \mathbb{N}$.  
It is clear that $M$ is totally unimodular and hence $X$ has integer vertices.
 \end{proof}

We will require a (very coarse) bound on the number of bounded cells in $P$ which is a consequence of Proposition \ref{prop:number.faces.P}.

\begin{cor}\label{prop:bound.cells.P}
Let $P$ be a tropical polytope in $\mathbb{TP}^{m-1}$, generated by $r$ points in general position. Then $|\sS^t|\leq 3^{r+m-t-2}.$
\end{cor}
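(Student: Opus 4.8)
The plan is to bound the trinomial coefficient from Proposition~\ref{prop:number.faces.P} and then sum over all cell types. From the remark following Proposition~\ref{prop:number.faces.P}, even when the $r$ generating points are not in general position, the trinomial coefficient $\binom{r+m-t-2}{r-t-1, m-t-1, t}$ is an upper bound on the number of bounded cells of dimension $t$, i.e.\ on $|\sS^t|$. So it suffices to show that this single trinomial coefficient is itself at most $3^{r+m-t-2}$.

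First I would recall that a trinomial coefficient $\binom{N}{a,b,c}$ with $a+b+c=N$ counts a subset of the ways of distributing $N$ labelled objects into three boxes, and hence satisfies $\binom{N}{a,b,c}\leq 3^N$ since $\sum_{a+b+c=N}\binom{N}{a,b,c}=3^N$ by the trinomial theorem (setting all three variables to $1$ in $(x+y+z)^N$). Here the relevant $N$ is $(r-t-1)+(m-t-1)+t = r+m-t-2$, which matches the exponent in the claimed bound exactly. Thus the single term $\binom{r+m-t-2}{r-t-1,\, m-t-1,\, t}$ is bounded above by the full sum $3^{r+m-t-2}$, giving $|\sS^t|\leq 3^{r+m-t-2}$.

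I would present this in two short steps: (1) invoke Proposition~\ref{prop:number.faces.P} together with its remark to get $|\sS^t|\leq \binom{r+m-t-2}{r-t-1,\,m-t-1,\,t}$; (2) bound that coefficient by $3^{r+m-t-2}$ via the trinomial theorem. A small point worth checking is that the three lower entries are genuinely nonnegative so that the coefficient (and the bound) make sense, namely that $t\leq r-1$ and $t\leq m-1$; but for values of $t$ outside this range there are no bounded cells of dimension $t$ at all, so $|\sS^t|=0$ and the inequality holds trivially.

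I do not expect any serious obstacle here: the statement is deliberately a ``very coarse'' bound, and the only content is matching the exponent $r+m-t-2$ to the sum of the three multinomial entries and then discarding all but one term of the trinomial expansion. The mild subtlety is simply making the book-keeping of the indices explicit and handling the degenerate ranges of $t$, neither of which requires real work.
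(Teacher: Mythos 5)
Your proof is correct and follows essentially the same route as the paper: the paper bounds the trinomial coefficient from Proposition~\ref{prop:number.faces.P} by $(1+1+1)^{r+m-t-2}=3^{r+m-t-2}$, which is exactly your trinomial-theorem argument of discarding all but one term of the expansion. Your additional remarks (invoking the non-general-position remark and checking degenerate ranges of $t$) are harmless but not needed, since general position is assumed in the statement.
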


\begin{proof} Using Proposition \ref{prop:number.faces.P}, the number of faces/bounded cells of $P$ of dimension $t$ is $${r+m-t-2 \choose r-t-1, m-t-1, t}\leq (1+1+1)^{r+m-t-2}=3^{r+m-t-2}.$$ 
\end{proof}

\subsection{Upper Bound}
We now prove the upper bound in Theorem \ref{thm:VolBounds}.  To do this we first introduce an ``outer radius'' of a tropical polytope, which is the length of sides of the smallest hypercube which contains it.  We remark here that we could bound the volume of a tropical polytope by a smallest circumscribed Hilbert ball, but in many cases a tighter bound is achieved using hypercubes.  While, to our knowledge, no simple formula for the circumradius of a tropical polytope exists, a simple formula for the radius of \emph{some} Hilbert ball containing a tropical polytope is defined in \cite{KleeneStarPolytropes}.

Recall that we assume without loss of generality that $A\in\R^{m\times n}$ has its first row, $A_{1\cdot}$, equal to zero.  Define \begin{equation}\label{eqn:DefR}R:= \max_{i=2,\dots,m}\hilbertnorm{A_{i\cdot}}= \max_{i=2,\dots,m}(\max_{j=1,\dots,n} (a_{ij}-a_{1j})-\min_{j=1,\dots,n}(a_{ij}-a_{1j})). \end{equation}  Note that $\hilbertnorm{A_{i\cdot}}$ is a lower bound on the length of the side (perpendicular to $x_i=0$) of an $(m-1)$-cube containing $P$.  Hence it can be checked (see also Theorem 3.34 in \cite{MacCaigThesis}) that there exists a vector $u$ such that $P=\tconv(A)\subseteq u+[0,R]^{m-1}.$

However, since $P$ has dimension $k-1$, $P$ may be a lower dimensional tropical polytope embedded in the $(m-1)$-dimensional space. So, more work is required to bound the $(k-1)$-volume of $P$.  In what follows, for any cell $X_S$ of $P$ we construct a simple shape whose volume is an upper bound on the volume of $X_S$.  Note that this bound is not optimal, but is constructed for ease of calculations later.

\begin{prop}\label{prop:CellsinCube} Let $A\in\Z^{m\times n}$. Let $P=\tconv(A)$ be the polytope in $\mathbb{TP}^{m-1}$ generated by the columns of $A$.
Then, for all $t\in\{1,\dots,\troprank(A)-1\}$ every cell $X_S$
of $P$ such that $S\in\sS^t$ is contained in a translation and/or rotation of the $t$-cube $[-R\sqrt{m-1}, R\sqrt{m-1}]^{t}$.
Further, $\Vol^{t}(X_S)\leq \Vol^{t}([-R\sqrt{m-1}, R\sqrt{m-1}]^{t})$.
\end{prop}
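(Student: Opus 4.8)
The plan is to exploit the containment $P\subseteq u+[0,R]^{m-1}$ established just above, together with the elementary fact that a $t$-dimensional convex body of small Euclidean diameter fits inside a small $t$-cube. First I would record that, since the bounded cells of the tropical complex cover $P$ (Proposition~\ref{prop:PolytopeCellSummary}, part 5), any cell $X_S$ with $S\in\sS^t$ satisfies $X_S\subseteq P\subseteq u+[0,R]^{m-1}$. The key quantitative input is the Euclidean diameter of this containing box: the two extreme corners of $[0,R]^{m-1}$ lie at distance $\sqrt{(m-1)R^2}=R\sqrt{m-1}$, so $\operatorname{diam}(X_S)\leq R\sqrt{m-1}$, the diameter being invariant under the translation by $u$.

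Next I would use that, because $S\in\sS^t$, the cell $X_S$ is a genuinely $t$-dimensional convex polytope, so its affine hull $V$ is a $t$-dimensional affine subspace of $H_1\simeq\R^{m-1}$. Fixing an orthonormal frame of $V$ identifies $X_S$ isometrically with a bounded convex subset $K$ of $\R^{t}$; this identification, read back into $\R^{m-1}$, is precisely the rotation (and translation) referred to in the statement. In these coordinates the extent of $K$ along each axis is at most its diameter, since projecting onto a single coordinate can only shrink Euclidean distances: if $b=\max_{x\in K}x_i$ and $a=\min_{x\in K}x_i$ are attained at $p,q\in K$, then $b-a=|p_i-q_i|\leq\|p-q\|_2\leq R\sqrt{m-1}$. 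Hence each edge of the bounding box of $K$ has length at most $R\sqrt{m-1}$, and a suitable translation places $K$ inside $[-\tfrac{R\sqrt{m-1}}{2},\tfrac{R\sqrt{m-1}}{2}]^{t}\subseteq[-R\sqrt{m-1},R\sqrt{m-1}]^{t}$, yielding the asserted containment of $X_S$ in a rotated/translated copy of the $t$-cube (the extra factor of two being a convenient slack).

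The volume bound would then follow from invariance of $t$-dimensional volume under the maps used above: $\Vol^{t}(X_S)$ equals the $t$-dimensional Lebesgue measure of its image $K$ in $\R^{t}$, which is at most $\Vol^{t}([-R\sqrt{m-1},R\sqrt{m-1}]^{t})$ since $K$ is contained in that cube. I expect the only point needing a little care, and the mildest obstacle here, to be making the passage to the affine hull precise: one should check that the $\Vol^{t}$ attached to the cell agrees with Lebesgue measure read off in an orthonormal frame of $V$, and that a congruent copy of the $t$-cube sitting in $V\subset\R^{m-1}$ has the expected $t$-volume. Both facts are standard consequences of the orthogonal invariance of Lebesgue measure, so once the diameter estimate $\operatorname{diam}(X_S)\leq R\sqrt{m-1}$ is in hand the remainder is routine.
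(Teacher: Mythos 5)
Your proposal is correct and follows essentially the same route as the paper: both arguments derive the diameter bound $\operatorname{diam}(X_S)\leq R\sqrt{m-1}$ from the containment $P\subseteq u+[0,R]^{m-1}$, and then place the $t$-dimensional cell inside a $t$-cube sitting in its affine hull, with the volume bound following from monotonicity and isometry-invariance of $t$-dimensional Lebesgue measure. The only (immaterial) difference is the intermediate step: the paper encloses $X_S$ in a $t$-dimensional Euclidean ball $v+B_2^t(0,R\sqrt{m-1})$ and then that ball in the cube $[-R\sqrt{m-1},R\sqrt{m-1}]^t$, whereas you bound each coordinate extent in an orthonormal frame of the affine hull directly by the diameter, which even yields the slightly tighter cube of side $R\sqrt{m-1}$.
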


\begin{proof}
It follows from the above discussion that the largest distance between any two points in $P$ is at most the largest distance between any two points in $[0, R]^{m-1}$, which is $\sqrt{(m-1)R^2}=R\sqrt{m-1}.$
Hence, for any $X_S, S\in\sS^t$, there exists a vector $v$ such that $X_S\subseteq v+B^t_2(0, R\sqrt{m-1}),$ and $\Vol^{t}(X_S)\leq \Vol^{t}(B^t_2(0, R\sqrt{m-1}))$.

Further relaxing this bound we use that $B^t_2(0, R\sqrt{m-1})\subseteq [-R\sqrt{m-1}, R\sqrt{m-1}]^{t}$ to obtain $X_S\subseteq v+[-R\sqrt{m-1}, R\sqrt{m-1}]^{t}$ and $\Vol^{t}(X_S)\leq \Vol^{t}([-R\sqrt{m-1}, R\sqrt{m-1}]^{t})$.
\end{proof}

\begin{prop}\label{prop:volume.cell+ball.bound}
Let $P=\tconv(A)$ be a polytope in $\mathbb{TP}^{m-1}$ and suppose $A$ has tropical rank $k$.  Assume $m\geq 2$ and $\varepsilon\leq R$.   Then,  $$(\forall d=0,\dots,k-1)(\forall S\in \sS^d) \qquad \Vol^{m-1}( X_S+B^{m-1}_H(\varepsilon))\leq (4R\sqrt{m-1})^{d}(2\varepsilon)^{m-1-d}.$$
\end{prop}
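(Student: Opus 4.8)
The plan is to bound $\Vol^{m-1}(X_S + B_H^{m-1}(\varepsilon))$ by factoring the problem into a ``tangential'' part along the cell $X_S$ and a ``normal'' part coming from the Hilbert ball. Since $X_S$ has dimension $d$, I expect the contribution of $X_S$ itself to scale like its $d$-volume, while the Minkowski sum with the ball fattens it in the remaining $m-1-d$ directions, each picking up a factor proportional to $\varepsilon$. First I would invoke Proposition~\ref{prop:CellsinCube}, which tells us $X_S$ is contained in a translation/rotation of the $d$-cube $[-R\sqrt{m-1}, R\sqrt{m-1}]^{d}$, so that $\Vol^d(X_S) \leq (2R\sqrt{m-1})^{d}$. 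The key geometric fact I would use is the monotonicity of volume under Minkowski sum together with a containment: if $X_S$ sits inside a $d$-cube $C$ of side $2R\sqrt{m-1}$ embedded in $\R^{m-1}$, then $X_S + B_H^{m-1}(\varepsilon) \subseteq C + B_H^{m-1}(\varepsilon)$, and it suffices to bound the volume of the right-hand side.

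The main step is therefore to estimate $\Vol^{m-1}\big(C + B_H^{m-1}(\varepsilon)\big)$ where $C$ is a $d$-dimensional cube. Here I would use that the Hilbert ball $B_H^{m-1}(\varepsilon)$ is itself contained in a Euclidean cube: from the definition $\hilbertnorm{x}\leq\varepsilon$ forces each coordinate difference to be at most $\varepsilon$, so with the normalisation $x_1 = 0$ one gets $B_H^{m-1}(\varepsilon)\subseteq [-\varepsilon,\varepsilon]^{m-1}$, giving a containing box of side $2\varepsilon$ in each of the $m-1$ coordinate directions. Adding the $d$-cube $C$ (of side $2R\sqrt{m-1}$) to this box in the $d$ directions spanned by $C$ enlarges those sides to at most $2R\sqrt{m-1}+2\varepsilon \leq 4R\sqrt{m-1}$ (using $\varepsilon\leq R$ and $\sqrt{m-1}\geq 1$ for $m\geq 2$), while the remaining $m-1-d$ directions keep side $2\varepsilon$. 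Multiplying the side lengths of this axis-aligned containing box yields
\[
\Vol^{m-1}\big(C + B_H^{m-1}(\varepsilon)\big) \leq (4R\sqrt{m-1})^{d}(2\varepsilon)^{m-1-d},
\]
which is exactly the claimed bound.

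The one subtlety — and the step I expect to require the most care — is that Proposition~\ref{prop:CellsinCube} only places $X_S$ inside a $d$-cube \emph{up to rotation}, so the cube $C$ need not be axis-aligned with the coordinate directions in which the Hilbert ball is thin. To handle this cleanly I would argue directly in terms of Euclidean balls rather than rotated cubes: since $X_S \subseteq v + B_2^{d}(0, R\sqrt{m-1})$ lies in an affine $d$-plane and the Hilbert ball is contained in a Euclidean ball $B_2^{m-1}(0, \varepsilon\sqrt{m-1})$ (as $\euclideannorm{x}\leq \sqrt{m-1}\,\hilbertnorm{x}$ on $H_1$, up to the relevant constant), the Minkowski sum is contained in the product of a $d$-ball of radius $R\sqrt{m-1}$ and an $(m-1-d)$-ball in the orthogonal complement of radius on the order of $\varepsilon$. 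Bounding each Euclidean ball by its circumscribing cube then recovers the stated factors, with the rotation absorbed because both the enlargement along the $d$-plane and the thickening transverse to it are rotation-invariant. The clean way to present this is to fix coordinates adapted to the affine hull of $X_S$, bound the tangential extent by $2R\sqrt{m-1}$ and the $m-1-d$ transverse extents each by $2\varepsilon$, and take the product; the factor $\sqrt{m-1}$ and the constants $2$ and $4$ are deliberately generous so that no optimization is needed, keeping the estimate short.
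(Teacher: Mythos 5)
Your proposal follows essentially the same route as the paper's proof: contain $X_S$ in a translated/rotated $d$-cube via Proposition~\ref{prop:CellsinCube}, contain the Hilbert ball in $[-\varepsilon,\varepsilon]^{m-1}$, use monotonicity of volume under containment, and multiply extents. The only divergence is in how the rotation is neutralised. The paper transfers the rotation onto the small cube, using $\Vol^{m-1}(\theta A+B)=\Vol^{m-1}(A+\theta^{-1}B)$, and then encloses $\theta^{-1}[-\varepsilon,\varepsilon]^{m-1}$ in a Euclidean ball of radius $\varepsilon\sqrt{m-1}$ and hence in an axis-aligned cube of side $2\varepsilon\sqrt{m-1}$. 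You instead decompose orthogonally along $\aff(X_S)$, bounding the tangential extent by $2R\sqrt{m-1}+2\varepsilon\sqrt{m-1}\leq 4R\sqrt{m-1}$ and the transverse part by the projection of a circumscribed Euclidean ball. Both are valid ways to absorb the rotation, and yours is arguably the cleaner of the two.

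However, your final quantitative claim is not what your own argument yields, and this is a genuine (if small) leak. Once the rotation is taken into account, the transverse thickness of the sum is controlled only by the circumscribed Euclidean ball of $B_H^{m-1}(\varepsilon)$, whose radius is $\varepsilon\sqrt{m-1}$, not $\varepsilon$; your phrase ``radius on the order of $\varepsilon$'' hides exactly this factor, and your earlier assertion that the remaining $m-1-d$ directions ``keep side $2\varepsilon$'' is only valid when the cube $C$ is axis-aligned, which is precisely the case you set out to avoid. Carried out carefully, your construction proves
\[
\Vol^{m-1}\bigl(X_S+B_H^{m-1}(\varepsilon)\bigr)\leq (4R\sqrt{m-1})^{d}\,(2\varepsilon\sqrt{m-1})^{m-1-d},
\]
which exceeds the stated bound by the factor $(m-1)^{(m-1-d)/2}$. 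You are in good company: the paper's own display \eqref{eqn:BoundVolCell3} commits the same slip, asserting $\Vol^{m-1}([-t,t]^d+[0,2\varepsilon\sqrt{m-1}]^{m-1})=(2t+2\varepsilon\sqrt{m-1})^d(2\varepsilon)^{m-1-d}$ when the last factor should be $(2\varepsilon\sqrt{m-1})^{m-1-d}$. For everything the proposition is used for (Theorem~\ref{thm:VolBounds} and the hardness reductions), the weaker constant is just as good, since only the exponent of $\varepsilon$ and the polynomial boundedness of the logarithm of the prefactor matter; but as a proof of the inequality exactly as stated, yours, like the paper's, needs either the corrected (larger) constant in the statement or an additional argument.
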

\begin{proof} Fix some bounded cell $X=X_S$ where $S\in\sS^d$ ($\sS^d$ is defined in \eqref{eqn:No.CellsDim}).
Observe that $B^{m-1}_H(\varepsilon)\subseteq [-\varepsilon, \varepsilon]^{m-1}$.  Also, by Proposition \ref{prop:CellsinCube}, $X$ is contained in a translation and/or rotation of $\left[-R\sqrt{m-1}, R\sqrt{m-1}\right]^d$.  
In a simplified notation, we will write $\theta\left[-t, t\right]^d$ to denote some rotation of the $d$-dimensional cube $[-t,t]^d$ in $m-1$ dimensional space.
Thus, for some $\theta$, $X+\HB$ is contained in a translation of $\theta\left[-R\sqrt{m-1}, R\sqrt{m-1}\right]^d+B^{m-1}_H(\varepsilon),$ and therefore \begin{equation}\label{eqn:BoundVolCell1}\Vol^{m-1}(X+\HB)\leq \Vol^{m-1}\left(\theta\left[-R\sqrt{m-1}, R\sqrt{m-1}\right]^d+[-\varepsilon, \varepsilon]^{m-1}\right).\end{equation}

Observe that, to calculate the volume of a Minkowski summand, it doesn't matter which shape the rotation is applied to, i.e.\ for any $t\in\R$ and rotation $\theta$, \begin{equation}\label{eqn:BoundVolCell2}\Vol^{m-1}\left(\theta\left[-t, t\right]^d+[-\varepsilon, \varepsilon]^{m-1}\right)=
\Vol^{m-1}\left(\left[-t, t\right]^d+\theta^{-1}[-\varepsilon, \varepsilon]^{m-1}\right).\end{equation}

Now, since the largest distance between any two points in $[-\varepsilon, \varepsilon]^{m-1}$ is $2\varepsilon\sqrt{(m-1)}$, there exist vectors $u,v$ such that $\theta^{-1}[-\varepsilon, \varepsilon]^{m-1}\subseteq u+ B_2^{m-1}(\varepsilon\sqrt{m-1})\subseteq v+[0, 2\varepsilon\sqrt{m-1}]^{m-1}.$
Therefore \begin{align}\Vol^{m-1}\left(\left[-t, t\right]^d+\theta^{-1}[-\varepsilon, \varepsilon]^{m-1}\right)&\leq\Vol^{m-1}\left(\left[-t, t\right]^d+[0, 2\varepsilon\sqrt{m-1}]^{m-1}\right) \nonumber
\\&=(2t+2\varepsilon\sqrt{m-1})^d(2\varepsilon)^{m-1-d}.\label{eqn:BoundVolCell3}\end{align}

Hence, from \eqref{eqn:BoundVolCell1}, \eqref{eqn:BoundVolCell2},\eqref{eqn:BoundVolCell3} and using our assumption on the size of $\varepsilon$, $$\Vol^{m-1}(X+\HB)\leq (2R\sqrt{m-1}+2\varepsilon\sqrt{m-1})^d(2\varepsilon)^{m-1-d}\leq (4R\sqrt{m-1})^d(2\varepsilon)^{m-1-d}.$$
\end{proof}

\begin{prop}
Let $P=\tconv(A)$ be a polytope in $\mathbb{TP}^{m-1}$ and suppose $A$ has tropical rank $k$.   Assume $m\geq 2$ and $\varepsilon\leq R$.  Then,  $$\Vol^{m-1}(P+B^{m-1}_H(\varepsilon))\leq 2^m3^{m+n-2}(2R\sqrt{m-1})^{k-1}\varepsilon^{m-k}. $$
\end{prop}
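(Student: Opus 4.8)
The plan is to bound the total volume of the Hilbert outer parallel body by summing over all bounded cells of $P$ and applying the previously established per-cell estimate. By the polyhedral complex structure (Proposition~\ref{prop:PolytopeCellSummary}), $P$ is the union of its bounded cells $X_S$, and since $\troprank(A)=k$, these cells have dimension at most $k-1$. The Minkowski sum distributes over this decomposition in the sense that $P+\HB=\bigcup_{S}(X_S+\HB)$, so by subadditivity of Lebesgue measure,
\[
\Vol^{m-1}(P+\HB)\leq \sum_{d=0}^{k-1}\sum_{S\in\sS^d}\Vol^{m-1}(X_S+\HB).
\]
The first ingredient is Proposition~\ref{prop:volume.cell+ball.bound}, which gives, under the hypotheses $m\geq 2$ and $\varepsilon\leq R$, the per-cell bound $\Vol^{m-1}(X_S+\HB)\leq (4R\sqrt{m-1})^{d}(2\varepsilon)^{m-1-d}$ whenever $S\in\sS^d$.

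Next I would control the number of cells of each dimension. By Corollary~\ref{prop:bound.cells.P}, with $r=n$ generators we have $|\sS^d|\leq 3^{n+m-d-2}$. Combining these two facts yields
\[
\Vol^{m-1}(P+\HB)\leq \sum_{d=0}^{k-1} 3^{n+m-d-2}\,(4R\sqrt{m-1})^{d}(2\varepsilon)^{m-1-d}.
\]
The remaining task is purely arithmetic: I would factor out the leading power $\varepsilon^{m-k}$ and bound the resulting sum by its number of terms times the largest term. Since $\varepsilon\leq R$, the factor $(2\varepsilon)^{m-1-d}$ is increasing in $d$ relative to $(4R\sqrt{m-1})^{d}$ after the $\varepsilon^{m-k}$ is extracted, so the dominant contribution comes from $d=k-1$; crudely bounding $3^{n+m-d-2}\leq 3^{n+m-2}$ and collapsing the geometric-type sum of $k\leq m$ terms into the constants $2^m$ should reproduce the stated factor $2^m 3^{m+n-2}(2R\sqrt{m-1})^{k-1}\varepsilon^{m-k}$.

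The main obstacle, such as it is, lies not in any deep argument but in keeping the bookkeeping of the constants consistent with the target bound. I expect the subtlety to be verifying that after extracting $\varepsilon^{m-k}$ the surplus powers of $2$, of $3$, and of $R\sqrt{m-1}$ aggregate into exactly the claimed prefactor rather than something slightly larger; the hypothesis $\varepsilon\leq R$ is precisely what lets me replace each $\varepsilon^{m-1-d}$ by $\varepsilon^{m-k}$ times a power of $R$ absorbed into the constant. Since the authors have emphasized that these constants are deliberately non-optimal and chosen for ease of calculation, the bound should follow from a direct, if slightly loose, summation rather than any clever estimate.
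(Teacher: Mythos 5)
Your proposal follows the paper's proof step for step: the same decomposition of $P+B^{m-1}_H(\varepsilon)$ over the bounded cells, the same per-cell estimate from Proposition~\ref{prop:volume.cell+ball.bound}, and the same cell-count bound from Corollary~\ref{prop:bound.cells.P}, arriving at the identical intermediate sum $\sum_{d=0}^{k-1} 3^{m+n-d-2}(4R\sqrt{m-1})^d(2\varepsilon)^{m-1-d}$. The one divergence is the final arithmetic, and there your stated method does not deliver the claimed constant. The largest term of the sum (at $d=k-1$) equals $3^{m+n-2}(4R\sqrt{m-1})^{k-1}(2\varepsilon)^{m-k}=2^{m-1}3^{m+n-2}(2R\sqrt{m-1})^{k-1}\varepsilon^{m-k}$, so bounding the sum by ``number of terms times the largest term'' produces a prefactor $k\cdot 2^{m-1}$, which exceeds the claimed $2^m$ whenever $k>2$; so, taken literally, your last step proves a weaker inequality than the proposition states. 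What the paper does instead --- and what you need --- is to note that the ratio of consecutive terms satisfies $\frac{2\varepsilon}{4R\sqrt{m-1}}\leq\frac{2R}{4R\sqrt{m-1}}=\frac{1}{2\sqrt{m-1}}\leq\frac{1}{2}$ (this is exactly where $\varepsilon\leq R$ and $m\geq 2$ enter), so after factoring out the largest term the remaining sum is dominated by the geometric series $\sum_{d\geq 0}(1/2)^d<2$. That factor of $2$, rather than your factor of $k$, is precisely what turns $2^{m-1}$ into $2^m$. This is a small repair, but it is the only estimate in the proof that is not purely mechanical, and your write-up passes over it.
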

\begin{proof}
First, note that the largest cell of $P$ has dimension $k-1$.  Now, using Proposition \ref{prop:volume.cell+ball.bound} and Corollary \ref{prop:bound.cells.P},
\begin{align*}\Vol^{m-1}( P+B^{m-1}_H(\varepsilon))&\leq \sum_{d=0}^{k-1} \sum_{S\in\sS^d} \Vol^{m-1}(X_S+ B^{m-1}_H(\varepsilon))
\\&\leq \sum_{d=0}^{k-1} 3^{m+n-d-2}(4R\sqrt{m-1})^d(2\varepsilon)^{m-1-d}
\\&\leq 3^{m+n-2} \sum_{d=0}^{k-1} (4R\sqrt{m-1})^d(2\varepsilon)^{m-1-d} 
\\&\leq 3^{m+n-2}(4R\sqrt{m-1})^{k-1}(2\varepsilon)^{m-1-(k-1)} \sum_{d=0}^{k-1}\bigg(\frac{2\varepsilon}{4R\sqrt{m-1}}\bigg)^{d}
\\& < 3^{m+n-2}(4R\sqrt{m-1})^{k-1}(2\varepsilon)^{m-k}2 \end{align*}

 since, by our assumptions on the size of $\varepsilon$, $$\sum_{d=0}^{k-1}\bigg(\frac{2\varepsilon}{4R\sqrt{m-1}}\bigg)^{d}\leq\sum_{d=0}^{k-1}\bigg(\frac{1}{2}\bigg)^{d}<2.$$
\end{proof}

\subsection{Lower bound}
Finally we prove the lower bound in Theorem \ref{thm:VolBounds}.  
Let $B_2^d(r)$ be the Euclidean ball of dimension $d$ with radius $r$ (centered at the origin) and $B_2^d$ be the unit Euclidean ball of dimension $d$ (centered at the origin).  We define
\begin{equation}\label{eqn:VolBall}\kappa_d:=\Vol^d(B_2^d)=\frac{\pi^{\frac{d}{2}}}{\Gamma(\frac{d}{2}+1)}.\end{equation}

We will use the following results on the intrinsic volume.  We denote by $V_i(P)$ the $i$th intrinsic volume of $P$, see \cite{ConvPolytope} for the definition. 

\begin{prop}[{\cite{ConvPolytope}}]
\label{prop.intrinsicvolumesum}
Suppose $P\subseteq\mathbb{R}^d$ is a (classical) polytope. Then $$\Vol^d(P+\lambda B_2^d)=\sum_{i=0}^d \lambda^{d-i}\kappa_{d-i} V_i(P).$$
\end{prop}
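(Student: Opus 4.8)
The plan is to establish this as the classical Steiner formula for polytopes, proving it directly by the nearest-point decomposition of the parallel body into contributions indexed by the faces of $P$. The starting observation is that, since $P$ is closed and convex, the metric projection $\pi_P\colon\R^d\to P$ is well defined and single-valued, and for every $x\in\R^d$ the nearest point $\pi_P(x)$ lies in the relative interior of a unique face $F$ of $P$. This partitions $\R^d$, and in particular $P+\lambda B_2^d$, up to a set of Lebesgue measure zero (the boundaries between pieces), into the disjoint cells $\{x:\pi_P(x)\in\operatorname{relint}(F)\}$ as $F$ ranges over the faces of $P$.

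Next I would analyse a single cell. Fix a face $F$ of dimension $i$, let $L_F$ be the linear subspace parallel to $\aff(F)$, and let $N(F,P)\subseteq L_F^\perp$ denote the outer normal cone of $F$, a cone of dimension $d-i$. For $x$ with $\pi_P(x)=y\in\operatorname{relint}(F)$ one has $x=y+v$ with $v=x-\pi_P(x)\in N(F,P)$, and conversely; the constraint $x\in P+\lambda B_2^d$ is exactly $\|v\|\le\lambda$. Using the orthogonal splitting $\R^d=L_F\oplus L_F^\perp$, the part of $P+\lambda B_2^d$ projecting into $\operatorname{relint}(F)$ is therefore, up to measure zero, the product $\operatorname{relint}(F)\times\bigl(N(F,P)\cap\lambda B_2^{d-i}\bigr)$, which by Fubini carries no Jacobian factor. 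By homogeneity of the cone its $d$-dimensional volume is
\[
\Vol^i(F)\cdot\Vol^{d-i}\!\bigl(N(F,P)\cap\lambda B_2^{d-i}\bigr)
=\Vol^i(F)\cdot\lambda^{d-i}\kappa_{d-i}\,\gamma(F,P),
\]
where $\gamma(F,P):=\Vol^{d-i}\!\bigl(N(F,P)\cap B_2^{d-i}\bigr)/\kappa_{d-i}$ is the external angle of $F$, i.e. the fraction of the normal sphere swept out by $N(F,P)$.

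Summing over all faces and grouping by dimension then yields
\[
\Vol^d(P+\lambda B_2^d)=\sum_{i=0}^d\lambda^{d-i}\kappa_{d-i}\sum_{\dim F=i}\gamma(F,P)\,\Vol^i(F),
\]
and the inner sum is precisely the $i$th intrinsic volume, since for a polytope $V_i(P)=\sum_{\dim F=i}\gamma(F,P)\,\Vol^i(F)$ is the standard face-angle expression, consistent with the definition in \cite{ConvPolytope}. This gives the claimed identity. The hard part will be the careful justification of the decomposition step: verifying that the cells indexed by distinct faces overlap only on a Lebesgue-null set, that each cell genuinely has the asserted orthogonal product structure (so that Fubini applies with trivial Jacobian, using $N(F,P)\subseteq L_F^\perp$), and --- if \cite{ConvPolytope} introduces $V_i$ by another route, such as mixed volumes or Hadwiger's characterisation --- checking that the face-angle formula agrees with that definition. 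The remaining manipulations are routine polynomial bookkeeping in $\lambda$.
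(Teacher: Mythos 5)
Your proof is correct, but there is nothing in the paper to compare it against: the paper does not prove this proposition, it simply quotes it from \cite{ConvPolytope} as a known classical fact (the Steiner formula), and immediately uses it in the lower-bound argument of Section 4. What you have written is the standard textbook proof of that formula for polytopes, and all of its steps are sound: the metric projection $\pi_P$ is single-valued, every point of $\R^d$ projects to the relative interior of exactly one face, so the cells indexed by faces form an \emph{exact} partition (your measure-zero caveat is harmless but not even needed); the piece over a face $F$ is isometric to $\operatorname{relint}(F)\times\bigl(N(F,P)\cap\lambda B_2^{d-i}\bigr)$ precisely because $N(F,P)\subseteq L_F^\perp$, so Fubini applies with no Jacobian; and homogeneity of the cone produces the factor $\lambda^{d-i}\kappa_{d-i}\,\gamma(F,P)$. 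The only substantive issue --- which you correctly flag yourself --- is the potential circularity in the final identification $V_i(P)=\sum_{\dim F=i}\gamma(F,P)\,\Vol^i(F)$: some references define intrinsic volumes \emph{by} the Steiner formula (in which case the proposition is a definition and the content of your argument is that the Steiner coefficients of a polytope equal the face-angle sums), while others define $V_i$ via Kubota's formula, mixed volumes, or the face-angle sums directly. Whichever convention \cite{ConvPolytope} uses, your decomposition supplies the nontrivial geometric content, so your write-up is complete up to aligning definitions; it also covers the case where $P$ is not full-dimensional, which is relevant since the paper applies the proposition to lower-dimensional cells $X_S$ embedded in $\R^{m-1}$ (via Remark 4.3).
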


\begin{rmk}\label{rmk.intrinsicvolume} Intrinsic volumes are unchanged if $P$ is embedded in some Euclidean space of different dimension.  For $\dim{P}=k\leq d$, $V_k(P)$ is the ordinary $k$-volume of $P$ with respect to the Euclidean structure induced in $\aff(P)$ \cite{ConvPolytope}.\end{rmk}

\begin{prop}\label{prop.Hilbertball>Ball}
$B_2^{d-1}(\frac{\varepsilon}{\sqrt{2}})\subseteq B_H^{d-1}(\varepsilon)$.
\end{prop}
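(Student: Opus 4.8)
The plan is to use the standard identification of $\TP^{d-1}$ with $\R^{d-1}$ through the representative with $x_1=0$. A point of $B_2^{d-1}(\tfrac{\varepsilon}{\sqrt 2})$ is thus a vector $x=(x_2,\dots,x_d)$ with $\euclideannorm{x}^2=\sum_{i=2}^d x_i^2\le\tfrac{\varepsilon^2}{2}$, and, writing $x_1:=0$ for its lift, its Hilbert seminorm is $\hilbertnorm{x}=\max_{i=1,\dots,d}x_i-\min_{i=1,\dots,d}x_i$. I would fix such an $x$ and prove directly that $\hilbertnorm{x}\le\varepsilon$.

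First I would choose indices $p,q\in\{1,\dots,d\}$ attaining, respectively, the maximum and the minimum, so that $\hilbertnorm{x}=x_p-x_q$. If $p=q$ then $\hilbertnorm{x}=0$ and the inclusion is trivial, so I may assume $p\ne q$. Because $x_1=0$, the two quantities $x_p^2$ and $x_q^2$ are at most two distinct summands of $\sum_{i=2}^d x_i^2$ (if $p$ or $q$ equals $1$, the corresponding square is $0$), whence $x_p^2+x_q^2\le\euclideannorm{x}^2$.

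The key step is then the elementary inequality $(a-b)^2\le 2(a^2+b^2)$, which holds for all real $a,b$ since it rearranges to $(a+b)^2\ge 0$. Taking $a=x_p$ and $b=x_q$ yields
\[
\hilbertnorm{x}^2=(x_p-x_q)^2\le 2\big(x_p^2+x_q^2\big)\le 2\,\euclideannorm{x}^2\le 2\cdot\frac{\varepsilon^2}{2}=\varepsilon^2,
\]
so that $\hilbertnorm{x}\le\varepsilon$ and $x\in B_H^{d-1}(\varepsilon)$, as desired.

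I do not anticipate any genuine difficulty here; the computation is short. The only subtlety is the projective normalisation, namely that fixing $x_1=0$ places the value $0$ into the list over which the defining maximum and minimum of $\hilbertnorm{x}$ are taken, together with the need to allow $p$ or $q$ to equal $1$ and to dispatch the degenerate case $p=q$. The constant $\tfrac{1}{\sqrt 2}$ is precisely what the factor $2$ in $(a-b)^2\le 2(a^2+b^2)$ requires, and the resulting bound is sharp, being attained when the extreme coordinates are opposite in sign and equal in magnitude with all remaining coordinates zero.
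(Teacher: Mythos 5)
Your proof is correct, and it is logically organised in the opposite direction from the paper's. You argue directly: for $x$ in the Euclidean ball you isolate the two extreme coordinates $x_p$, $x_q$ (allowing the normalising coordinate $x_1=0$ to be one of them) and apply $(x_p-x_q)^2\le 2(x_p^2+x_q^2)\le 2\euclideannorm{x}^2\le\varepsilon^2$. The paper instead works on the boundary of the Hilbert ball: it shows that any $x$ with $\hilbertnorm{x}=\varepsilon$ satisfies $d_2(x,0)\ge \varepsilon/\sqrt{2}$, by writing the extreme coordinates as $a$ and $b=a-\varepsilon$ and minimising $a^2+(\varepsilon-a)^2$ --- the same quadratic fact you use, applied in contrapositive form. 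Since the key inequality is shared, the approaches are close, but your direct version buys two things. First, it avoids the unstated topological step in the paper's argument: to conclude the inclusion from a lower bound on the distance of boundary points, one must note that the Hilbert ball is star-shaped about the origin, so a point of the Euclidean ball lying outside it would force a boundary crossing at distance strictly less than $\varepsilon/\sqrt{2}$. Second, it handles the projective normalisation correctly: because $x_1=0$ enters the max and min, the paper's restatement of $B_H^{d-1}(\varepsilon)$ with extrema taken only over $i=2,\dots,d$ is not literally the definition \eqref{eqn:HilbertGeneratingMatrix}, whereas your case analysis for $p=1$ or $q=1$ addresses exactly this point. (Incidentally, the paper's assertion that $a^2+(\varepsilon-a)^2-\frac{\varepsilon^2}{2}=0$ has the root $a=2\varepsilon$ is a slip: the left-hand side equals $2(a-\frac{\varepsilon}{2})^2$, so the double root is $a=\varepsilon/2$; the inequality it is used to justify of course still holds.)
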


\begin{proof}
Recall that $$B_H^{d-1}(\varepsilon):=\{ x\in\mathbb{TP}^{d-1}: \max_{i=2,\dots,d} (x_i-x_1)- \min_{i=2,\dots,d}(x_i-x_1)\leq \varepsilon\}$$ and we assume without loss of generality that $x_1=0$.
We show that, if $x$ is on the boundary of $B_H^{d-1}(\varepsilon)$, then $d_2(x, 0)\geq \frac{\varepsilon}{\sqrt{2}}$ where $d_2$ denotes the standard Euclidean distance and $0$ is the zero vector, this will prove the claim.

Suppose $x\in\TP^{d-1}$ satisfies $\max_{i=2,\dots,d} (x_i)- \min_{i=2,\dots,d}(x_i)= \varepsilon$.  Let $a=\max_{i=2,\dots,d} (x_i)$ and $b=\min_{i=2,\dots,d}(x_i)$.  Then $d_2(x, 0)\geq \sqrt{a^2+b^2}=\sqrt{a^2+(\varepsilon-a)^2}.$
 Now, the equation $a^2+(\varepsilon-a)^2-\frac{\varepsilon^2}{2}=0$ has one root, $a=2\varepsilon$ which implies that $a^2+(\varepsilon-a)^2\geq\frac{\varepsilon^2}{2}$.  Therefore $d_2(x, 0)\geq\sqrt{\frac{\varepsilon^2}{2}}.$ 
\end{proof}

\begin{prop}Let $A\in\Z^{m\times n}$ with $\troprank(A)=k$. Let $P=\tconv(A)\subset\mathbb{TP}^{m-1}$ and $X_T\subseteq\TP^{k-1}$  be defined by \eqref{eqn:XT}Then 

$$ \Vol^{m-1}(P+B^{m-1}_H(\varepsilon))\geq \bigg(\frac{\varepsilon}{\sqrt{2}}\bigg)^{m-k}\kappa_{m-k} k\radius(X_T)^{k-1}.$$
\end{prop}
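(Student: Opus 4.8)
The plan is to lower bound the volume of the Hilbert outer parallel body by restricting attention to a single maximal dimensional cell of $P$ and estimating the volume contributed by thickening that cell in the $m-k$ directions transverse to it. First I would pick a bounded cell $X_S$ of dimension $k-1$ of the polyhedral complex of $P$, so that $S\in\sS^{k-1}$; such a cell exists because $\troprank(A)=k$ implies $\dim P=k-1$ by Theorem~\ref{thm:trank.dim}. Since $X_S+B^{m-1}_H(\varepsilon)\subseteq P+B^{m-1}_H(\varepsilon)$, it suffices to bound $\Vol^{m-1}(X_S+B^{m-1}_H(\varepsilon))$ from below, and monotonicity of volume gives the reduction to one cell.

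The key idea is to replace the Hilbert ball by the inscribed Euclidean ball. By Proposition~\ref{prop.Hilbertball>Ball}, $B_2^{m-1}(\varepsilon/\sqrt{2})\subseteq B_H^{m-1}(\varepsilon)$, so $X_S+B_2^{m-1}(\varepsilon/\sqrt{2})\subseteq X_S+B_H^{m-1}(\varepsilon)$, and hence it suffices to bound the volume of the classical outer parallel body $X_S+B_2^{m-1}(\varepsilon/\sqrt{2})$ from below. Here I would invoke the Steiner-type formula of Proposition~\ref{prop.intrinsicvolumesum} with $d=m-1$, $\lambda=\varepsilon/\sqrt{2}$, and $P$ replaced by the polytope $X_S$: this expands $\Vol^{m-1}(X_S+\lambda B_2^{m-1})$ as $\sum_{i=0}^{m-1}\lambda^{m-1-i}\kappa_{m-1-i}V_i(X_S)$. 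Since all terms are nonnegative, I can discard all but the single term $i=k-1$, retaining
\[
\Vol^{m-1}\!\left(X_S+\tfrac{\varepsilon}{\sqrt{2}}B_2^{m-1}\right)\geq \left(\tfrac{\varepsilon}{\sqrt{2}}\right)^{m-k}\kappa_{m-k}\,V_{k-1}(X_S).
\]

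The final step is to identify the intrinsic volume $V_{k-1}(X_S)$ with the genuine $(k-1)$-dimensional volume of $X_S$ and then bound it below using the earlier cell estimates. By Remark~\ref{rmk.intrinsicvolume}, since $\dim X_S=k-1$, the top intrinsic volume $V_{k-1}(X_S)$ equals the ordinary $(k-1)$-volume $\Vol^{k-1}(X_S)$ computed in $\aff(X_S)$; this is exactly the point where I must be careful, because $X_S$ sits as a lower dimensional object in $\R^{m-1}$ and the intrinsic volume is precisely the dimension-independent notion that captures its true $(k-1)$-volume. Then Corollary~\ref{cor:VolandIntPointsX_S(usingX_T)}(1) gives $\Vol^{k-1}(X_S)\geq k\,\radius(X_T)^{k-1}$, where $X_T\subseteq\TP^{k-1}$ is the full dimensional cell tropically isomorphic to $X_S$ defined by \eqref{eqn:XT}. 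Combining these yields the claimed bound
\[
\Vol^{m-1}(P+B^{m-1}_H(\varepsilon))\geq \left(\tfrac{\varepsilon}{\sqrt{2}}\right)^{m-k}\kappa_{m-k}\,k\,\radius(X_T)^{k-1}.
\]
I expect the main conceptual obstacle to be the correct matching of the intrinsic volume $V_{k-1}$ with the intrinsic $(k-1)$-volume of the lower dimensional cell $X_S$ and confirming that the isomorphism $\psi$ of Proposition~\ref{prop:VolandIntPointsBoundXSbyXT} does not decrease this volume; everything else is a nonnegativity-and-monotonicity argument that is routine.
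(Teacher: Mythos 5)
Your proposal is correct and follows essentially the same route as the paper's own proof: select a $(k-1)$-dimensional cell $X_S$, pass to the inscribed Euclidean ball $B_2^{m-1}(\varepsilon/\sqrt{2})\subseteq B_H^{m-1}(\varepsilon)$, apply the Steiner formula of Proposition~\ref{prop.intrinsicvolumesum}, keep only the $i=k-1$ term, identify $V_{k-1}(X_S)$ with $\Vol^{k-1}(X_S)$ via Remark~\ref{rmk.intrinsicvolume}, and conclude with Corollary~\ref{cor:VolandIntPointsX_S(usingX_T)}(1). The concern you flag at the end about the isomorphism $\psi$ is already resolved by the cited corollary, so no additional work is needed.
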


\proof Since the tropical rank of $P$ is $k$ we have that the dimension of $P$ is $k-1$ and there is a bounded cell, $X_S$, of $P$ of dimension $k-1$. Observe that $$P+\HB\supseteq X_S+\HB\supseteq X_S+\TBall$$ using Proposition \ref{prop.Hilbertball>Ball} and therefore,  \begin{align*}\Vol^{m-1}(P+\HB)&\geq \Vol^{m-1}(X_S+\TBall).\end{align*}
To complete the proof, we bound the latter expression.
Let $X_S\simeq X_T$ where $X_T\subseteq\TP^{k-1}$  is defined by \eqref{eqn:XT}.
Then, using Proposition \ref{prop.intrinsicvolumesum}, Remark \ref{rmk.intrinsicvolume} and Corollary \ref{cor:VolandIntPointsX_S(usingX_T)},
\begin{align*}\Vol^{m-1}(X_S+\TBall)&=\sum_{i=0}^{m-1}\bigg(\frac{\varepsilon}{\sqrt{2}}\bigg)^{m-1-i}\kappa_{m-1-i}V_i(X_S)
\\&\geq \bigg(\frac{\varepsilon}{\sqrt{2}}\bigg)^{m-1-(k-1)}\kappa_{m-1-(k-1)}V_{k-1}(X_S)
\\&=\bigg(\frac{\varepsilon}{\sqrt{2}}\bigg)^{m-k}\kappa_{m-k}\Vol^{k-1}(X_S)
\\&\geq \bigg(\frac{\varepsilon}{\sqrt{2}}\bigg)^{m-k}\kappa_{m-k} k\radius(X_T)^{k-1}.
\end{align*}
\endproof

\section{Hardness results}\label{sec:Hardness}

In this section we prove that approximating the volume of a tropical polytope is hard.  We first show that computing the volume of a tropical polytope is NP-hard by a reduction from the problem of finding the tropical rank of a matrix.  We then extend our proof technique to prove hardness of approximation.

\subsection{Tropical rank}
 
Let us recall the following problem.

\begin{problem} Tropical rank (TROPRANK)

Input:  $A\in\Q^{m\times n}$, $k\in \mathbb{N}$.

Question:  Determine whether $\troprank(P)\geq k$.\end{problem}

Deciding whether a square matrix has maximal tropical rank is polynomially solvable \cite{BH85} but in general the problem of determining the tropical rank is NP-hard for $\{0,\infty\}$ or for $\{0,1\}$ matrices as proved by Kim and Roush \cite{KR05}. 
A detailed explanation of their proof appears in the dissertation \cite{BormanThesis}.  
Shitov gave another proof in \cite{Sh13}. 
This implies that computing the dimension of a tropical
polytope defined by vertices is NP-hard. 
Grigoriev and Podolskii \cite{Grigoriev} proved 
that computing the dimension of a tropical prevariety
is NP-hard, this entails that computing the dimension
of a tropical polytope defined by inequalities
(instead of vertices) is also NP-hard. 

\begin{thm}[{\cite{KR05}, see also~\cite{Sh13}}] Determining whether an $n\times n$ matrix with entries in $\{0,1\}$ has tropical rank at least $k$ is NP-complete.  \end{thm}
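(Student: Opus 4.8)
The plan is to prove the two halves of NP-completeness separately: membership in NP, and NP-hardness.

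For membership, I would exhibit a polynomially checkable certificate for the property $\troprank(A)\geq k$. By definition this holds if and only if some $k\times k$ submatrix $M$ of $A$ is tropically nonsingular, so a certificate is simply a choice of $k$ rows and $k$ columns. To verify it in polynomial time, I would compute $\tper(M)=\min_\pi \sum_i M_{i\pi(i)}$ by solving the associated min-plus optimal assignment problem with the Hungarian algorithm, and then check that the optimal permutation is unique. Uniqueness can be tested efficiently: having found one optimal assignment of value $\tper(M)$, for each of its $k$ edges I forbid that edge (set the corresponding entry to $+\infty$) and re-solve; the minimum in $\tper(M)$ is attained at least twice precisely when some such restricted problem still achieves the value $\tper(M)$. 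As all of these steps are polynomial, the problem lies in NP.

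For NP-hardness, I would give a polynomial-time reduction from a known NP-complete problem. The useful reformulation is combinatorial: for a $\{0,1\}$ matrix, $\tper(M)$ equals $k$ minus the size of a maximum matching in the bipartite ``zero graph'' whose edges are the positions where $M_{ij}=0$, and $M$ is tropically nonsingular exactly when the minimum-weight permutation is unique. Consistently with the fact that maximal tropical rank of a \emph{square} matrix is decidable in polynomial time \cite{BH85} (unique optimal matchings being detectable efficiently), the genuine source of hardness must be the freedom to select \emph{which} $k$ rows and $k$ columns form the submatrix. I would therefore design gadgets encoding the clauses and variables (or vertices and edges) of the source instance into blocks of a $0/1$ matrix, arranged so that a choice of $k$ rows and columns yielding a tropically nonsingular submatrix corresponds exactly to a satisfying assignment, while every non-solution forces a tie in the optimal assignment, and hence tropical singularity.

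The main obstacle is the gadget construction in the hardness direction. Tropical nonsingularity is a \emph{uniqueness} condition, and combinatorial gadgets naturally introduce symmetries that create ties in the optimal assignment, i.e.\ tropical singularity. The delicate part is to arrange the $\{0,1\}$ entries so that exactly one optimal permutation survives precisely on yes-instances, while keeping all entries in $\{0,1\}$ and the dimension polynomial in the input size. This is the technical heart of the arguments of Kim and Roush \cite{KR05} and of Shitov \cite{Sh13}; the NP-membership half, by contrast, is routine.
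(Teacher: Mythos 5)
Your NP-membership argument is sound and essentially routine: a tropically nonsingular submatrix is a polynomial-size certificate, its tropical permanent is a min-plus assignment problem solvable by the Hungarian algorithm, and your edge-forbidding test for uniqueness of the optimal permutation is a correct polynomial-time procedure (this is in effect why maximal tropical rank of a \emph{square} matrix is decidable in polynomial time, as in \cite{BH85}). One small imprecision: the definition of tropical rank only guarantees a nonsingular $r\times r$ minor for some $r\geq k$, so asserting that $\troprank(A)\geq k$ holds ``by definition'' if and only if some $k\times k$ submatrix is nonsingular silently uses the (true, but nontrivial) fact that nonsingular minors exist in every size up to the tropical rank; alternatively, just let the certificate be the $r\times r$ minor itself.

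The genuine gap is the NP-hardness half, which is the entire substance of the theorem. Your reformulation of $\tper(M)$ for $\{0,1\}$ matrices via maximum matchings in the zero graph is correct, and you correctly locate the source of hardness in the freedom to choose \emph{which} rows and columns form the submatrix. But you then say you ``would design gadgets'' so that nonsingular selections correspond exactly to satisfying assignments, and no such gadgets are constructed; you explicitly defer this to \cite{KR05} and \cite{Sh13}. Without the construction there is no reduction, so what you have written is a citation of the hardness result rather than a proof of it. For what it is worth, this places you in the same position as the paper itself: the paper does not prove this theorem either, but imports it from Kim--Roush and Shitov (with a pointer to the dissertation of Borman for a detailed exposition), so your proposal cannot be faulted relative to the paper's treatment --- only relative to the claim of being a proof.
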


\subsection{The decision version of computing the volume}
We define the decision version of the problem of finding the volume of a tropical polytope as follows.

\begin{problem} Tropical Volume (VOLUME(TP))

Input:  $A\in\Q^{m\times n}$, $P=\tconv(A)$, $\alpha\in\Q$.

Question:  Is $\Vol^{m-1}(P)\geq\alpha$?\end{problem}
We will use the following results and definitions.
For two problems $P_1, P_2$, let $P_1\le_p P_2$ mean that there exists a Karp reduction (see \cite{Karp}) from $P_1$ to $P_2$.
Note that, throughout, all logarithms are to the base $2$.

Let $A\in\Z^{m\times n}$ with $\troprank(A)=k$ and $P=\tconv(A)\subset\mathbb{TP}^{m-1}$.  From Theorem \ref{thm:VolBounds}, and using that $\radius(X_T)\geq\frac{1}{k}$ by Proposition \ref{prop:MCMofFullDimensionalCell}, we have that $$\C{-}{k}\varepsilon^{m-k}\leq \Vol^{m-1}(P+B^{m-1}_H(\varepsilon))\leq \C{+}{k}\varepsilon^{m-k}$$ where $W=\max_{i,j}|a_{ij}|$, $\varepsilon\leq R$ and $R\leq 4W$,

\begin{align*}\C{-}{k}&:=\frac{ k\left(\frac{1}{k}\right)^{k-1}\kappa_{m-k}}{2^\frac{m-k}{2}}=\frac{\pi^{\frac{m-k}{2}}}{\Gamma(\frac{m-k}{2}+1)2^\frac{m-k}{2}k^{k-2}}\text{ and}\\ \C{+}{k}&:=2^m3^{m+n-2}(2R\sqrt{m-1})^{k-1}.\end{align*}

Let $\barep$ satisfy \begin{align}\log(\barep)&=\min_k\bigg[ \log(C^-_{m,n,k,W})-\log (C^+_{m,n,k-1,W})\bigg],\text{ equivalently} \label{eqn:barepsilon} \\ \barep&:=\min_{k=1,\dots,m} \frac{C^-_{m,n,k,W}}{C^+_{m,n,k-1,W}}.\nonumber\end{align}

We will assume without loss of generality that $m\geq 2$ and $R\geq 1$.  This is valid since $R\in\mathbb{N}\cup\{0\}$ because $A$ is an integer matrix and, if $R=0$ then, using that $a_{1j}=0$ for all $j$, $$(\forall i\in \{2,\dots,m\}) \max_{j=\{1,\dots,n\}}a_{ij}=\min_{j\in\{1,\dots,n\}}a_{ij}$$ which implies that $\tconv(A)$ is a single vertex.

\subsection{Disjoint intervals}

We next verify a key technical point: for any $\varepsilon<\min(\barep, R)$, the intervals described by the upper and lower bounds on the volume of $P+B_H^{m-1}(\varepsilon)$ for each value of $k$ are disjoint.  That is, the bounds in Theorem \ref{thm:VolBounds} are ordered as shown in Figure \ref{figure:Intervals}. 

\begin{prop}\label{prop:IntervalsOrderedNestled}For all $\varepsilon<\min(\barep, R)$, i.e.\ such that $\log(\varepsilon)<\log(\barep)$, 
\begin{align*}(\forall k\in\{2,\dots,m-1\}) \ &\log(\C{+}{k-1}\varepsilon^{m-(k-1)})<\log(\C{-}{k}\varepsilon^{m-k})\\&<\log(\C{+}{k}\varepsilon^{m-k})<\log(\C{-}{k+1}\varepsilon^{m-(k+1)}).\end{align*}
\end{prop}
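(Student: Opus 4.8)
The plan is to pass to logarithms, since each of the three claimed inequalities then becomes affine in $\log\varepsilon$, and to split the chain into its two ``outer'' inequalities, which will follow directly from the definition of $\barep$, and its ``middle'' inequality, which is $\varepsilon$-free and reduces to a comparison of the two constants. Concretely, the leftmost inequality $\log(\C{+}{k-1}\varepsilon^{m-(k-1)})<\log(\C{-}{k}\varepsilon^{m-k})$, after subtracting the common term $(m-k)\log\varepsilon$ from both sides, is equivalent to $\log\varepsilon<\log\C{-}{k}-\log\C{+}{k-1}$. Likewise the rightmost inequality $\log(\C{+}{k}\varepsilon^{m-k})<\log(\C{-}{k+1}\varepsilon^{m-(k+1)})$ reduces, after cancelling $(m-k-1)\log\varepsilon$, to $\log\varepsilon<\log\C{-}{k+1}-\log\C{+}{k}$. (The hypothesis $\varepsilon<R$ is only what makes the underlying two-sided bound of Theorem~\ref{thm:VolBounds} applicable; the algebra here uses only $\log\varepsilon<\log\barep$.)

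Both outer inequalities are then immediate. By definition $\log\barep=\min_{j=1,\dots,m}\bigl[\log\C{-}{j}-\log\C{+}{j-1}\bigr]$, so taking $j=k$ gives $\log\barep\le\log\C{-}{k}-\log\C{+}{k-1}$ and taking $j=k+1$ gives $\log\barep\le\log\C{-}{k+1}-\log\C{+}{k}$; since $k\in\{2,\dots,m-1\}$ both indices $k$ and $k+1$ lie in $\{1,\dots,m\}$, so these terms genuinely occur in the minimum. Combining each with the hypothesis $\log\varepsilon<\log\barep$ yields the leftmost and rightmost inequalities at once. This is precisely why $\barep$ was defined as that minimum.

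For the middle inequality $\log(\C{-}{k}\varepsilon^{m-k})<\log(\C{+}{k}\varepsilon^{m-k})$, the power of $\varepsilon$ is the same on both sides, so it is equivalent to $\C{-}{k}<\C{+}{k}$, which I would establish by crude uniform bounds. Writing $\C{-}{k}=k^{2-k}\kappa_{m-k}\,2^{(k-m)/2}$, I would use that $k^{2-k}\le1$ for $k\ge1$, that the unit-ball volumes satisfy $\kappa_d\le\kappa_5<6$ for every $d\ge0$ (the sequence $\kappa_d$ is unimodal and peaks at $d=5$ with $\kappa_5=\tfrac{8\pi^2}{15}$), and that $2^{(k-m)/2}\le1$ since $k\le m-1<m$; these give $\C{-}{k}<6$. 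On the other side, invoking the normalisations $R\ge1$ and $m\ge2$ fixed just before the proposition gives $2R\sqrt{m-1}\ge2$, whence $\C{+}{k}=2^m3^{m+n-2}(2R\sqrt{m-1})^{k-1}\ge 2^m3^{m+n-2}\ge12>6$. Thus $\C{-}{k}<\C{+}{k}$.

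The only genuine care lies in this middle step: one must record a clean uniform upper bound on the ball volumes $\kappa_{m-k}$ and make sure the $R\ge1$, $m\ge2$ normalisations are actually used so that $\C{+}{k}$ is bounded below strictly away from $\C{-}{k}$. The two outer inequalities, by contrast, are essentially tautological once the definition of $\barep$ as a minimum over $j$ is unwound, so no further estimate is needed there.
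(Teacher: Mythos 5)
Your proof is correct. For the two outer inequalities your argument is the paper's own: unwinding the definition of $\barep$ as a minimum over $j\in\{1,\dots,m\}$ and instantiating it at $j=k$ and $j=k+1$ is exactly how the paper derives its inequality \eqref{eqn:IntervalsDisjoint} and then applies it at consecutive indices. You diverge only on the middle inequality $\C{-}{k}<\C{+}{k}$. The paper gets it in one line by citing Theorem~\ref{thm:VolBounds}: the two quantities sandwich $\Vol^{m-1}(P+B_H^{m-1}(\varepsilon))$ for a rank-$k$ polytope, so the lower bound cannot exceed the upper. You instead estimate both constants explicitly: $\C{-}{k}=k^{2-k}\kappa_{m-k}2^{(k-m)/2}<6$ (since $k^{2-k}\le 1$, $\kappa_d\le\kappa_5=\frac{8\pi^2}{15}<6$, and $2^{(k-m)/2}\le 1$), while $\C{+}{k}\ge 2^m3^{m+n-2}\ge 12$ using the standing normalisations $R\ge 1$, $m\ge 2$. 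Your extra arithmetic buys two small but real improvements. First, it yields the strict inequality that the proposition actually asserts, whereas the paper's \eqref{eqn:VolumeBoundsOrder} is stated only with the non-strict $\le$. Second, it holds for every $k\in\{2,\dots,m-1\}$ unconditionally, whereas the sandwich argument implicitly presupposes, for each such $k$, the existence of a polytope of tropical rank exactly $k$ with the given parameters $W$ and $R$ — not automatic, for instance when $n<m-1$ caps the attainable rank. All your numerical claims check out: the sequence $\kappa_d$ is indeed unimodal with maximum $\kappa_5=8\pi^2/15\approx 5.26$, and $2R\sqrt{m-1}\ge 2$ under the stated normalisations, so $\C{-}{k}<6<12\le\C{+}{k}$.
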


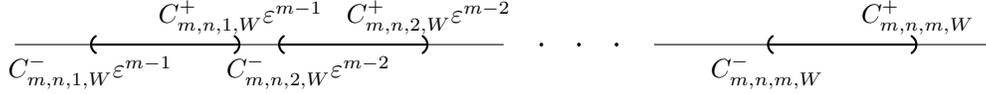
\begin{figure}\begin{center} 
\begin{tikzpicture}
\draw (-1,0)--(5.5,0) (7.5,0)--(12,0);
\draw[(-), thick] (0,0)--(2,0);
\draw (0,0) node[below] {\small $\C{-}{1}\varepsilon^{m-1}$} (2,0) node[above] {\small $\C{+}{1}\varepsilon^{m-1}$};

\draw[(-), thick] (2.5,0)--(4.5,0);
\draw (2.9,0) node[below] {\small $\C{-}{2}\varepsilon^{m-2}$} (4.5,0) node[above] {\small $\C{+}{2}\varepsilon^{m-2}$};

\filldraw (6,0) circle(0.5pt) (6.5,0) circle(0.5pt) (7,0) circle(0.5pt);

\draw[(-), thick] (9,0)--(11,0);
\draw (9,0) node[below] {\small $\C{-}{m}$} (11,0) node[above] {\small $\C{+}{m}$};
\end{tikzpicture}
\end{center}\caption{\small{The intervals representing the possible values of the volume of $P+B_H(\varepsilon)$ for each value of $k$ where $\varepsilon<\barep$.}}\label{figure:Intervals} \end{figure}

\proof First observe that, for any $k$, we have $\log(\C{+}{k}\varepsilon^{m-k})-\log(\C{+}{k-1}\varepsilon^{m-(k-1)})=\log(2R\sqrt{m-1})-\log(\varepsilon)>0.$  Thus \begin{equation}\label{eqn:NestedC+} \dots <\C{+}{k-1}<\C{+}{k}<\C{+}{k+1}<\dots.\end{equation}

By our choice of $\varepsilon$, for all $k$, \begin{equation}\label{eqn:IntervalsDisjoint} \log(\C{+}{k-1}\varepsilon^{m-(k-1)})<\log(\C{-}{k}\varepsilon^{m-k}).\end{equation}

Finally, from Theorem \ref{thm:VolBounds}, \begin{equation}\label{eqn:VolumeBoundsOrder}\log(\C{-}{k}\varepsilon^{m-k})\leq\log(\C{+}{k}\varepsilon^{m-k}). \end{equation}

The result follows from \eqref{eqn:NestedC+}, \eqref{eqn:IntervalsDisjoint} and \eqref{eqn:VolumeBoundsOrder}.
\endproof

\begin{cor}\label{cor:DisjointIntervals}  For $\varepsilon<\min(\barep, R)$ the intervals $[\C{-}{k}\varepsilon^{m-k}, \C{+}{k}\varepsilon^{m-k}]_{k=1,\dots,m}$ are disjoint.
\end{cor}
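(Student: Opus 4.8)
The plan is to deduce Corollary~\ref{cor:DisjointIntervals} directly from Proposition~\ref{prop:IntervalsOrderedNestled}, which already does essentially all the work; the corollary is just a clean restatement in terms of disjointness of closed intervals. First I would fix $\varepsilon<\min(\barep,R)$ and observe that to establish disjointness of the family $[\C{-}{k}\varepsilon^{m-k},\ \C{+}{k}\varepsilon^{m-k}]_{k=1,\dots,m}$, it suffices to show that the intervals, read in the natural order of the index $k$, are arranged strictly left-to-right with no overlap: that is, the right endpoint of the $k$th interval lies strictly below the left endpoint of the $(k+1)$th interval, for each $k\in\{1,\dots,m-1\}$. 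Since these are closed intervals, a strict inequality between consecutive endpoints is exactly what is needed to guarantee that the closures do not meet.

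The key chain of inequalities is supplied by Proposition~\ref{prop:IntervalsOrderedNestled}, which asserts for each $k\in\{2,\dots,m-1\}$ that
\[
\log(\C{+}{k-1}\varepsilon^{m-(k-1)})<\log(\C{-}{k}\varepsilon^{m-k})<\log(\C{+}{k}\varepsilon^{m-k})<\log(\C{-}{k+1}\varepsilon^{m-(k+1)}).
\]
Reindexing, the crucial consecutive comparison is $\log(\C{+}{k}\varepsilon^{m-k})<\log(\C{-}{k+1}\varepsilon^{m-(k+1)})$, valid for each $k\in\{1,\dots,m-1\}$. Exponentiating (the logarithm being strictly increasing), this yields
\[
\C{+}{k}\varepsilon^{m-k}<\C{-}{k+1}\varepsilon^{m-(k+1)},
\]
so the right endpoint of the $k$th interval is strictly less than the left endpoint of the $(k+1)$th. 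Combined with the within-interval ordering $\C{-}{k}\varepsilon^{m-k}\leq\C{+}{k}\varepsilon^{m-k}$ from \eqref{eqn:VolumeBoundsOrder}, this shows the whole sequence of endpoints is strictly increasing across interval boundaries, so no two intervals share a point.

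There is essentially no obstacle here beyond a careful index check at the boundary cases $k=1$ and $k=m$, which the statement of Proposition~\ref{prop:IntervalsOrderedNestled} covers for the interior indices; the endpoint intervals are handled because the relevant consecutive comparison only ever involves one of the strict inequalities from the proposition. Thus the proof is a one-line invocation: the intervals are totally ordered and separated by the strict gaps guaranteed by the previous proposition, hence pairwise disjoint, which is precisely the claim.
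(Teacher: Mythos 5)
Your proposal is correct and matches the paper exactly: the paper offers no separate argument for Corollary~\ref{cor:DisjointIntervals}, treating it as an immediate consequence of Proposition~\ref{prop:IntervalsOrderedNestled}, which is precisely your route (noting that the first and third strict inequalities in the proposition's chain together cover every consecutive gap $(k,k+1)$ for $k=1,\dots,m-1$, and exponentiating). Your spelled-out endpoint bookkeeping is just a more explicit version of what the paper leaves implicit.
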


\subsection{NP-hardness of computing the volume of tropical polytope described by vertices}\label{sec:HardnessVol}

Now we prove NP-hardness of our problems.

\begin{thm}\label{thm:Vol.Intervals.Troprank} Let $A\in\{0,1\}^{m\times n}$ and $P=\tconv(A)\subset\mathbb{TP}^{m-1}$.  Fix $\varepsilon<\min\left(\barep, R\right).$ Then $\troprank(A)=k$ if and only if $$\varepsilon^{m-k}\frac{\kappa_{m-k}}{2^\frac{m-k}{2}k^{k-2}}\leq \Vol^{m-1}(P+B^{m-1}_H(\varepsilon))\leq 2^m3^{m+n-2}(2\sqrt{m-1})^{k-1}\varepsilon^{m-k}.$$
\end{thm}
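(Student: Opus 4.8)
The plan is to apply Theorem~\ref{thm:VolBounds} directly and then show that the range of volumes it produces, as a function of the unknown tropical rank, is captured exactly by the stated double inequality. First I would observe that for a matrix $A\in\{0,1\}^{m\times n}$ we have $W=\max_{i,j}|a_{ij}|=1$, and after the standing normalization $a_{1j}=0$ we get $R=\max_{i}\hilbertnorm{A_{i\cdot}}\leq 1$, so in fact $R=1$ (the case $R=0$ being excluded as it forces $P$ to be a single point). With $R=1$ the bounds in Theorem~\ref{thm:VolBounds}, combined with the estimate $\radius(X_T)\geq \tfrac{1}{k}$ from Proposition~\ref{prop:MCMofFullDimensionalCell}, collapse to
\[
\varepsilon^{m-k}\frac{\kappa_{m-k}}{2^{\frac{m-k}{2}}k^{k-2}}\leq \Vol^{m-1}(P+B^{m-1}_H(\varepsilon))\leq 2^m 3^{m+n-2}(2\sqrt{m-1})^{k-1}\varepsilon^{m-k},
\]
which are precisely $\C{-}{k}\varepsilon^{m-k}$ and $\C{+}{k}\varepsilon^{m-k}$ evaluated at $R=1$. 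This establishes the forward direction ($\troprank(A)=k$ implies the inequality) as an immediate specialization.

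For the converse, the key point is uniqueness: I would argue that if the displayed inequality holds for the particular value $k$, then $k$ must equal $\troprank(A)$. Let $k_0:=\troprank(A)$ denote the true rank. By the forward direction applied to $k_0$, the volume $\Vol^{m-1}(P+B^{m-1}_H(\varepsilon))$ lies in the interval $[\C{-}{k_0}\varepsilon^{m-k_0},\,\C{+}{k_0}\varepsilon^{m-k_0}]$. Since $\varepsilon<\min(\barep,R)$, Corollary~\ref{cor:DisjointIntervals} tells us that the intervals $[\C{-}{k}\varepsilon^{m-k},\,\C{+}{k}\varepsilon^{m-k}]$ for $k=1,\dots,m$ are pairwise disjoint. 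Hence the volume lies in exactly one of these intervals, namely the one indexed by $k_0$. If the stated inequality also places the volume in the interval indexed by $k$, disjointness forces $k=k_0=\troprank(A)$.

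The main obstacle, and the only genuinely substantive step, is verifying that the hypothesis $\varepsilon<\min(\barep,R)$ is compatible with the hypothesis $0<\varepsilon\leq R\sqrt{m-1}$ required by Theorem~\ref{thm:VolBounds}, and confirming that the disjointness machinery of Corollary~\ref{cor:DisjointIntervals} genuinely applies here. Since $R=1$, the constraint $\varepsilon<\min(\barep,1)$ implies $\varepsilon<1\leq\sqrt{m-1}$ for $m\geq 2$, so Theorem~\ref{thm:VolBounds} is indeed applicable. I would also remark that the collapse of the general bounds to the displayed $\{0,1\}$-form is purely a matter of substituting $R=1$ and $\radius(X_T)^{k-1}\geq(1/k)^{k-1}$ into $\C{-}{k}$ and $\C{+}{k}$; the arithmetic of recognizing $k\cdot(1/k)^{k-1}=k^{2-k}=1/k^{k-2}$ in the lower bound is routine. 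Thus the theorem is essentially an assembly of Theorem~\ref{thm:VolBounds}, Proposition~\ref{prop:MCMofFullDimensionalCell}, and Corollary~\ref{cor:DisjointIntervals}, with the disjointness of the volume-intervals doing the real work of making the rank recoverable from the volume.
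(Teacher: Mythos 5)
Your proposal is correct and follows essentially the same route as the paper's own proof: specialize Theorem~\ref{thm:VolBounds} to $R=1$ using $\radius(X_T)\geq \frac{1}{k}$ from Proposition~\ref{prop:MCMofFullDimensionalCell} (this is exactly how $\C{-}{k}$ and $\C{+}{k}$ are defined), then invoke the disjointness of the intervals from Corollary~\ref{cor:DisjointIntervals} to recover $k$ uniquely. The paper states this very tersely; your write-up merely makes explicit the uniqueness argument for the converse and the compatibility of $\varepsilon<\min(\barep,R)$ with the hypotheses of Theorem~\ref{thm:VolBounds}, both of which are implicit in the paper.
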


\proof  Note that $R=1$ for matrices with entries in $\{0,1\}$.
From Theorem \ref{thm:VolBounds} we know that the volume of $P+B^{m-1}_H(\varepsilon)$ will lie in an interval 
$[\C{-}{k}\varepsilon^{m-k}, \C{+}{k}\varepsilon^{m-k}].$  But, from Corollary \ref{cor:DisjointIntervals} these intervals are disjoint for each value of $k$.  The result follows.
\endproof

\begin{thm}VOLUME(TP) is NP-hard.
\end{thm}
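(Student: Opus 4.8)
The plan is to exhibit a polynomial-time Karp reduction from TROPRANK, which is NP-hard by the theorem of Kim and Roush, to (a logarithmic number of instances of) VOLUME(TP). The engine of the reduction is Theorem~\ref{thm:Vol.Intervals.Troprank}: for a matrix $A\in\{0,1\}^{m\times n}$ with $P=\tconv(A)$ and any $\varepsilon<\min(\barep,R)$, the value $\troprank(A)=k$ is characterised by the membership of $\Vol^{m-1}(P+B^{m-1}_H(\varepsilon))$ in the $k$-th of the $m$ pairwise disjoint intervals of Corollary~\ref{cor:DisjointIntervals}. Thus, knowing the tropical volume of the single tropical polytope $P+B^{m-1}_H(\varepsilon)$ pins down the tropical rank of $A$ exactly.

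First I would fix a suitable rational $\varepsilon$. Since for $\{0,1\}$-matrices $R=1$, it suffices to choose any rational $\varepsilon<\min(\barep,1)$; by the closed-form expressions for $\C{-}{k}$ and $\C{+}{k}$ the quantity $\log(\barep)$ in~\eqref{eqn:barepsilon} is a minimum over $k$ of differences of logarithms of the explicit combinatorial factors, all of which are polynomially bounded in the input size $m,n,W$ (here $W=1$). Hence a rational $\varepsilon$ strictly below this threshold can be written down with polynomially many bits. Next I would observe, by Corollary~\ref{prop:Outer.Body.Poloytope} and Proposition~\ref{prop:HOPB.Generators}, that $P+B^{m-1}_H(\varepsilon)$ is itself a tropical polytope whose generating matrix $\{A_j+H_k\}$ has rational entries computable in polynomial time. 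This is the instance we feed to the volume oracle.

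The reduction then proceeds as follows. Given an instance $(A,k_0)$ of TROPRANK, construct the generator list of $P+B^{m-1}_H(\varepsilon)$ as above, and query VOLUME(TP) with thresholds determined by the interval endpoints: concretely, to decide whether $\troprank(A)\ge k_0$, it suffices to test whether $\Vol^{m-1}(P+B^{m-1}_H(\varepsilon))\ge \C{-}{k_0}\varepsilon^{m-k_0}$, since by the ordering in Proposition~\ref{prop:IntervalsOrderedNestled} and the monotonicity~\eqref{eqn:NestedC+} the volume lands at or above this value exactly when the true rank is at least $k_0$. This single query (or, if one prefers a clean ``equals $k$'' determination, a binary search using $O(\log m)$ queries, each comparing the volume against interval endpoints) is a legitimate Karp reduction, as each threshold $\C{-}{k_0}\varepsilon^{m-k_0}$ is a rational number of polynomial bit-size.

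The main obstacle to watch is the \emph{bit-complexity} of the endpoints: $\C{-}{k}$ contains the factor $\kappa_{m-k}=\pi^{(m-k)/2}/\Gamma(\tfrac{m-k}{2}+1)$, which is irrational, so the comparison threshold $\alpha$ must be a rational VOLUME(TP) input. I would resolve this by replacing each irrational endpoint with a nearby rational separating value lying strictly inside the corresponding gap between consecutive intervals; such a gap has width bounded below by a quantity whose logarithm is polynomial (this is precisely the content of the disjointness, via $\log\barep$ being polynomially bounded), so a separating rational with polynomially many bits exists and can be computed by taking sufficiently accurate rational approximations of $\pi$ and the Gamma factors. Once the separating rationals are fixed, correctness is immediate from Corollary~\ref{cor:DisjointIntervals}, and every step—building the generators, forming the thresholds, issuing the queries—runs in polynomial time, completing the reduction and establishing NP-hardness of VOLUME(TP).
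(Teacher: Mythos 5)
Your proposal is correct and follows essentially the same route as the paper: reduce from TROPRANK by forming the Hilbert outer parallel body $P+B^{m-1}_H(\varepsilon)$ with a polynomial-bit rational $\varepsilon<\min(\barep,R)$, whose generators $A_j+H_k$ are computable in polynomial time, and decide $\troprank(A)\geq k_0$ with a single VOLUME(TP) query against a rational threshold $\alpha$ placed in the gap $(\C{+}{k_0-1}\varepsilon^{m-(k_0-1)},\ \C{-}{k_0}\varepsilon^{m-k_0})$, exactly as in the paper's proof. Your extra care about the irrationality of the interval endpoints (via $\pi$ and the Gamma factors) is a welcome elaboration of a step the paper only asserts, namely that such a rational $\alpha$ can be found in polynomial time.
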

\proof
Let $A\in\{0,1\}^{m\times n}$, $k$ be an instance of TROPRANK.
We construct an instance of VOLUME(TP) as follows.
Let $P=\tconv(A)\subset\TP^{m-1}$ and choose $\varepsilon\in\Q$ with $\varepsilon<\min\left(\barep, R\right).$  Note for example we could have 
\begin{align}\varepsilon=&\frac{1}{2^{\frac{m-1}{2}} 2^m3^{m+n-2}(2mR\lceil\sqrt{m-1}\rceil)^{m-2}\lceil\frac{m-1}{2}\rceil!}\label{eqn:ChoiceEpsilon}\\
\leq &\min_{k=1,\dots,m}\frac{\pi^{\frac{m-k}{2}}}{2^{\frac{m-k}{2}} 2^m3^{m+n-2}(2kR\sqrt{m-1})^{k-2}\Gamma(\frac{m-k}{2}+1)}=\barep.\nonumber\end{align} 
Note that ${\varepsilon}$ has a polynomial number of bits since $\log(\varepsilon)=\mathcal{O}(m\log(m)+m+n+m\log(W))$.

Let $M$ be the matrix with columns $A_j+H_k$ for all $j\in\{1,\dots,n\}$, $k\in\{1,\dots,m\}$ where $H$ is given by \eqref{eqn:MatrixHforHilbertBall} and $\tconv(H)=B_H^{m-1}(\varepsilon)$.  By Proposition \ref{prop:HOPB.Generators}, $\tconv(M)=P+B_H^{m-1}(\varepsilon)$ and $M\in\Q^{m\times mn}$.
Further, we can choose some $\alpha\in\Q$ such that $\alpha\in(\C{+}{k-1}\varepsilon^{m-(k-1)}, \C{-}{k}\varepsilon^{m-k})$ in polynomial time.

We consider the volume of the tropical polytope $P+B^{m-1}_H(\varepsilon)$. 
Observe that, by Proposition \ref{prop:IntervalsOrderedNestled}, $$\Vol^{m-1}(P+B^{m-1}_H(\varepsilon))\geq \alpha\Leftrightarrow \Vol^{m-1}(P+B^{m-1}_H(\varepsilon))\geq \C{-}{k}\varepsilon^{m-k}.$$  

Finally, from Theorem \ref{thm:Vol.Intervals.Troprank}, $\troprank(A)\geq k$ if and only if $\Vol^{m-1}(P+B^{m-1}_H(\varepsilon))\geq\alpha$ if and only if $M, \alpha$ is an instance of VOLUME(TP).
Therefore TROPRANK $\le_p$ VOLUME(TP).
\endproof

An immediate consequence of this is the following.
\begin{cor}\label{cor:Volume.Vertices.NPHard} Computing the volume of a tropical polytope described by vertices is NP-hard.
\end{cor}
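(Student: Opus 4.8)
The plan is to derive this directly from the preceding theorem, which establishes that the decision problem VOLUME(TP) is NP-hard. The idea is that the decision problem reduces in polynomial time---by a trivial Turing reduction---to the task of actually computing the volume, so any efficient algorithm for the latter would yield an efficient algorithm for the former.

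More precisely, suppose we are given an instance $(A,\alpha)$ of VOLUME(TP), with $A\in\Q^{m\times n}$, $P=\tconv(A)$, and $\alpha\in\Q$. If we had a polynomial-time algorithm computing $\Vol^{m-1}(P)$, we would run it to obtain the exact value of the volume and then compare this value against $\alpha$, answering ``yes'' precisely when $\Vol^{m-1}(P)\geq\alpha$. This resolves the decision question exactly.

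The one point that needs checking is that the comparison step is genuinely polynomial, which holds because the volume is a rational number of polynomially bounded bit-size. Indeed, by Proposition \ref{prop:PolytopeCellSummary} each bounded cell $X_S$ of $P$ is a polytope cut out by inequalities with rational coefficients, so its vertices are rational and its Euclidean volume $\Vol^{m-1}(X_S)$ is rational; summing over the full-dimensional cells shows that $\Vol^{m-1}(P)=\sum_{S\in\sS^{m-1}}\Vol^{m-1}(X_S)$ is rational as well. Comparing two rationals takes time polynomial in their encoding length.

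Consequently, a polynomial-time algorithm for computing the volume of a tropical polytope given by vertices would place VOLUME(TP) in P, contradicting the NP-hardness of VOLUME(TP) established above (unless P$=$NP). Hence computing the volume is itself NP-hard. There is no real obstacle here: the content is entirely carried by the previous theorem, and the only verification required---that the volume admits a polynomial-size rational representation---is routine.
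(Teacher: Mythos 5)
Your proof is correct and follows exactly the route the paper intends: the paper derives this corollary as an ``immediate consequence'' of the NP-hardness of the decision problem VOLUME(TP), which is precisely your reduction of the decision problem to exact volume computation. Your added check on the rationality and bit-size of the volume is fine but not even necessary, since any polynomial-time algorithm can only output a value of polynomial bit-size, making the comparison with $\alpha$ automatically polynomial.
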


\subsection{Hardness of approximating the volume of tropical polytopes described by vertices}\label{subsec:APPROXVOL}

The same proof method can be used to show something stronger: that even approximating the volume of a tropical polytope is hard.  

An \emph{approximation algorithm with factor} $\alpha$ for the volume of a tropical polytope is an algorithm which, on input $P=\tconv(A)$, outputs $f(P)$ satisfying $\alpha^{-1}\Vol^{m-1}(P)\leq f(P)\leq \alpha \Vol^{m-1}(P).$

Let $\alpha=2^{\poly(m,n)}$ where $\poly(m,n)$ is some polynomial function in $m$ and $n$.  Throughout this section $f(P)$ is defined to be the output of an approximation algorithm with factor $\alpha$ for the volume of a tropical polytope $P$. 
An immediate consequence of Theorem \ref{thm:VolBounds} is the following.

\begin{thm}\label{thm:ApproxVolBounds} Let $A\in\Z^{m\times n}$ with $\troprank(A)=k$. Let $P=\tconv(A)\subseteq\mathbb{TP}^{m-1}$.   Suppose an approximation algorithm with factor $\alpha$ for the volume of a tropical polytope $Q$ outputs $f(Q)$.

Then $$\alpha^{-1}\C{-}{k}\varepsilon^{m-k}\leq f(P+B^{m-1}_H(\varepsilon))\leq \alpha\C{+}{k}\varepsilon^{m-k}. $$ \end{thm}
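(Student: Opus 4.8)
The plan is to chain the defining inequality of a factor-$\alpha$ approximation algorithm with the two-sided volume estimate already established in Theorem~\ref{thm:VolBounds}. First I would observe that by Corollary~\ref{prop:Outer.Body.Poloytope} the Hilbert outer parallel body $Q := P+\HB$ is itself a tropical polytope, so the approximation algorithm is legitimately applicable to it. By the definition of an approximation algorithm with factor $\alpha$ (applied to $Q$), we then have
\[ \alpha^{-1}\Vol^{m-1}(Q) \le f(Q) \le \alpha\,\Vol^{m-1}(Q). \]

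Next I would invoke the volume bounds. By Theorem~\ref{thm:VolBounds}, combined with the estimate $\radius(X_T)\ge \tfrac{1}{k}$ supplied by Proposition~\ref{prop:MCMofFullDimensionalCell} and the definitions of $\C{-}{k}$ and $\C{+}{k}$, one has, for $\varepsilon\le R$,
\[ \C{-}{k}\varepsilon^{m-k} \le \Vol^{m-1}(P+\HB) \le \C{+}{k}\varepsilon^{m-k}. \]
Substituting $Q=P+\HB$ into the previous display and multiplying the volume bounds by the positive scalars $\alpha^{-1}$ and $\alpha$ (which preserves the direction of the inequalities) yields
\[ f(P+\HB) \ge \alpha^{-1}\Vol^{m-1}(P+\HB) \ge \alpha^{-1}\C{-}{k}\varepsilon^{m-k} \]
for the lower estimate, and the symmetric chain $f(P+\HB) \le \alpha\,\Vol^{m-1}(P+\HB) \le \alpha\,\C{+}{k}\varepsilon^{m-k}$ for the upper one. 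Together these are exactly the asserted inequalities.

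There is essentially no obstacle here: all the content is carried by Theorem~\ref{thm:VolBounds}, and the only thing to verify is that the approximation factor scales the established volume bounds in the expected direction. The one point worth a line of care is the validity range of the bounds: the lower estimate $\C{-}{k}\varepsilon^{m-k}\le\Vol^{m-1}(P+\HB)$ relies on $\radius(X_T)\ge\tfrac{1}{k}$, which holds by Proposition~\ref{prop:MCMofFullDimensionalCell} since $X_T$ is full dimensional, and on $\varepsilon\le R$. Both conditions are inherited from the standing hypotheses preceding the statement, so no additional work is required.
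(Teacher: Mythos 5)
Your proof is correct and matches the paper's approach exactly: the paper states this theorem as ``an immediate consequence of Theorem~\ref{thm:VolBounds},'' having already recorded the chain $\C{-}{k}\varepsilon^{m-k}\leq \Vol^{m-1}(P+B^{m-1}_H(\varepsilon))\leq \C{+}{k}\varepsilon^{m-k}$ (using $\radius(X_T)\geq \tfrac{1}{k}$ from Proposition~\ref{prop:MCMofFullDimensionalCell} and $\varepsilon\leq R$), and your argument simply makes explicit the intended chaining of these bounds with the defining inequality of a factor-$\alpha$ approximation algorithm applied to the tropical polytope $P+\HB$.
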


Let $\tildeep=\alpha^{-2}\varepsilon$ for some $\varepsilon\in\Q$ satisfying $\varepsilon<\min(\barep, R)$, for example let $\varepsilon$ be as in \eqref{eqn:ChoiceEpsilon}.   Observe that $\tildeep<\alpha^{-2}\barep$ and has a polynomial (in the size of the input) number of bits since $\log(\tildeep)=\mathcal{O}(\poly(m,n)+m\log(m)+m+n+m\log(W))$.    

\begin{prop}\label{prop:ApproxVolIntervals}For all $\varepsilon\leq\tildeep$, the intervals $$\left[\alpha^{-1}\C{-}{k}\varepsilon^{m-k}, \alpha\C{+}{k}\varepsilon^{m-k}  \right]_{k=1,\dots,m}$$ are disjoint.  
\end{prop}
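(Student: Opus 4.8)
The plan is to follow the proof of Proposition~\ref{prop:IntervalsOrderedNestled}, observing that the only change is that each interval has been dilated: its right endpoint is now multiplied by $\alpha$ and its left endpoint by $\alpha^{-1}$. As in the unapproximated case, disjointness will follow once we establish, for each $k\in\{2,\dots,m-1\}$, the ordered chain
\[
\alpha\C{+}{k-1}\varepsilon^{m-(k-1)} < \alpha^{-1}\C{-}{k}\varepsilon^{m-k} < \alpha\C{+}{k}\varepsilon^{m-k} < \alpha^{-1}\C{-}{k+1}\varepsilon^{m-(k+1)}.
\]
The middle inequality is immediate from $\alpha\geq 1$ together with $\C{-}{k}\leq\C{+}{k}$ (Theorem~\ref{thm:VolBounds}), which give $\alpha^{-1}\C{-}{k}\leq\C{+}{k}\leq\alpha\C{+}{k}$. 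The first and third inequalities are two instances of a single ``gap'' condition (at indices $k$ and $k+1$), so it suffices to verify
\[
\alpha\C{+}{k-1}\varepsilon^{m-(k-1)} < \alpha^{-1}\C{-}{k}\varepsilon^{m-k}
\qquad\text{for all } k\in\{2,\dots,m\}.
\]

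To do this I would take logarithms and cancel the factor $\varepsilon^{m-k}$, which leaves a single surviving power $\varepsilon$ on the left; the gap condition then becomes
\[
2\log(\alpha)+\log(\varepsilon) < \log\!\big(\C{-}{k}\big)-\log\!\big(\C{+}{k-1}\big).
\]
By the definition~\eqref{eqn:barepsilon} of $\barep$, the right-hand side is at least $\log(\barep)$ for every $k$, so it is enough to guarantee $2\log(\alpha)+\log(\varepsilon)<\log(\barep)$, that is, $\log(\varepsilon)<\log(\barep)-\log(\alpha^2)=\log(\alpha^{-2}\barep)$, equivalently $\varepsilon<\alpha^{-2}\barep$. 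This holds under the hypothesis $\varepsilon\leq\tildeep$, since we have already recorded that $\tildeep<\alpha^{-2}\barep$. Disjointness of the intervals $[\alpha^{-1}\C{-}{k}\varepsilon^{m-k},\alpha\C{+}{k}\varepsilon^{m-k}]_{k=1,\dots,m}$ then follows from the chain exactly as Corollary~\ref{cor:DisjointIntervals} followed from Proposition~\ref{prop:IntervalsOrderedNestled}.

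The argument involves no genuine obstacle. The one point requiring care is the bookkeeping of the two factors of $\alpha$ introduced by widening both endpoints of each interval, which combine into a single factor $\alpha^2$; this is exactly absorbed by the two powers of $\alpha^{-1}$ built into $\tildeep$, so that the whole reduction collapses to the same comparison with $\barep$ used in Proposition~\ref{prop:IntervalsOrderedNestled}, now carried out at the smaller scale $\alpha^{-2}\barep$ rather than $\barep$. It is this shrinking of the admissible range of $\varepsilon$ by the factor $\alpha^{-2}$ that leaves room for the widened intervals to remain disjoint, and the fact that $\log(\tildeep)$ stays polynomially bounded in the input size (recorded above) is what keeps the subsequent reduction valid.
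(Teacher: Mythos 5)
Your proof is correct and follows essentially the same route as the paper's: both reduce disjointness to the gap condition $\alpha\C{+}{k-1}\varepsilon^{m-(k-1)} < \alpha^{-1}\C{-}{k}\varepsilon^{m-k}$ and verify it via the definition of $\barep$ together with $\varepsilon\leq\tildeep<\alpha^{-2}\barep$, the paper writing this as a ratio bound $\barep/(\alpha^{2}\tildeep)>1$ where you take logarithms. Your explicit treatment of the middle inequality $\alpha^{-1}\C{-}{k}\leq\alpha\C{+}{k}$ is a minor elaboration of what the paper leaves implicit in ``The result follows.''
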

\proof For all $k\in\{2,\dots,m\}$, by our choice of $\varepsilon$, \begin{align*}\frac{\alpha^{-1}\C{-}{k}\varepsilon^{m-k}}{\alpha\C{+}{k-1}\varepsilon^{m-(k-1)}}&\geq \frac{1}{\alpha^{2}\varepsilon}\min_k\frac{\C{-}{k}}{\C{+}{k-1}}
=\frac{\barep}{\alpha^{2}\varepsilon}\geq\frac{\barep}{\alpha^2\tildeep}>1.\end{align*}  The result follows.
\endproof

\begin{prop}\label{prop:ApproxInterval.iff.Troprank}Let $A\in\Z^{m\times n}$ and let $P=\tconv(A)\subseteq\mathbb{TP}^{m-1}$.   Fix $\varepsilon\leq\tildeep$. Then $\troprank(A)=k$ if and only if  $$\alpha^{-1}\C{-}{k}\varepsilon^{m-k}\leq f(P+B^{m-1}_H(\varepsilon))\leq \alpha\C{+}{k}\varepsilon^{m-k}. $$
\end{prop}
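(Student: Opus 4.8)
The plan is to follow verbatim the structure of the proof of Theorem~\ref{thm:Vol.Intervals.Troprank}, replacing the exact volume bounds of Theorem~\ref{thm:VolBounds} by the approximate bounds of Theorem~\ref{thm:ApproxVolBounds}, and replacing the disjointness statement of Corollary~\ref{cor:DisjointIntervals} by its approximate analogue Proposition~\ref{prop:ApproxVolIntervals}. No new estimates are required; the argument is a pigeonhole on disjoint intervals.

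First I would dispatch the forward implication. Assuming $\troprank(A)=k$, Theorem~\ref{thm:ApproxVolBounds} applied to the tropical polytope $Q=P+B^{m-1}_H(\varepsilon)$ states precisely that $\alpha^{-1}\C{-}{k}\varepsilon^{m-k}\leq f(P+B^{m-1}_H(\varepsilon))\leq \alpha\C{+}{k}\varepsilon^{m-k}$, which is the claimed two-sided inequality. Nothing further is needed in this direction.

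For the converse, suppose $f(P+B^{m-1}_H(\varepsilon))$ satisfies the stated inequality, i.e.\ it lies in the $k$-th interval $[\alpha^{-1}\C{-}{k}\varepsilon^{m-k},\,\alpha\C{+}{k}\varepsilon^{m-k}]$. Let $k':=\troprank(A)$ be the true tropical rank. Applying the forward implication to $k'$, the same value $f(P+B^{m-1}_H(\varepsilon))$ also lies in the $k'$-th interval $[\alpha^{-1}\C{-}{k'}\varepsilon^{m-k'},\,\alpha\C{+}{k'}\varepsilon^{m-k'}]$. Since $\varepsilon\leq\tildeep$, Proposition~\ref{prop:ApproxVolIntervals} guarantees that the intervals $[\alpha^{-1}\C{-}{j}\varepsilon^{m-j},\,\alpha\C{+}{j}\varepsilon^{m-j}]_{j=1,\dots,m}$ are pairwise disjoint. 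A single real number cannot belong to two disjoint intervals, hence $k=k'$; that is, $\troprank(A)=k$.

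The genuine obstacle of the whole argument, namely controlling the loss of precision from the approximation factor $\alpha$ (which can inflate each interval width by a factor $\alpha^2$) without destroying disjointness, has already been absorbed into the definition $\tildeep=\alpha^{-2}\varepsilon$ and discharged in Proposition~\ref{prop:ApproxVolIntervals}. Given those, I anticipate no further difficulty. The only point requiring a word of care is that $f(P+B^{m-1}_H(\varepsilon))$ is a single well-defined real number returned by the approximation algorithm, so that its membership in exactly one of the disjoint intervals pins down a unique index $k$.
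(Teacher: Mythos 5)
Your proposal is correct and follows exactly the paper's argument: the paper proves this proposition in one line by citing Theorem~\ref{thm:ApproxVolBounds} (your forward direction) and Proposition~\ref{prop:ApproxVolIntervals} (the disjointness underlying your converse). You have simply made explicit the pigeonhole step that the paper leaves implicit.
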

\proof Follows from Theorem \ref{thm:ApproxVolBounds} and Proposition \ref{prop:ApproxVolIntervals}.
\endproof

\begin{thm}  There is no polynomial time approximation algorithm of factor $\alpha=2^{\poly(m,n)}$ for the volume of a tropical polytope $P=\tconv(A)$ provided P$\neq$NP. 
\end{thm}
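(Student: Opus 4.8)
The plan is to establish the contrapositive: assuming such an approximation algorithm exists, I will use it to solve the NP-hard problem TROPRANK in polynomial time, which forces P$=$NP. Given an instance $A\in\{0,1\}^{m\times n}$, $k$ of TROPRANK, I first construct the tropical polytope $P=\tconv(A)\subseteq\TP^{m-1}$ and fix $\varepsilon=\tildeep$ (or any $\varepsilon\leq\tildeep$). Crucially, $\tildeep=\alpha^{-2}\varepsilon$ has a polynomial number of bits in the input size, as already recorded: $\log(\tildeep)=\mathcal{O}(\poly(m,n)+m\log(m)+m+n+m\log(W))$, so this choice is computable in polynomial time.

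Next I would form the Hilbert outer parallel body $P+B^{m-1}_H(\varepsilon)$. By Proposition \ref{prop:HOPB.Generators}, this is again a tropical polytope, generated by the matrix $M\in\Q^{m\times mn}$ whose columns are $A_j+H_k$ for $j\in\{1,\dots,n\}$ and $k\in\{1,\dots,m\}$, where $H$ is given by \eqref{eqn:MatrixHforHilbertBall}. Since $\varepsilon$ has polynomially many bits and $M$ has $mn$ columns each of polynomial bit-size, the matrix $M$ is constructed in polynomial time. I then run the assumed factor-$\alpha$ approximation algorithm on $M$ to obtain the value $f(P+B^{m-1}_H(\varepsilon))$ in polynomial time.

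The heart of the reduction is the decoding step. By Proposition \ref{prop:ApproxInterval.iff.Troprank}, for $\varepsilon\leq\tildeep$ we have $\troprank(A)=k$ if and only if
\[
\alpha^{-1}\C{-}{k}\varepsilon^{m-k}\leq f(P+B^{m-1}_H(\varepsilon))\leq \alpha\C{+}{k}\varepsilon^{m-k},
\]
and by Proposition \ref{prop:ApproxVolIntervals} the intervals $\left[\alpha^{-1}\C{-}{k}\varepsilon^{m-k},\,\alpha\C{+}{k}\varepsilon^{m-k}\right]_{k=1,\dots,m}$ are pairwise disjoint. Hence the output $f(P+B^{m-1}_H(\varepsilon))$ lands in exactly one of these $m$ intervals, and the index of that interval is precisely $\troprank(A)$. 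I would therefore check, for each $j\in\{1,\dots,m\}$, whether $f(P+B^{m-1}_H(\varepsilon))$ lies in the $j$th interval; finding the unique such $j$ recovers $\troprank(A)$, and in particular decides whether $\troprank(A)\geq k$. Since the interval endpoints $\C{-}{j}$ and $\C{+}{j}$ are explicit and have logarithms polynomially bounded in the input size, these $m$ comparisons are all polynomial-time. This gives TROPRANK $\le_p$ (approximate VOLUME(TP)), and by the NP-completeness of TROPRANK for $\{0,1\}$-matrices \cite{KR05,Sh13}, the existence of the approximation algorithm would place an NP-hard problem in P.

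The main obstacle — and the reason the preceding machinery is needed — is controlling the blow-up of the approximation factor $\alpha$ so that the dilated intervals $\left[\alpha^{-1}\C{-}{k}\varepsilon^{m-k},\,\alpha\C{+}{k}\varepsilon^{m-k}\right]$ remain disjoint. Multiplying by $\alpha$ on one side and $\alpha^{-1}$ on the other widens each interval by a factor of $\alpha^2$, which for $\alpha=2^{\poly(m,n)}$ is enormous; the trick is that shrinking $\varepsilon$ to $\tildeep=\alpha^{-2}\varepsilon$ shrinks the gap ratios by a compensating factor, since the ratio of consecutive interval endpoints scales like $\barep/(\alpha^2\varepsilon)$, which stays strictly above $1$ precisely because $\varepsilon\leq\tildeep=\alpha^{-2}\varepsilon_0<\alpha^{-2}\barep$. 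This is exactly the content of Proposition \ref{prop:ApproxVolIntervals}, so that delicate balancing is already handled; the remaining work is merely to assemble these pieces into the polynomial-time reduction and to note once more that all bit-sizes stay polynomial.
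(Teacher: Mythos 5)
Your proposal is correct and follows essentially the same route as the paper: reduce from TROPRANK over $\{0,1\}$-matrices by building the Hilbert outer parallel body $P+B^{m-1}_H(\tildeep)$ via Proposition \ref{prop:HOPB.Generators}, then decode the rank from the approximation output using Propositions \ref{prop:ApproxVolIntervals} and \ref{prop:ApproxInterval.iff.Troprank}. The only cosmetic difference is that you locate $f$ in one of the $m$ intervals by comparing against the endpoints (which involve $\kappa_{m-k}$, hence $\pi$); to keep all arithmetic rational one should, as the paper does in the exact-volume reduction, compare against a rational threshold chosen in the gap between consecutive intervals, but this is an immediate fix and not a genuine gap.
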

\proof Let $A\in\{0,1\}^{m\times n}$ and $P=\tconv(A)$. 
As before, $M\in\Q^{m\times mn}$ such that $\tconv(M)=P+B^{m-1}_H(\tildeep)$ can be calculated in polynomial time.  

By Proposition \ref{prop:ApproxInterval.iff.Troprank}, any approximation algorithm with factor $\alpha$ for the volume of a tropical polytope applied to $P+B^{m-1}_H(\tildeep)$ would also solve the problem of calculating the tropical rank of $A$.  Hence, under the assumption that P$\neq$NP, no such approximation algorithm can run in polynomial time.
\endproof

\section{Counting integer points in tropical polytopes described by vertices}\label{sec:CountingIntegerPoints}
We aim to show, by a similar argument as for the volume, that counting the number of integer points in a tropical polytope is hard.  This time we construct bounds on the number of integer points on an integer dilation, $sP$, $s\in\N$, of a tropical polytope $P$.

Throughout we have $A\in\Z^{m\times n}$ and $P=\tconv(A)\subseteq\TP^{m-1}$.  Observe that, for any $s\in\N$, $sP=s\tconv(A)=\tconv(sA)$ and it is trivially a tropical polytope.

\subsection{Integer Points in Tropical Polytopes: Lower bound}

We calculate a lower bound on the number of integer points in a tropical polytope. 

\begin{prop}\label{prop:LowerBoundIntPoints}Let $A\in\Z^{m\times n}$, $P=\tconv(A)$, $\troprank(A)=k$.  Let $X_S$ be a cell of dimension $k-1$ in $P$ and $X_T$  be defined by \eqref{eqn:XT}.  Then, for any $s\in\N$,
$$|sP\cap\Z^{m-1}|\geq  (\lfloor s\radius(X_T)\rfloor+1)^{k}-\lfloor s\radius(X_T)\rfloor^k.$$

\end{prop}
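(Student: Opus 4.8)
The plan is to reduce the lower bound for $|sP\cap\Z^{m-1}|$ to the corresponding statement about the isolated full-dimensional cell $X_T$, exactly mirroring the structure used for the volume lower bound. The key observation is that $sP$ contains the scaled cell $sX_S$, so any integer point of $sX_S$ is an integer point of $sP$, giving immediately
\[
|sP\cap\Z^{m-1}|\geq |sX_S\cap\Z^{m-1}|.
\]
Thus the entire burden shifts to bounding $|sX_S\cap\Z^{m-1}|$ from below.

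For that bound I would invoke the machinery already developed in Section \ref{sec:HilbertBalls}. By Proposition \ref{prop:VolandIntPointsBoundXSbyXT}(2), integer points are preserved under the tropical isomorphism $\psi$ between $X_T$ and $X_S$, so $|sX_S\cap\Z^{m-1}|=|sX_T\cap\Z^{k-1}|$. Since $X_T\subseteq\TP^{k-1}$ is a full-dimensional polytrope (its dimension equals $k-1$ by construction, as $|D|=k$ picks one index per strongly connected component of the critical graph), I can apply the inscribed Hilbert ball estimate: the ball $B_H(u,\radius(X_T))$ sits inside $X_T$ by Proposition \ref{prop:InscribedHilbertBall} together with Theorem \ref{th-sergeev}, and scaling by $s$ gives $B_H(su, s\radius(X_T))\subseteq sX_T$. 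Counting integer points in a Hilbert ball via Proposition \ref{prop:HilbertBallIntPoints} then yields
\[
|sX_T\cap\Z^{k-1}|\geq (\lfloor s\radius(X_T)\rfloor+1)^{k}-\lfloor s\radius(X_T)\rfloor^{k},
\]
which is precisely the stated bound. In fact this chain of inequalities is already packaged in Corollary \ref{cor:VolandIntPointsX_S(usingX_T)}(2), so the cleanest route is simply to cite that corollary after noting $sP\supseteq sX_S$.

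Concretely, the proof runs as follows: first, since $\troprank(A)=k$, Theorem \ref{thm:trank.dim} guarantees the existence of a bounded cell $X_S$ of dimension $k-1$, so the hypotheses of the cited results are met. Second, $sX_S\subseteq sP$ gives the containment of integer point sets. Third, Corollary \ref{cor:VolandIntPointsX_S(usingX_T)}(2) delivers the lower bound on $|sX_S\cap\Z^{m-1}|$ in terms of $\radius(X_T)$. Chaining these completes the argument.

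I do not expect a serious obstacle here, since all the hard work has been front-loaded into the earlier propositions. The only subtle point worth stating carefully is that the radius $\radius(X_T)$ appearing in the count is the inner radius of the \emph{lower-dimensional} representative $X_T$ rather than of $X_S$ itself (recall that $X_S$ may fail to contain any full-dimensional Hilbert ball of positive radius when $k<m$, as noted after Proposition \ref{prop:MCMofFullDimensionalCell}); the tropical isomorphism $\psi$ is exactly what lets one transfer the count from the degenerate embedded cell to the genuinely full-dimensional $X_T$ where the inscribed-ball formula has content. Making sure the integer-point count is invariant under $\psi$ — which holds because $\psi$ is given by an integer matrix and commutes with integer scaling, as established in Proposition \ref{prop:VolandIntPointsBoundXSbyXT}(2) — is the one place where a reader should pause, so I would flag it explicitly in the write-up.
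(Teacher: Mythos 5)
Your proposal is correct and follows essentially the same route as the paper's own proof: note $sX_S\subseteq sP$, then cite Corollary \ref{cor:VolandIntPointsX_S(usingX_T)}(2) together with the observation that the inner radius scales linearly, $\radius(sX_T)=s\radius(X_T)$. The additional unpacking of the machinery behind that corollary (the isomorphism $\psi$, the inscribed Hilbert ball, and the integer-point count) is a faithful expansion of what the paper packages into the citation, not a different argument.
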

\proof Follows from Corollary \ref{cor:VolandIntPointsX_S(usingX_T)} that,\begin{align*}|sP\cap\Z^{m-1}|\geq  |sX_S\cap\Z^{m-1}|\ge(\lfloor\radius(sX_T)\rfloor+1)^{k}-\lfloor\radius(sX_T)\rfloor^k.\end{align*}  Finally, it is trivial that $\radius(sX_T)=s\radius(X_T)$.
\endproof

\subsection{Integer Points in Tropical Polytopes: Upper bound}  

Recall the definition of $R$ \eqref{eqn:DefR}, which is the smallest length of any $(m-1)$-cube containing $P$.  When $P$ is full dimensional, a bound is trivial:
\begin{prop} For any $s\in\mathbb{N}$, $|sP\cap\mathbb{Z}^{m-1}|\leq (sR+1)^{m-1}.$
\end{prop}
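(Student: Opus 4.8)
The plan is to exploit the fact, already recorded in the discussion preceding Proposition~\ref{prop:CellsinCube}, that the outer radius $R=\max_{i=2,\dots,m}\hilbertnorm{A_{i\cdot}}$ bounds the side length of a hypercube containing $P$. Concretely, there exists a vector $u$ such that $P=\tconv(A)\subseteq u+[0,R]^{m-1}$ (this is exactly the containment justified there, see also Theorem 3.34 in \cite{MacCaigThesis}). The whole argument is then a containment-and-count: since dilation commutes with Minkowski translation and scaling of a box, $sP\subseteq su+s[0,R]^{m-1}=su+[0,sR]^{m-1}$, and the integer points of $sP$ are therefore contained in those of a translated closed box with all side lengths equal to $sR$.

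First I would state the containment $P\subseteq u+[0,R]^{m-1}$ and dilate it to obtain $sP\subseteq su+[0,sR]^{m-1}$. Next I would count integer points coordinatewise: a closed real interval of length $L$, placed anywhere, contains at most $\lfloor L\rfloor+1$ integers, so in each of the $m-1$ coordinate directions the box $su+[0,sR]^{m-1}$ admits at most $\lfloor sR\rfloor+1$ integer values. Taking the product over the $m-1$ coordinates gives
\[
|sP\cap\Z^{m-1}|\leq |(su+[0,sR]^{m-1})\cap\Z^{m-1}|\leq (\lfloor sR\rfloor+1)^{m-1}\leq (sR+1)^{m-1},
\]
which is the claimed bound.

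There is essentially no serious obstacle here, which is why the text flags the bound as ``trivial''; the only point requiring a line of care is the elementary counting fact that a translated closed interval of length $sR$ meets $\Z$ in at most $\lfloor sR\rfloor+1$ points, so that the coordinatewise estimates multiply cleanly. I would note in passing that the estimate holds for any $P$ regardless of its tropical rank, the full-dimensional case being the one in which it is close to tight; for polytopes of dimension $k-1<m-1$ the bound is far from sharp, which is precisely why the refined, dimension-sensitive upper bound of Theorem~\ref{thm:CountingBounds} is developed afterwards.
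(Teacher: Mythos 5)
Your proof is correct and is essentially the paper's own argument: the paper's one-line proof also reduces the claim to the containment of $sP$ in a translate of $[0,sR]^{m-1}$ (established in the discussion preceding Proposition~\ref{prop:CellsinCube}) and counts integer points in that box. Your version merely spells out the translation and the coordinatewise counting step $\lfloor sR\rfloor+1\leq sR+1$ that the paper leaves implicit.
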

\proof
Immediate since $|sP\cap\mathbb{Z}^{m-1}|\leq |[0, sR]^{m-1}\cap\Z^{m-1}|$.
\endproof 

Recall from Proposition \ref{prop:CellsinCube} that, for any bounded cell $X_S$ of dimension $d$, $X_S$ is contained in a translation and/or rotation of $ \left[ -R\sqrt{m-1},  R\sqrt{m-1}\right]^d.$
While this allowed us to bound the volume we need to ensure that the number of integer points is not affected by rotating a $d$ dimensional shape in $m-1$ dimensional space.  This is not a problem for the upper bound since it is clear that, for any $d$-dimensional cube drawn in dimension $m\geq d$ with edge length $l$, the maximum number of integer points will be attained when all the edges are parallel to coordinate axes, and corners are on integer vertices.  Thus, let $K$ be any rotation of a $d$-dimensional cube with edge length $l$, then $$|K\cap\Z^{m}|\leq |[0,l]^d\cap\Z^d|=(l+1)^d.$$

These arguments extend easily to integer dilations, so we conclude the following.

\begin{prop}\label{prop:IntPointUpperBoundCell} Given is a $d$ dimensional ($d\leq m-1$) polytrope $X_S\subseteq\TP^{m-1}$.  Let $C:=\left[ -sR\sqrt{m-1},  sR\sqrt{m-1}\right]^d$.  Then, for all $s\in\N$,
 $$|sX_S\cap\Z^{m-1}|\leq |C\cap\Z^d|= (1+2sR\sqrt{m-1})^d.$$  
\end{prop}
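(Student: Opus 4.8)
The plan is to combine the containment of a bounded cell in a small cube (Proposition~\ref{prop:CellsinCube}) with the elementary observation, recorded in the paragraph immediately preceding the statement, that among all rotations and translations of a fixed $d$-cube the axis-aligned, integer-cornered position captures the most lattice points.

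First I would invoke Proposition~\ref{prop:CellsinCube} with $t=d$ to place $X_S$ inside a translation and/or rotation of the $d$-cube $[-R\sqrt{m-1}, R\sqrt{m-1}]^d$, whose edge length is $2R\sqrt{m-1}$. Since dilation by $s\in\N$ commutes with the affine isometry realising this containment, $sX_S$ is then contained in a translation and/or rotation of $s\cdot[-R\sqrt{m-1}, R\sqrt{m-1}]^d = [-sR\sqrt{m-1}, sR\sqrt{m-1}]^d$, a $d$-cube of edge length $l:=2sR\sqrt{m-1}$.

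Next I would apply the integer-point bound for rotated cubes recalled just above the statement: writing $K$ for any rotation and translation of a $d$-cube of edge length $l$ embedded in $\R^{m-1}$, one has $|K\cap\Z^{m-1}|\leq |[0,l]^d\cap\Z^d| = (\lfloor l\rfloor+1)^d \leq (l+1)^d$. Taking $K$ to be the enclosing cube from the previous step and $l=2sR\sqrt{m-1}$ yields $$|sX_S\cap\Z^{m-1}|\leq |K\cap\Z^{m-1}|\leq (1+2sR\sqrt{m-1})^d,$$ which is precisely the claimed bound, with $|C\cap\Z^d|$ read as the lattice-point count of the axis-aligned cube $C$.

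The only genuinely substantive ingredient is the rotation-invariance-type claim that tilting a $d$-cube cannot increase its lattice-point count beyond the axis-aligned position; this is asserted in the preceding paragraph, so I would simply cite it rather than reprove it here. A minor point worth flagging is that $l=2sR\sqrt{m-1}$ need not be an integer, so the equality $|C\cap\Z^d|=(1+2sR\sqrt{m-1})^d$ really stands for the estimate $(\lfloor l\rfloor+1)^d\leq(l+1)^d$; since the statement is used only as an upper bound, this abuse is harmless. The degenerate case $d=0$ is trivial, both sides equalling $1$.
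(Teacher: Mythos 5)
Your proposal is correct and follows essentially the same route as the paper: the paper likewise derives the bound directly from Proposition~\ref{prop:CellsinCube} together with the preceding observation that a rotated $d$-cube in $\R^{m-1}$ captures at most $(l+1)^d$ lattice points, noting only that ``these arguments extend easily to integer dilations.'' Your explicit handling of the dilation step and the flag about $l=2sR\sqrt{m-1}$ not being an integer (so the stated equality is really an upper bound) make the same argument slightly more careful than the paper's own prose.
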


\begin{cor}\label{cor:Upperbound.Intpoints} $A\in\Z^{m\times n}$, $P=\tconv(A)$, $\troprank(A)=k$.  Then, for all $s\in\N$, $$|sP\cap \Z^{m-1}|\leq 3^{m+n-2}\frac{(1+2sR\sqrt{m-1})^{k}-1}{2sR\sqrt{m-1}}.$$
\end{cor}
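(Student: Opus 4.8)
The plan is to bound $|sP\cap\Z^{m-1}|$ by summing the contributions of the individual cells of the scaled polytope $sP$, and then to control both the number of cells and the count within each cell using results already established. First I would note that, since $\troprank(A)=k$, Theorem~\ref{thm:trank.dim} tells us that $P$ has dimension $k-1$, so every bounded cell $X_S$ of $P$ has dimension at most $k-1$; that is, $\sS^t=\emptyset$ for $t\ge k$. Because $sP=\tconv(sA)$ has the same combinatorial type as $P$, its bounded cells are exactly the scalings $sX_S$, and by Proposition~\ref{prop:PolytopeCellSummary}(5) these cells cover $sP$. A crude union bound over this cover then gives
\[ |sP\cap\Z^{m-1}|\le \sum_{t=0}^{k-1}\sum_{S\in\sS^t}|sX_S\cap\Z^{m-1}|. \]
Here I would simply accept the overcounting of integer points lying on shared (lower-dimensional) faces, since we only seek an upper bound.

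Next I would insert the two ingredients into this sum. Proposition~\ref{prop:IntPointUpperBoundCell} bounds the count inside each $t$-dimensional cell by $(1+2sR\sqrt{m-1})^t$, and Corollary~\ref{prop:bound.cells.P}, applied with $r=n$ generators, bounds the number of cells of each dimension by $|\sS^t|\le 3^{n+m-t-2}$ (the non-general-position case being covered by the remark following Proposition~\ref{prop:number.faces.P}). Writing $x:=1+2sR\sqrt{m-1}$ for brevity, these two estimates combine to
\[ |sP\cap\Z^{m-1}|\le \sum_{t=0}^{k-1} 3^{n+m-t-2}\,x^{t}. \]

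The final step is an elementary manipulation. Factoring out $3^{n+m-2}$ and using $3^{-t}\le 1$ for every $t\ge 0$ gives $\sum_{t=0}^{k-1}3^{n+m-t-2}x^{t}\le 3^{n+m-2}\sum_{t=0}^{k-1}x^{t}$, and the geometric series evaluates to $\sum_{t=0}^{k-1}x^{t}=(x^{k}-1)/(x-1)$. Since $x-1=2sR\sqrt{m-1}$, which is nonzero once $R\ge 1$ and $m\ge 2$ (as assumed; the degenerate case $R=0$ makes $P$ a single point), this is precisely the claimed bound $3^{m+n-2}\big((1+2sR\sqrt{m-1})^{k}-1\big)/(2sR\sqrt{m-1})$. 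I expect no genuine obstacle in the arithmetic: the only point needing care is the justification that the bounded cells of $sP$ coincide with the scaled cells $sX_S$ of $P$ and cover $sP$, so that the union bound in the first step is legitimate; everything after that is bookkeeping with a geometric series.
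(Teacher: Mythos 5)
Your proof is correct and follows essentially the same route as the paper: a union bound over the bounded cells of $sP$, the per-cell estimate of Proposition~\ref{prop:IntPointUpperBoundCell}, the cell-count bound of Corollary~\ref{prop:bound.cells.P}, and a geometric series. The extra care you take in justifying the cover by scaled cells and the degenerate case $R=0$ is fine but not a departure from the paper's argument.
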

\proof
Using Proposition \ref{prop:IntPointUpperBoundCell}, Corollary \ref{prop:bound.cells.P}, and the formula for the value of a geometric series,
 \begin{align*}|sP\cap \Z^{m-1}|\leq \sum_{t=0}^{k-1}\sum_{S\in\sS^t}|sX_S\cap\Z^{m-1}|
&\leq \sum_{t=0}^{k-1}3^{m+n-t-2}(1+2sR\sqrt{m-1})^{t}
\\&\leq 3^{m+n-2}\sum_{t=0}^{k-1}(1+2sR\sqrt{m-1})^{t}
\\&= 3^{m+n-2}\frac{1-(1+2sR\sqrt{m-1})^{k}}{1-(1+2sR\sqrt{m-1})}
\\&= 3^{m+n-2}\frac{(1+2sR\sqrt{m-1})^{k}-1}{2sR\sqrt{m-1}}.\end{align*}

\section{Hardness of counting integer points in tropical polytopes described by vertices}\label{sec:HardnessCountingProof}

\subsection{Hardness of counting integer points}
We consider the following problem.
\begin{problem} Counting Integer Points in Tropical Polytopes ($\#$INTPOINTS(TP))

Input:  $A\in\Q^{m\times n}$, $\alpha\in\Q$.

Question:  Is $|\tconv(A)\cap\Z^{m-1}|\geq\alpha$?\end{problem}

Define
\begin{align*}L_{k-1}(s)&:=(\lfloor s\radius(X_T)\rfloor+1)^{k}-(\lfloor s\radius(X_T)\rfloor)^k\text{ and}\\ U_{k-1}(s)&:=3^{m+n-2}\frac{(1+2sR\sqrt{m-1})^{k}-1}{2sR\sqrt{m-1}}.\end{align*}
Then we summarise Proposition \ref{prop:LowerBoundIntPoints} and Corollary \ref{cor:Upperbound.Intpoints} as follows.

\begin{thm}\label{thm:CountingBoundsSummary}Let $A\in\Z^{m\times n}$, $P=\tconv(A)$, $\troprank(A)=k$.  Let $X_S$ be a cell of dimension $k-1$ in $P$ and $X_T$  be defined by \eqref{eqn:XT}.  Then, for any $s\in\mathbb{N}$, $$L_{k-1}(s)\le|sP\cap\Z^{m-1}|\leq U_{k-1}(s)$$
\end{thm}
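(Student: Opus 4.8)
The plan is to observe that this theorem is purely a bookkeeping statement: it collects the two bounds that have already been established and simply records them under the abbreviations $L_{k-1}(s)$ and $U_{k-1}(s)$. There is no new mathematical content to produce, so the proof consists of matching each defined quantity to its source and concatenating the two inequalities.

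First I would verify the lower bound. By definition
$$L_{k-1}(s)=(\lfloor s\radius(X_T)\rfloor+1)^{k}-(\lfloor s\radius(X_T)\rfloor)^k,$$
which is exactly the right-hand side of the inequality in Proposition \ref{prop:LowerBoundIntPoints}. That proposition, under the hypotheses $A\in\Z^{m\times n}$, $P=\tconv(A)$, $\troprank(A)=k$, with $X_S$ a $(k-1)$-dimensional cell and $X_T$ defined by \eqref{eqn:XT}, already yields $|sP\cap\Z^{m-1}|\geq L_{k-1}(s)$ for every $s\in\N$. So the left-hand inequality is immediate.

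Next I would verify the upper bound. By definition
$$U_{k-1}(s)=3^{m+n-2}\frac{(1+2sR\sqrt{m-1})^{k}-1}{2sR\sqrt{m-1}},$$
which is precisely the bound proved in Corollary \ref{cor:Upperbound.Intpoints} under the same hypotheses on $A$, $P$, and $k$. Hence $|sP\cap\Z^{m-1}|\le U_{k-1}(s)$ for all $s\in\N$, giving the right-hand inequality.

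Combining the two displays yields $L_{k-1}(s)\le|sP\cap\Z^{m-1}|\le U_{k-1}(s)$ for every $s\in\N$, as claimed. The only thing requiring any care is checking that the defined symbols $L_{k-1}(s)$ and $U_{k-1}(s)$ reproduce verbatim the expressions appearing in the cited results, so that no hidden discrepancy in indices or in the role of $k$ versus $k-1$ creeps in; once that is confirmed, there is no genuine obstacle, since all the substantive work was carried out in Section \ref{sec:CountingIntegerPoints}.
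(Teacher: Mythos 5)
Your proposal is correct and matches the paper exactly: the paper offers no separate proof of this theorem, introducing it with the sentence that it ``summarises'' Proposition \ref{prop:LowerBoundIntPoints} and Corollary \ref{cor:Upperbound.Intpoints}, which is precisely the citation-and-concatenation argument you give. Your added check that $L_{k-1}(s)$ and $U_{k-1}(s)$ reproduce those bounds verbatim is the right (and only) point of care.
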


Now that we have bounds on the number of integer points in polytopes which depend on the tropical rank, we follow the same method as in the proof that volume is NP-hard and prove a reduction from TROPRANK to $\#$INTPOINTS(TP).  We obtain the following results.

\begin{prop}\label{prop:CountIntDisjointInterval} Let $s\in\N$ satisfy $s\equiv 0\mod t, \forall t\in\{1,\dots,k\}$ and $$s\geq \max_{i=1,\dots,k}\frac{3^{m+n-2}(2R\sqrt{m-1})^{i-1}}{\radius(X_T)^{i}}.$$  
Then, for all $t=1,\dots,m-1$, $ L_{t-1}(s)\leq U_{t-1}(s)<L_{t}(s)\leq U_{t}(s).$
In other words, the intervals $[L_{t-1}(s), U_{t-1}(s)]_{t=1,\dots,m-1}$ are disjoint.
\end{prop}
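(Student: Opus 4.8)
The plan is to prove, for each $t$, the chain $L_{t-1}(s)\le U_{t-1}(s)<L_t(s)\le U_t(s)$, from which the claimed disjointness is immediate. Throughout I abbreviate $a:=\lfloor s\radius(X_T)\rfloor$, $b:=2sR\sqrt{m-1}$ and $c:=3^{m+n-2}$, so that $L_d(s)=(a+1)^{d+1}-a^{d+1}$ and, summing the geometric series $\frac{(1+b)^{d+1}-1}{b}=\sum_{j=0}^{d}(1+b)^{j}$, also $U_d(s)=c\sum_{j=0}^{d}(1+b)^{j}$. The first preparatory step is to remove the floor. Since $X_T\subseteq\TP^{k-1}$ is a full-dimensional cell, its associated matrix has tropical rank $k$, so Proposition \ref{prop:MCMofFullDimensionalCell} (applied with $k$ in place of $m$) gives $\radius(X_T)=z/t'$ with $z\in\N$ and $t'\in\{1,\dots,k\}$. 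The divisibility hypothesis $s\equiv 0\bmod t$ for all $t\in\{1,\dots,k\}$ forces $t'\mid s$, whence $s\radius(X_T)\in\Z$ and $a=s\radius(X_T)$; this makes the leading term of $L_t(s)$ exactly $(s\radius(X_T))^{t}$.

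For the outer inequalities $L_{t-1}(s)\le U_{t-1}(s)$ (and, shifting the index, $L_t(s)\le U_t(s)$), I expand $L_{t-1}(s)=(a+1)^{t}-a^{t}=\sum_{i=0}^{t-1}\binom{t}{i}a^{i}$. A routine estimate (the inscribed Hilbert ball is no larger than the circumscribing box controlled by $R$) gives $\radius(X_T)\le 2R\sqrt{m-1}$, i.e.\ $a\le b<1+b$, while each binomial coefficient satisfies $\binom{t}{i}\le 2^{t}\le 2^{m-1}\le 3^{m+n-2}=c$. Hence $\binom{t}{i}a^{i}\le c(1+b)^{i}$ termwise, and summing over $i$ yields $L_{t-1}(s)\le c\sum_{i=0}^{t-1}(1+b)^{i}=U_{t-1}(s)$.

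The crux is the strict separation $U_{t-1}(s)<L_t(s)$, which I would obtain by comparing dominant orders in $s$. From $\sum_{j=0}^{t-1}(1+b)^{j}\le t(1+b)^{t-1}$ one gets $U_{t-1}(s)\le c\,t\,(1+b)^{t-1}$, of order $s^{t-1}$, whereas $L_t(s)=(a+1)^{t+1}-a^{t+1}\ge (t+1)a^{t}>t\,a^{t}$, of order $s^{t}$. It therefore suffices to establish $c(1+b)^{t-1}\le a^{t}$. Substituting $a=s\radius(X_T)$ and $b=2sR\sqrt{m-1}$ (and using $1+b\le 2b$ for $b\ge 1$ to pass from $1+b$ to $b$), this reduces, after dividing by $s^{t-1}$, to a lower bound on $s$ of exactly the form $s\ge \frac{3^{m+n-2}(2R\sqrt{m-1})^{t-1}}{\radius(X_T)^{t}}$, which is supplied by the size hypothesis (the benign constant factors coming from $1+b$ versus $b$ and from $t$ versus $t+1$ being absorbed into the slack of the bound). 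This places the interval of index $t-1$ strictly below that of index $t$.

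I expect the separation step to be the only real obstacle: the two hypotheses on $s$ are engineered precisely for it, the divisibility condition making $s\radius(X_T)$ integral so that $L_t(s)$ has the clean leading term $(s\radius(X_T))^{t}$, and the size condition guaranteeing that this $s^{t}$ term outgrows the $s^{t-1}$ upper estimate for $U_{t-1}(s)$. What remains is careful bookkeeping of the constant factors ($t\le c$, the geometric-series factor, and the replacement of $1+b$ by $b$), all of which are controlled once $s$ meets the stated threshold; combining the three inequalities for every $t$ then gives the disjointness of the intervals $[L_{t-1}(s),U_{t-1}(s)]$.
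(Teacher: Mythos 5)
Your setup (using Proposition \ref{prop:MCMofFullDimensionalCell} plus the divisibility hypothesis to get $a=s\radius(X_T)\in\Z$) and your termwise argument for $L_{t-1}(s)\le U_{t-1}(s)$ are fine and close in spirit to the paper. The genuine gap is in the step you yourself identify as the crux, $U_{t-1}(s)<L_t(s)$. You collapse both sides to dominant terms, $U_{t-1}(s)\le c\,t\,(1+b)^{t-1}$ and $L_t(s)>t\,a^{t}$, reducing to the sufficient condition $c(1+b)^{t-1}\le a^{t}$, and then replace $1+b$ by $2b$. After dividing by $s^{t-1}$ this requires
\[
s\;\ge\;\frac{3^{m+n-2}\bigl(4R\sqrt{m-1}\bigr)^{t-1}}{\radius(X_T)^{t}},
\]
which is a factor $2^{t-1}$ \emph{stronger} than the stated hypothesis, not ``exactly the form'' of it. And there is no slack to absorb this factor: the hypothesis, used with $i=t$ and multiplied by $s^{t-1}\radius(X_T)^{t}$, guarantees only $c\,b^{t-1}\le a^{t}$, and it can be (essentially) tight; in that case $c(1+b)^{t-1}\le a^{t}$ already fails, let alone $c(2b)^{t-1}\le a^{t}$. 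The only spare factor your estimates leave is $(t+1)/t$, from $L_t(s)\ge (t+1)a^{t}$, which is far smaller than $2^{t-1}$. So, as written, the separation step does not follow from the stated hypothesis.

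The paper's proof avoids this entirely by never passing to dominant terms: it expands both quantities binomially, $U_{t-1}(s)=\sum_{i=1}^{t}\binom{t}{i}\,c\,b^{i-1}$ and $L_{t}(s)=\sum_{i=0}^{t}\binom{t+1}{i}a^{i}$, and compares term by term. The hypothesis delivers exactly $c\,b^{i-1}\le a^{i}$ for each $i$, and strictness comes free from $\binom{t}{i}<\binom{t+1}{i}$; no constant is lost, so the stated threshold suffices verbatim. Your route could in principle be repaired without strengthening the hypothesis, but only with more care than ``absorbed into the slack'': use $1+b\le(1+1/b)b$ instead of $1+b\le 2b$, note that the hypothesis with $i=1$ together with $\radius(X_T)\le 2R\sqrt{m-1}$ forces $b\ge 2\cdot 3^{m+n-2}$, so that $(1+1/b)^{t-1}\le e^{(t-1)/b}<\tfrac{t+1}{t}$, and then spend the spare factor $(t+1)/t$. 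Absent such an argument, the assertion that the constant factors are benign is precisely the missing step, and at face value it is false.
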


\proof To simplify notation we denote $\radius(X_T)$ by $r$.  
Then, we have that \begin{align*}&\sum_{i=1}^{k} {k\choose i}3^{m+n-2}(2sR\sqrt{m-1})^{i-1}=U_{k-1}(s)\text{ and}
\\& \sum_{i=1}^{k} {k+1\choose i}(\lfloor sr\rfloor)^{i}<\sum_{i=0}^{k} {k+1\choose i}\lfloor \radius(sX_T)\rfloor^{i}=L_k(s).\end{align*}

Since $r=\frac{z}{t}$ for some $t\in\{1\dots,k\}$ by Proposition \ref{prop:MCMofFullDimensionalCell} we have, by our choice of $s$, $sr\in\Z$.  Hence $$3^{m+n-2}(2sR\sqrt{m-1})^{i-1}\leq (sr)^i =\lfloor sr\rfloor^{i}\text{ and } {k\choose i}<{k+1\choose i}\forall i$$ Therefore $U_{k-1}(s)<L_k(s)$.
The result then follows from Theorem \ref{thm:CountingBoundsSummary} and the fact that $L_t(s)$ and $U_t(s)$ are increasing functions.
\endproof

\begin{thm}TROPRANK $\le_p$ $\#$INTPOINTS(TP)
\end{thm}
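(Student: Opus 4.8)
The plan is to mirror the reduction already carried out for $\text{VOLUME(TP)}$, but using the integer-point bounds of Theorem~\ref{thm:CountingBoundsSummary} in place of the volume bounds, together with the separation established in Proposition~\ref{prop:CountIntDisjointInterval}. First I would take an arbitrary instance $A\in\{0,1\}^{m\times n}$, $k$ of TROPRANK, set $P=\tconv(A)\subseteq\TP^{m-1}$, and note that $R=1$ so that $R\geq 1$ and all the hypotheses of the preceding propositions are in force. The goal is to produce, in polynomial time, an instance $M,\alpha$ of $\#\text{INTPOINTS(TP)}$ such that $\troprank(A)\geq k$ if and only if $|\tconv(M)\cap\Z^{m-1}|\geq\alpha$.

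The key device is to choose the dilation factor $s$ large enough to satisfy both conditions of Proposition~\ref{prop:CountIntDisjointInterval}: namely $s\equiv 0\bmod t$ for all $t\in\{1,\dots,m\}$ (so that $sr=s\radius(X_T)\in\Z$, using $\radius(X_T)=z/t$ from Proposition~\ref{prop:MCMofFullDimensionalCell}), and $s$ exceeding the stated threshold. A concrete choice is to take $s$ to be $m!$ times a suitable power of the bound $3^{m+n-2}(2R\sqrt{m-1})$ divided by $\radius(X_T)^{\,\cdot}$; since $\radius(X_T)\geq 1/m$ by Proposition~\ref{prop:MCMofFullDimensionalCell}, one can replace $\radius(X_T)$ by $1/m$ in the denominator to get an explicit $s$ that does not depend on the unknown rank and whose \emph{logarithm} is polynomial in $m,n,W$. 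Then $sP=\tconv(sA)$, and by Proposition~\ref{prop:HOPB.Generators} (applied trivially, or simply because $sP=\tconv(sA)$) I obtain the generating matrix $M=sA\in\Z^{m\times n}$ in polynomial time. By Theorem~\ref{thm:CountingBoundsSummary} the count $|sP\cap\Z^{m-1}|$ lies in the interval $[L_{k-1}(s),U_{k-1}(s)]$, and Proposition~\ref{prop:CountIntDisjointInterval} guarantees these intervals are pairwise disjoint and increasing in the dimension index. Hence the rank is recovered exactly from which interval the count falls into: choosing any threshold $\alpha\in(U_{k-2}(s),L_{k-1}(s)]$ (rational, computable in polynomial time), one has $\troprank(A)\geq k\iff|sP\cap\Z^{m-1}|\geq\alpha$.

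Thus the map $(A,k)\mapsto(sA,\alpha)$ is the desired Karp reduction, and it runs in polynomial time provided I verify that $s$ and $\alpha$ each have a polynomial number of bits. I expect this bit-length bookkeeping to be the main obstacle, exactly as the analogous estimate $\log(\varepsilon)=\mathcal{O}(m\log m+m+n+m\log W)$ was the crux in the volume reduction. The point is that $s$ enters the construction through the integer matrix $sA$, so its size in the input is $\log s$, and I must confirm that $\log s=\mathcal{O}(\poly(m,n))$; this follows because $\log(3^{m+n-2}(2R\sqrt{m-1})^{i-1}m^{i})$ is polynomially bounded, and taking a least common multiple with $m!$ contributes only $O(m\log m)$ additional bits. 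Once this is checked, the threshold $\alpha$ — being a value of the explicit polynomials $L_{\cdot}(s),U_{\cdot}(s)$ at the integer $s$ — likewise has polynomially many bits, and the reduction is complete, establishing $\text{TROPRANK}\le_p\#\text{INTPOINTS(TP)}$.
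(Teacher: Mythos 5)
Your proposal is correct and takes essentially the same route as the paper's proof: dilate $P$ by a large, highly divisible integer $s$, use Theorem~\ref{thm:CountingBoundsSummary} together with Proposition~\ref{prop:CountIntDisjointInterval} to confine the count $|sP\cap\Z^{m-1}|$ to disjoint intervals indexed by the tropical rank, and place a rational threshold $\alpha$ in the gap below the $k$-th interval so that $(A,k)\mapsto(sA,\alpha)$ is a Karp reduction. Your one deviation --- choosing $s$ via $m!$ and the uniform bound $\radius(X_T)\geq 1/m$, so that $s$ (and the computable proxy for $L_{k-1}(s)$) does not depend on the unknown actual rank --- is a mild sharpening of the paper's choice $s=k!\lceil\tilde{s}\rceil$, which implicitly conflates the target $k$ with the true rank of $A$.
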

\proof Let $A\in\{0,1\}^{m\times n}$ be an instance of TROPRANK.
Choose $s\in\N$ big enough to satisfy the constraints in Proposition \ref{prop:CountIntDisjointInterval}.  For example, $s=k!\lceil\tilde{s}\rceil$ for \begin{align*}\tilde{s}=k^k3^{m+n-2}(2R\sqrt{m-1})^{k-1}&\geq\max_{i=1,\dots,k}\frac{3^{m+n-2}(2R\sqrt{m-1})^{k-1}}{\left(\frac{1}{k}\right)^i}\\&\geq \max_{i=1,\dots,k}\frac{3^{m+n-2}(2R\sqrt{m-1})^{i-1}}{\radius(X_T)^{i}},\end{align*} where we use that $\radius(X_T)\geq\frac{1}{k}$ from Proposition \ref{prop:MCMofFullDimensionalCell}.  Observe that $s$ has a polynomial number of bits since $\log(s)=\mathcal{O}(m+n+m\log(m)+m\log(W))$.

Note that $M\in\R^{m\times n}$ with columns $M_j=sA_j$ satisfies $\tconv(M)=s\tconv(A)=sP$.
Taking any $\alpha\in\Q\cap (U_{\troprank(A)-2}(s), L_{\troprank(A)-1}(s))$ we conclude from Theorem \ref{thm:CountingBoundsSummary} and Proposition \ref{prop:CountIntDisjointInterval} that $$|sP\cap\Z^{k-1}|\geq\alpha\Leftrightarrow|sP\cap\Z^{k-1}|\geq L_{k-1}(s)\Leftrightarrow \troprank(A)\geq k.$$  Hence
$$(A, k)\in \text{TROPRANK}\Leftrightarrow (M, \alpha)\in\#\text{INTPOINTS(TP)}.$$
Then, since $\alpha$ can be found in time polynomial in the size of the input, the result follows. 
\endproof

\begin{cor}\label{cor:CountingInt.Vertices.NPHard} Counting the number of integer points in tropical polytopes described by vertices is NP-hard.
\end{cor}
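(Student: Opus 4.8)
The plan is to derive this immediately from the reduction just established. Recall that, by the theorem of Kim and Roush~\cite{KR05} (see also Shitov~\cite{Sh13}), TROPRANK is NP-hard, even when restricted to matrices with entries in $\{0,1\}$. I would first observe that the preceding theorem furnishes a polynomial-time Karp reduction TROPRANK $\le_p$ $\#$INTPOINTS(TP); since NP-hardness is preserved under Karp reductions, this already shows that the decision problem $\#$INTPOINTS(TP) is NP-hard.

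The only remaining point is to pass from the decision version to the genuine counting problem named in the statement. Here I would note that any algorithm computing the exact value $|\tconv(A)\cap\Z^{m-1}|$ can be used to answer the decision question defining $\#$INTPOINTS(TP): one simply computes the count and compares it with the given threshold $\alpha$. Hence computing the number of integer points in a tropical polytope described by vertices is at least as hard as the decision problem $\#$INTPOINTS(TP), and is therefore NP-hard.

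I do not expect any genuine obstacle here, as all the substantive work was carried out in the proof of the reduction theorem. In particular, the choice of the dilation factor $s$ in Proposition~\ref{prop:CountIntDisjointInterval}, together with the divisibility condition $s\equiv 0 \bmod t$ for all $t\in\{1,\dots,k\}$ and the lower bound $\radius(X_T)\geq \frac{1}{k}$ from Proposition~\ref{prop:MCMofFullDimensionalCell}, guarantees that the counting intervals $[L_{t-1}(s),U_{t-1}(s)]$ associated with distinct values of the tropical rank are disjoint; the polynomial bit-size estimate $\log(s)=\mathcal{O}(m+n+m\log(m)+m\log(W))$ then ensures that the reduction, and the selection of the separating threshold $\alpha$, run in polynomial time. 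The corollary is thus an immediate consequence, entirely parallel to the volume case recorded in Corollary~\ref{cor:Volume.Vertices.NPHard}.
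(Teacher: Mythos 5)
Your proposal is correct and matches the paper's reasoning exactly: the paper treats this corollary as an immediate consequence of the preceding theorem TROPRANK $\le_p$ $\#$INTPOINTS(TP), and your two steps (NP-hardness of the decision problem via the Karp reduction from TROPRANK, then reducing the decision problem to exact counting by a single comparison against the threshold $\alpha$) are precisely the implicit argument. Nothing is missing.
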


\subsection{Hardness of approximating the number of integer points}
\label{sec:HardnessApproxCounting}

Finally, following the same method as proving that approximating the volume is NP-hard, we summarise here the results which prove that approximating the number of integer points in $P$ is hard.

Let $\alpha=2^{\poly(m,n)}$ where $\poly(m,n)$ is some polynomial function in $m$ and $n$.  Throughout this section $g(P)$ is defined to be the output of an approximation algorithm with factor $\alpha$ for the volume of a tropical polytope $P$. 

An immediate consequence of Theorem \ref{thm:CountingBoundsSummary} is the following.

\begin{thm}\label{thm:ApproxCountingBounds} Let $A\in\Z^{m\times n}$ with $\troprank(A)=k$ and $P=\tconv(A)\subset\TP^{m-1}$.   Suppose an approximation algorithm with factor $\alpha$ for counting the number of integer points of a tropical polytope $Q$ outputs $g(Q)$.

Then $$\alpha^{-1}L_{k-1}(s)\leq g(sP)\leq \alpha U_{k-1}(s). $$ \end{thm}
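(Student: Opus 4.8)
The plan is to combine the defining property of an $\alpha$-approximation algorithm with the two-sided bound of Theorem~\ref{thm:CountingBoundsSummary}; the statement is an immediate corollary, so there is essentially no substantive obstacle, and the argument is the exact counting analogue of the (unstated) proof of Theorem~\ref{thm:ApproxVolBounds}.

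First I would recall that, by definition, an approximation algorithm with factor $\alpha$ for counting integer points, applied to any tropical polytope $Q$, returns a value $g(Q)$ satisfying
\[
\alpha^{-1}|Q\cap\Z^{m-1}|\leq g(Q)\leq \alpha\,|Q\cap\Z^{m-1}|.
\]
I would then instantiate this at $Q=sP$, which is legitimate because $sP=s\tconv(A)=\tconv(sA)$ is itself a tropical polytope for every $s\in\N$, as observed at the start of Section~\ref{sec:CountingIntegerPoints}.

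Next I would invoke Theorem~\ref{thm:CountingBoundsSummary}, which furnishes
\[
L_{k-1}(s)\leq |sP\cap\Z^{m-1}|\leq U_{k-1}(s).
\]
Since $\alpha\geq 1$ (so that $\alpha^{-1}\leq 1\leq\alpha$), multiplying the left inequality by $\alpha^{-1}$ and the right inequality by $\alpha$ and chaining with the approximation guarantee gives
\[
\alpha^{-1}L_{k-1}(s)\leq \alpha^{-1}|sP\cap\Z^{m-1}|\leq g(sP)\leq \alpha\,|sP\cap\Z^{m-1}|\leq \alpha\, U_{k-1}(s),
\]
which is precisely the claimed sandwich. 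The entire content of the result resides in the bounds $L_{k-1}(s)$ and $U_{k-1}(s)$ already established in Proposition~\ref{prop:LowerBoundIntPoints} and Corollary~\ref{cor:Upperbound.Intpoints}; the only thing being done here is to absorb the approximation factor $\alpha$ into each side, which is purely mechanical. The real work — and the place where any difficulty lies — is upstream, in obtaining those tight-enough bounds whose logarithms are polynomially controlled so that the disjointness argument of Proposition~\ref{prop:CountIntDisjointInterval} can later be applied to the approximated quantities.
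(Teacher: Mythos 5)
Your proof is correct and matches the paper exactly: the paper states Theorem~\ref{thm:ApproxCountingBounds} as an immediate consequence of Theorem~\ref{thm:CountingBoundsSummary}, and your chaining of the approximation guarantee $\alpha^{-1}|sP\cap\Z^{m-1}|\leq g(sP)\leq \alpha|sP\cap\Z^{m-1}|$ with the bounds $L_{k-1}(s)\leq|sP\cap\Z^{m-1}|\leq U_{k-1}(s)$ is precisely that intended argument, written out. (Your aside that $\alpha\geq 1$ is needed is not even necessary — positivity of $\alpha$ suffices for the multiplication steps — but this is harmless.)
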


Let $\bar{s}$ be the smallest natural number such that $$\bar{s}\geq \max_{i=1,\dots,k}\frac{3^{m+n-2}\alpha^2(2R\sqrt{m-1})^{i-1}}{\radius(X_T)^i}.$$

\begin{prop}\label{prop:ApproxCountingIntervals}For all $s\geq \bar{s}$, such that, for all $t\in\{1,\dots,k\}, s\equiv 0\mod t$, the intervals $\left[\alpha^{-1}L_{k-1}(s), \alpha U_{k-1}(s)  \right]_{t=1,\dots,m}$ are disjoint.  
\end{prop}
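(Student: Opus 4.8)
The plan is to show that the intervals $[\alpha^{-1}L_{k-1}(s), \alpha U_{k-1}(s)]$ for $k=1,\dots,m$ are separated, which reduces—exactly as in Proposition \ref{prop:CountIntDisjointInterval}—to establishing the single key inequality $\alpha U_{k-1}(s) < \alpha^{-1}L_k(s)$ for each $k$, together with the observation that $L_t(s)$ and $U_t(s)$ are increasing in $t$. Since the intervals are ordered by the chain $L_{t-1}(s) \le U_{t-1}(s)$ (Theorem \ref{thm:CountingBoundsSummary}) and the new separation inequality, disjointness will follow immediately.

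First I would recall the two expansions already derived in the proof of Proposition \ref{prop:CountIntDisjointInterval}: writing $r=\radius(X_T)$,
\[
U_{k-1}(s)=\sum_{i=1}^{k}\binom{k}{i}3^{m+n-2}(2sR\sqrt{m-1})^{i-1},
\qquad
L_k(s)>\sum_{i=1}^{k}\binom{k+1}{i}\lfloor sr\rfloor^{i}.
\]
The divisibility condition $s\equiv 0 \bmod t$ for all $t\in\{1,\dots,k\}$ guarantees, via Proposition \ref{prop:MCMofFullDimensionalCell} (which gives $r=z/t$ with $t\le k$), that $sr\in\Z$, so $\lfloor sr\rfloor = sr$. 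The plan is then to compare the two sums term-by-term in $i$: for the binomial factors we have $\binom{k}{i}<\binom{k+1}{i}$, so it suffices to absorb the factor $\alpha^2$ into the comparison of the remaining factors, i.e. to show
\[
\alpha^2\, 3^{m+n-2}(2sR\sqrt{m-1})^{i-1}\leq (sr)^{i}
\quad\text{for all } i\in\{1,\dots,k\}.
\]
This last inequality is precisely what the definition of $\bar s$ enforces: dividing through by $(sr)^{i-1}$ and noting $s\ge\bar s$, the bound $s\ge \alpha^2 3^{m+n-2}(2R\sqrt{m-1})^{i-1}/r^{i}$ yields exactly this for every $i$. Summing the term-by-term inequalities gives $\alpha^2 U_{k-1}(s) < L_k(s)$, hence $\alpha U_{k-1}(s) < \alpha^{-1}L_k(s)$, which is the desired separation.

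The main obstacle—though a mild one—is bookkeeping the factor $\alpha^2$ correctly through the two places it must appear: it is \emph{not} enough that consecutive intervals for the exact counts $|sP\cap\Z^{m-1}|$ are disjoint (that was Proposition \ref{prop:CountIntDisjointInterval}); the widened intervals $[\alpha^{-1}L_{k-1}, \alpha U_{k-1}]$ inflate the upper endpoint by $\alpha$ and deflate the lower endpoint of the next interval by $\alpha^{-1}$, so the gap must survive a multiplicative loss of $\alpha^2$. The entire role of the strengthened threshold defining $\bar s$ (with its extra $\alpha^2$ compared to Proposition \ref{prop:CountIntDisjointInterval}) is to compensate for exactly this loss, and the computation above confirms it does. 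Finally, once $\alpha U_{k-1}(s)<\alpha^{-1}L_k(s)$ holds for each $k$, monotonicity of $L_t$ and $U_t$ in $t$ closes the argument that all $m$ intervals are pairwise disjoint.
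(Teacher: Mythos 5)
Your proof is correct and follows essentially the same route as the paper's: both expand $U_{k-1}(s)$ and $L_k(s)$ as binomial sums, use the divisibility hypothesis to get $\lfloor s\radius(X_T)\rfloor = s\radius(X_T)$, invoke the $\alpha^2$-strengthened threshold $\bar{s}$ to compare the sums term by term together with $\binom{k}{i}<\binom{k+1}{i}$, and conclude $\alpha U_{k-1}(s)<\alpha^{-1}L_k(s)$, finishing by monotonicity of $L_t$ and $U_t$. Your bookkeeping of where the factor $\alpha^2$ must be absorbed is exactly the content of the paper's chain of inequalities.
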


\proof Let $s\geq\bar{s}$.  For the same reasons as in the proof of Proposition \ref{prop:CountIntDisjointInterval}, and writing $r=\radius(X_T)$,
\begin{align*}\alpha^{-1}L_{k}(s)&>\sum_{r=1}^{k} {k+1\choose i}\alpha^{-1}\lfloor sr\rfloor^{i}=\sum_{r=1}^{k} {k+1\choose i}\alpha^{-1}( sr)^{i}
\\&\geq \sum_{r=1}^{k} \sum_{i=1}^k{k+1\choose i} 3^{m+n-2}\alpha(2 sR\sqrt{m-1})^{i-1}
\\&\geq\sum_{r=1}^{k} \sum_{i=1}^k{k\choose i} 3^{m+n-2}\alpha(2sR\sqrt{m-1})^{i-1}=\alpha U_{k-1}(s).
\end{align*}
The result follows.
\endproof

\begin{prop}\label{prop:ApproxIntervalCounting.iff.Troprank}Let $A\in\Z^{m\times n}$ with $\troprank(A)=k$ and $P=\tconv(A)\subset\TP^{m-1}$.   Fix $s\in\N$ with $s\geq\bar{s}$ and $s\equiv 0\mod t$ for all $t\in\{1,\dots,k\}$. Then $\troprank(A)=k$ if and only if  $$\alpha^{-1}L_{k-1}(s)\leq g(sP)\leq \alpha U_{k-1}(s). $$
\end{prop}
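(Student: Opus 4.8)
The plan is to mirror exactly the proof of the volume analogue, Proposition~\ref{prop:ApproxInterval.iff.Troprank}, combining the one-sided estimate of Theorem~\ref{thm:ApproxCountingBounds} with the disjointness of intervals furnished by Proposition~\ref{prop:ApproxCountingIntervals}. The two directions of the biconditional are handled separately, and neither requires any new computation.

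First I would dispatch the forward implication. If $\troprank(A)=k$, then Theorem~\ref{thm:ApproxCountingBounds} applies verbatim and immediately yields
$$\alpha^{-1}L_{k-1}(s)\leq g(sP)\leq \alpha U_{k-1}(s),$$
so there is nothing further to do in this direction. For the reverse implication I would argue by disjointness. Write $k_0:=\troprank(A)$ for the true rank, and suppose $g(sP)$ lies in the $k$-th interval $[\alpha^{-1}L_{k-1}(s),\alpha U_{k-1}(s)]$. Applying the forward implication to $k_0$ shows that $g(sP)$ also lies in the $k_0$-th interval $[\alpha^{-1}L_{k_0-1}(s),\alpha U_{k_0-1}(s)]$. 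By hypothesis $s\geq\bar{s}$ and $s\equiv 0\bmod t$ for every $t\in\{1,\dots,k\}$, which are precisely the conditions under which Proposition~\ref{prop:ApproxCountingIntervals} guarantees that the intervals $[\alpha^{-1}L_{t-1}(s),\alpha U_{t-1}(s)]_{t=1,\dots,m}$ are pairwise disjoint. Since the single real number $g(sP)$ cannot belong to two disjoint intervals, we must have $k=k_0$, that is $\troprank(A)=k$.

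The only point requiring care---and it is not truly an obstacle, since both ingredients are already established---is to confirm that the hypotheses on $s$ stated in the proposition match exactly those demanded by Proposition~\ref{prop:ApproxCountingIntervals}, and that an admissible $s$ of polynomial bit-size exists. The latter is witnessed by the choice $s=k!\lceil\tilde{s}\rceil$ used in the reduction TROPRANK $\le_p$ $\#$INTPOINTS(TP), whose logarithm is polynomially bounded in $m$, $n$ and $W$. With these verified, the biconditional follows at once by the disjointness argument above.
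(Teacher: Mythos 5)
Your proof is correct and follows exactly the paper's approach: the paper's own proof is simply the remark that the statement ``follows from Theorem~\ref{thm:ApproxCountingBounds} and Proposition~\ref{prop:ApproxCountingIntervals}'', which is precisely your forward implication plus the disjoint-intervals argument for the converse. Your fleshed-out version (introducing the true rank $k_0$ and noting that $g(sP)$ cannot lie in two disjoint intervals) is a faithful and more explicit rendering of the same reasoning.
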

\proof Follows from Theorem \ref{thm:ApproxCountingBounds} and Proposition \ref{prop:ApproxCountingIntervals}.
\endproof

\begin{thm}  There is no polynomial time approximation algorithm of factor $\alpha=2^{\poly(m,n)}$ for counting the number of integer points in a tropical polytope $P=\tconv(A)$ where $A\in \{0,1\}^{m\times n}$ provided P$\neq$NP. 
\end{thm}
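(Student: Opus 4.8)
The plan is to mirror exactly the reduction used for the volume approximation hardness result, now replacing the Hilbert outer parallel body construction by the integer scaling $sP$ and invoking the interval machinery of Subsection~\ref{sec:HardnessApproxCounting}. Concretely, I would show that a factor-$\alpha$ approximation algorithm for counting integer points, if it ran in polynomial time, would allow one to compute the tropical rank of a $\{0,1\}$-matrix in polynomial time, contradicting the Kim--Roush and Shitov NP-hardness of TROPRANK.

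First I would take an arbitrary instance $A\in\{0,1\}^{m\times n}$ of TROPRANK, noting that $R=1$ and $W=1$ for such matrices. I would then choose a scaling parameter $s$ meeting the two hypotheses of Proposition~\ref{prop:ApproxIntervalCounting.iff.Troprank}, namely $s\geq\bar s$ together with $s\equiv 0\bmod t$ for every $t\in\{1,\dots,k\}$. Exactly as in the exact-counting reduction (TROPRANK $\le_p$ $\#$INTPOINTS(TP)), an admissible explicit choice is $s=k!\lceil\tilde s\rceil$ with $\tilde s=k^k3^{m+n-2}\alpha^2(2\sqrt{m-1})^{k-1}$, where the bound $\radius(X_T)\geq 1/k$ of Proposition~\ref{prop:MCMofFullDimensionalCell} is used to dominate the maximum defining $\bar s$. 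Having fixed $s$, I would form the matrix $M$ with columns $M_j=sA_j$, so that $\tconv(M)=sP$, and run the hypothetical approximation algorithm on $sP$ to obtain $g(sP)$.

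The step I expect to be the crux is confirming that $s$ still has a polynomial number of bits, and this is precisely where the hypothesis $\alpha=2^{\poly(m,n)}$ enters: since $\log\alpha=\poly(m,n)$ and $W=1$, one gets $\log s=\mathcal{O}(\poly(m,n)+m+n+m\log m)$, so that $s$, and hence $M$, can be written down in time polynomial in the input size. It then remains to invoke Proposition~\ref{prop:ApproxIntervalCounting.iff.Troprank} together with the disjointness of the intervals $[\alpha^{-1}L_{k-1}(s),\alpha U_{k-1}(s)]$ furnished by Proposition~\ref{prop:ApproxCountingIntervals}: the output $g(sP)$ lies in exactly one such interval, whose index equals $\troprank(A)$. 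Reading off this index recovers the tropical rank of $A$, and in particular decides whether $\troprank(A)\geq k$, in polynomial time. Since computing the tropical rank of a $\{0,1\}$-matrix is NP-hard, no such polynomial-time approximation algorithm can exist unless P$=$NP.
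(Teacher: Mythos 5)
Your proposal is correct and follows essentially the same route as the paper's own proof: a reduction from TROPRANK via the scaled polytope $sP$, using Propositions~\ref{prop:ApproxCountingIntervals} and~\ref{prop:ApproxIntervalCounting.iff.Troprank} to locate $g(sP)$ in a unique interval indexed by the tropical rank. The only difference is that you spell out the explicit choice $s=k!\lceil\tilde s\rceil$ with the $\alpha^2$ factor and the bit-size estimate, details the paper leaves implicit in its statement that such an $s$ with polynomially many bits exists.
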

\proof Let $A\in\{0,1\}^{m\times n}$ and $P=\tconv(A)$. We can choose some natural number $s\geq \bar{s}$ satisfying $s\equiv 0\mod t$ for all $t\in\{1,\dots,k\}$ which has a polynomial number of bits.  Hence the matrix $M$ such that $\tconv(M)=sP$ can be found in polynomial time.

By Theorem \ref{prop:ApproxIntervalCounting.iff.Troprank}, any approximation algorithm with factor $\alpha$ for counting the number of integer points in a tropical polytope applied to $sP$ would also solve the problem of calculating the tropical rank of $A$.  Hence, under the assumption that P$\neq$NP, no such approximation algorithm can run in polynomial time.
\endproof

\section{Tropical polytopes described by inequalities}\label{sec:SharpPforInequalities}

First defined by Valiant \cite{Valiant}, the complexity class $\#$P is, loosely speaking, the class of the ``counting versions'' of NP problems, i.e., the class of functions that return the number of nondeterministic accepting paths of Turing machines with a polynomial running time. 

Valiant proves that $\#$Monotone-2-SAT is $\#$P-hard \cite{Val79}.  This is the problem of counting all satisfying assignments for any boolean formula $\mathcal{F}=C_1\wedge C_2\wedge\dots\wedge C_r$ where each clause $C_i$ contains exactly $2$ unnegated literals.

Here we consider tropical polytopes described by inequalities: $P=\{x\in\R^n: A\odot x\oplus c\leq B\odot x\oplus d\}.$
\begin{thm}\label{thm:VolumeIneqSharpP} Calculating the volume of a tropical polytope described by inequalities is $\#$P-hard.
\end{thm}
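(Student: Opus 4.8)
The plan is to reduce from $\#$Monotone-2-SAT, which Valiant proved to be $\#$P-hard, by encoding a monotone 2-SAT formula $\mathcal{F}=C_1\wedge\dots\wedge C_r$ (on variables $x_1,\dots,x_n$) as a tropical polytope described by inequalities, in such a way that its Euclidean volume equals (up to an explicitly computable factor) the number of satisfying assignments of $\mathcal{F}$. The central idea is that a monotone clause $C_i=(x_a\vee x_b)$ is satisfied by a $\{0,1\}$-assignment iff $\max(x_a,x_b)\geq 1$, and the complement of the satisfying region within the unit cube $[0,1]^n$ is the box where all variables in the clause are forced to $0$. Tropical inequalities of the form $A\odot x\oplus c\leq B\odot x\oplus d$, i.e.\ $\min$-plus inequalities, are well suited to expressing such coordinatewise min/max constraints cutting out axis-aligned polyhedral pieces, so each clause should translate into one tropical inequality.

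\textbf{Key steps.} First I would set up the geometry inside the unit cube $[0,1]^n$: each $\{0,1\}$-assignment corresponds to a unit subcube (or, after a suitable choice, a lattice point), and a satisfying assignment corresponds to a cube that survives the constraints imposed by the clauses. Concretely, for a clause $C_i=(x_a\vee x_b)$ the ``bad'' region $\{x: x_a<1,\ x_b<1\}$ should be excluded, while $\max(x_a,x_b)\geq 1$ is kept. Second, I would show that each such constraint can be written as a single tropical inequality in the allowed form $A\odot x\oplus c\leq B\odot x\oplus d$ (recall $\oplus=\min$, $\odot=+$), so that the tropical polytope $P=\{x: A\odot x\oplus c\leq B\odot x\oplus d\}$, intersected with the cube, is exactly the union of unit cells corresponding to satisfying assignments. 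Third, I would compute $\Vol^{m-1}(P)$ as the Euclidean volume of this union of unit cells; since distinct satisfying assignments give disjoint (or measure-disjoint) cells of equal volume, the total volume is $(\#\text{satisfying assignments})$ times the volume of a single cell, which is a fixed constant depending only on $n$. Reading off the number of satisfying assignments from the volume then gives a polynomial-time Turing (indeed Karp-style) reduction, establishing $\#$P-hardness. Throughout, I would verify that the encoding has size polynomial in $n$ and $r$ and that the constant cell-volume factor has polynomially many bits, so the reduction is genuinely polynomial.

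\textbf{The main obstacle} I expect is faithfully encoding the disjunctive (OR) structure of a clause using a \emph{single} tropical inequality while keeping the resulting region exactly a union of full-dimensional cells of controlled volume. Tropical inequalities naturally express conjunctions of $\min$/$\max$ comparisons, and turning the clause $\max(x_a,x_b)\geq 1$ into the semifield form $A\odot x\oplus c\leq B\odot x\oplus d$ without accidentally introducing spurious lower-dimensional pieces (which would not affect volume but could complicate the cell-decomposition bookkeeping) requires care. A related subtlety is ensuring the tropical polytope is the support of a polyhedral complex whose full-dimensional cells are \emph{precisely} the cells I want to count; here I would lean on the cell-description machinery of Proposition~\ref{prop:PolytopeCellSummary} and the fact (Proposition~\ref{prop.intpseudovertices}) that the relevant defining matrices are totally unimodular, so the cells have integer vertices and the per-cell volume is a clean constant. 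Once the clause-to-inequality dictionary is pinned down and shown to be volume-faithful, the remaining computation of $\Vol^{m-1}(P)$ as a count of equal-volume cells is routine.
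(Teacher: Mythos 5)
Your overall strategy --- reduce from $\#$Monotone-2-SAT, tile a cube by subcubes indexed by boolean assignments, arrange that exactly the subcubes of satisfying assignments survive the clause constraints, and read off the count from the volume --- is exactly the strategy of the paper's proof. But there is a genuine gap at the step you yourself flag as ``the main obstacle'', and with your sign convention it is not a matter of care: it is impossible. The solution set of any system $A\odot x\oplus c\leq B\odot x\oplus d$ over the min-plus semifield is closed under the tropical sum $\oplus$ (componentwise $\min$): if $u$ and $v$ are solutions, then $A\odot(u\oplus v)\oplus c=(A\odot u\oplus c)\oplus(A\odot v\oplus c)\leq(B\odot u\oplus d)\oplus(B\odot v\oplus d)=B\odot(u\oplus v)\oplus d$. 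Your clause region $\{x\in[0,1]^n:\max(x_a,x_b)\geq 1\}$ is not closed under $\oplus$: take $u$ with $u_a=1$, $u_b=0$ and $v$ with $v_a=0$, $v_b=1$ (all other coordinates admissible); both satisfy the clause, yet $u\oplus v$ has its $a$- and $b$-coordinates equal to $0$ and violates it, while still lying in the box. Since $\oplus=\min$ pushes coordinates downward, and in your orientation ``downward'' means ``more false'', monotone clauses are destroyed by tropical combinations; consequently no system of tropical inequalities --- one per clause or otherwise --- can carve out your satisfying region, and the dictionary you hope to ``pin down with care'' does not exist.

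The repair is a global sign flip, and it is precisely what the paper does: encode TRUE by very negative values. Work in $[-L,0]^n$, declare $x_i$ TRUE when $-L\leq x_i\leq -\frac{L}{2}$ and FALSE when $-\frac{L}{2}\leq x_i\leq 0$; a monotone clause $C_j$ then becomes $\min_{x_i\in C_j}x_i\leq-\frac{L}{2}$, which is literally of the admissible form (tropical linear form) $\leq$ (constant), one inequality per clause, and the resulting region is closed under $\oplus$ because $\min$ now pushes coordinates toward ``more true'', which monotone clauses tolerate. After this flip, the rest of your argument (the $2^n$ subcubes of side $\frac{L}{2}$, measure-disjointness, $\Vol^{n}(P)=\#(\mathcal{F})\left(\frac{L}{2}\right)^{n}$, and the polynomial size of the encoding) goes through verbatim and coincides with the paper's proof. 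Two smaller corrections: in $[0,1]^n$ the cells indexed by assignments are the $2^n$ subcubes of side $\frac{1}{2}$, not ``unit subcubes''; and the cell-decomposition machinery for vertex-described polytopes is not needed here --- the region decomposes into explicit axis-aligned boxes, so no appeal to types or total unimodularity is required.
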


\begin{figure}
\tiny
\begin{tikzpicture}\filldraw[gray!30] (0,-2)--(0,-4)--(-2,-4)--(-2,-2)--(0,-2);
\filldraw[gray!30] (-2,0)--(-4,0)--(-4,-2)--(-2,-2)--(-2,0);
\filldraw[gray!60] (-2,-2)--(-2,-4)--(-4,-4)--(-4,-2)--(-2,-2);
\draw[dotted] (0,-2)--(-4,-2) (-2,0)--(-2,-4);

\filldraw (0,0) circle (2pt) node[anchor=east] {$(0,0)$};
\node at (-3,-5) { $-L\leq x_1\leq -\frac{L}{2}$ };
\node at (-1,-5) { $-\frac{L}{2}\leq x_1\leq 0$ };
\node at (-6,-3) {$-L\leq x_2\leq -\frac{L}{2}$};
\node at (-6,-1) { $-\frac{L}{2}\leq x_2\leq 0$};

\node at (-3,1) { $y_1=$True };
\node at (-1,1) { $y_1=$False };
\node at (2,-3) {$y_2=$True};
\node at (2,-1) { $y_2=$False};

\draw[decorate, decoration={brace, amplitude=10pt,mirror},yshift=-10pt] (-2,-4)--(0,-4);
\draw[decorate, decoration={brace, amplitude=10pt,mirror}, xshift=-10pt] (-4,0)--(-4, -2);
\draw[decorate, decoration={brace, amplitude=10pt,mirror}, yshift=-10pt] (-4,-4)--(-2,-4);
\draw[decorate, decoration={brace, amplitude=10pt}, xshift=-10pt] (-4,-4)--(-4, -2);

\draw[decorate, decoration={brace, amplitude=10pt},yshift=10pt] (-2,0)--(0,0);
\draw[decorate, decoration={brace, amplitude=10pt}, xshift=10pt] (0,0)--(0, -2);
\draw[decorate, decoration={brace, amplitude=10pt}, yshift=10pt] (-4,0)--(-2,0);
\draw[decorate, decoration={brace, amplitude=10pt,mirror}, xshift=10pt] (0,-4)--(0, -2);

\draw[very thick] (-2,-2)--(0,-2)--(0,-4)--(-4,-4)--(-4,0)--(-2,0)--(-2,-2);
\end{tikzpicture}\normalsize\caption{Associating subcubes with boolean vectors}\label{fig:2DSubcubes}\end{figure}
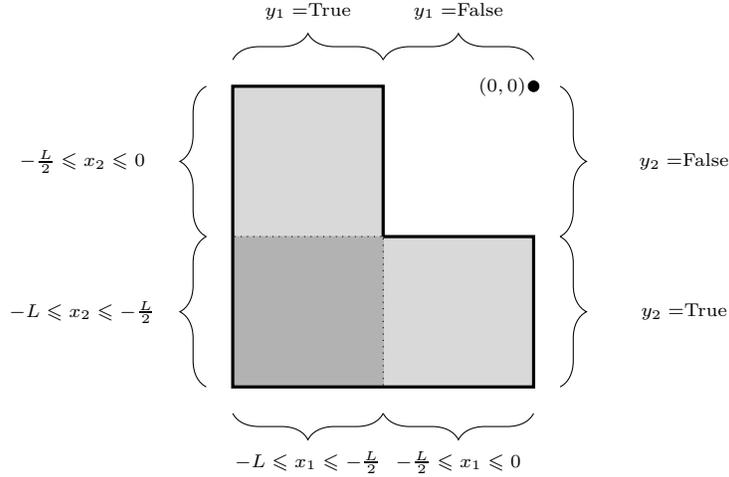

\begin{eg}We first give an example to highlight the idea of the reduction.
Suppose we have the following instance of $\#$Monotone-2-SAT: $F=(x_1\vee x_2)$.  
We define a tropical polytope $P=\{x\in\R^2: -L\leq x_i\leq 0, \frac{-L}{2}\geq \min(x_1, x_2)\}$, and draw $P$ in Figure \ref{fig:2DSubcubes}.
The figure suggests each subcube of $[-L, 0]^2$ can be identified uniquely with a boolean vector.  By construction, the subcube belongs to $P$ if and only if the associated boolean vector is a solution of $\mathcal{F}$.  Hence
the volume of $P$ is $3(\frac{L}{2})^2$ where $3$ is the number of solutions to $\mathcal{F}$.
\end{eg}

\proof The proof is by a reduction from $\#$Monotone-2-SAT.
Let $\mathcal{F}=C_1\wedge C_2\wedge\dots\wedge C_r$ be any instance of Monotone-2-SAT with $m$ clauses and $n$ literals.  Denote the number of satisfying assignments of $\mathcal{F}$ by $\#(\mathcal{F})$ and note that the set of solutions to $\mathcal{F}$ is equivalent to $\{ y\in\{-1,0\}^2: -1\geq\min_{y_i\in C_j} y_i, \forall j=1,\dots,r\}.$

We define, for some $L\in\N$, $$P=\{x\in\R^n: -L\leq x_i\leq 0,  \frac{-L}{2}\geq \min_{x_i\in C_j} x_i \}.$$ It is clear that there exist $A, B, c, d$ such that $P=A\odot x\oplus c\leq B\odot x\oplus d$ and hence $P$ is a tropical polytope.

We claim that $$\Vol^{n}(P)=\#(\mathcal{F})\left(\frac{L}{2}\right)^{n}=\left|\{ y\in\{-1,0\}^2: -1\geq\min_{y_i\in C_j} y_i, \forall j=1,\dots,r\}\right|\left(\frac{L}{2}\right)^{n}.$$  This is since there is a bijection from the solution set of $\mathcal{F}$, to a set of $n$-cubes which form a decomposition of $P$, we explain the detail below.

To each partition $T\cup F$ of $\{1,\dots, n\}$ we identify a subcube of $[-L, 0]^n$ given by
$$\{x\in [-L, 0]^n:  -L\leq x_i\le-\frac{L}{2}, \forall i\in T\text{ and } -\frac{L}{2}\leq x_i\leq 0,\forall i\in F\}.$$

Observe that this partitions $[-L, 0]^n$ into $2^n$ copies of $[-\frac{L}{2},0]^n$.  See also Figure \ref{fig:SubcubeVertices} where subcubes are additionally identified with a unique vertex of $[\frac{L}{2}, 0]^n$, and a partition of $\{1,\dots,n\}$.
The final step is to note that the subcube identified by the partition $T\cup F$ belongs to $P$ if and only if $y$ is a satisfying assignment of $\mathcal{F}$ where $y_j=TRUE\Leftrightarrow j\in T.$  Hence 
$$\bigg|\{y\in\{\text{TRUE, FALSE}\}^n: \mathcal{F}(y)=\text{TRUE}\}\bigg|\left(\frac{L}{2}\right)^n=\Vol^n(P).$$

Therefore computing the volume of a tropical polytope defined by inequalities is at least as hard as any problem in $\#$P.

\begin{figure}
\begin{minipage}{0.5\textwidth}\begin{tikzpicture}[scale=0.4]
\filldraw[gray!30] (0,8)--(-2,6)--(-2,4)--(0,2)--(2,4)--(2,6)--(0,8);
\draw[->] (-6,7)--(-6,8);\draw[->]  (-6,7)--(-5.5, 7.5);\draw[->] (-6,7)--(-5.5, 6.5);
\node[anchor=north west] at (-5.5, 6.5) {$x_1$}; \node[anchor=south west] at (-5.5, 7.5) {$x_2$}; \node[anchor=south] at (-6, 8) {$x_3$};
\foreach \x in{0,2,4}
{
\draw (0, -\x)--(4, -\x+4);
\draw (0, -\x)--(-4, -\x+4);
\draw (\x, \x)--(\x, \x-4);
\draw (-\x, \x)--(-\x, \x-4);
\draw (\x, \x)--(\x-4, \x+4);
\draw (-\x, \x)--(-\x+4, \x+4);
}
\node at (0,8) [anchor=south] {$(-L,0,0)$};
\filldraw (0,8) circle (8pt);
\draw[thick, dashed] (-2,4)--(0,6)--(2,4) (0,6)--(0,8) (2,6)--(2,4)--(0,2)--(0,4) (0,2)--(-2,4)--(-2,6);
\end{tikzpicture}

Partition $T=\{1\}$, $F=\{2,3\}$:

$-L\leq x_1\leq -\frac{L}{2}$, $-\frac{L}{2}\leq x_2, x_3\leq 0$.
\end{minipage}\begin{minipage}{0.5\textwidth}
\begin{tikzpicture}[scale=0.4]

\filldraw[gray!30] (2,6)--(0,4)--(0,2)--(2,0)--(4,2)--(4,4)--(2,6) ;
\foreach \x in{0,2,4}
{
\draw (0, -\x)--(4, -\x+4);
\draw (0, -\x)--(-4, -\x+4);
\draw (\x, \x)--(\x, \x-4);
\draw (-\x, \x)--(-\x, \x-4);
\draw (\x, \x)--(\x-4, \x+4);
\draw (-\x, \x)--(-\x+4, \x+4);
}
\draw[thick, dashed] (2,0)--(0,2)--(0,4);
\node at (4,4) [circle, fill=black] {}; \node at (4,4) [anchor=south] {$(0,0,0)$};
\end{tikzpicture}

Partition $T=\emptyset$, $F=\{1,2,3\}$:

$-\frac{L}{2}\leq x_1, x_2, x_3\leq 0$.\end{minipage}
\caption{Associating subcubes with partitions of $\{1,\dots,n\}$.}\label{fig:SubcubeVertices}\end{figure}
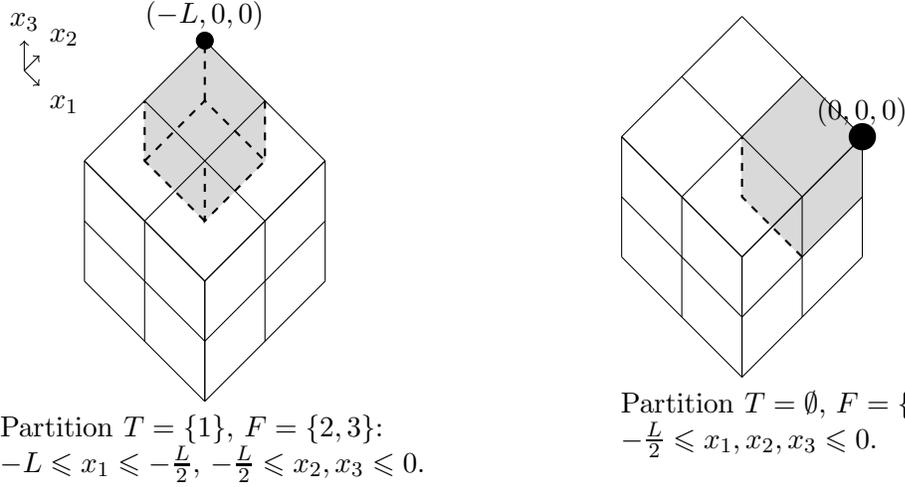
\endproof

\begin{thm} Counting the number of integer points in tropical polytope described by inequalities is $\#$P-hard.
\end{thm}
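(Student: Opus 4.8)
The plan is to reduce from $\#$Monotone-2-SAT, following the same strategy as the proof of Theorem~\ref{thm:VolumeIneqSharpP}, but engineering the construction so that the \emph{lattice points}, rather than the volume, encode the satisfying assignments. The key observation is that if one shrinks the box $[-L,0]^n$ used in that proof down to the unit cube $[-1,0]^n$, then the only integer points of the ambient box are its $2^n$ vertices, each of which is a $\{-1,0\}$-vector, i.e.\ a Boolean assignment under the correspondence $y_i=-1\Leftrightarrow x_i=\mathrm{True}$ already exploited in the volume argument. Concretely, given an instance $\mathcal{F}=C_1\wedge\dots\wedge C_r$ of $\#$Monotone-2-SAT on $n$ variables, I would define
\[
P=\Big\{x\in\R^n:\ -1\le x_i\le 0\ (1\le i\le n),\ \min_{x_i\in C_j}x_i\le -\tfrac12\ (1\le j\le r)\Big\}.
\]
Exactly as before, each constraint is a min-plus affine inequality: the bounds $-1\le x_i$ and $x_i\le 0$ have the required shape $A\odot x\oplus c\le B\odot x\oplus d$, and $\min_{x_i\in C_j}x_i=\bigoplus_{x_i\in C_j}x_i\le-\tfrac12$ is a single tropical inequality with constant right-hand side. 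Hence $P$ is a tropical polytope described by inequalities, and it can be written down in time polynomial in the size of $\mathcal{F}$.

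I would then establish a parsimonious correspondence. Since $P\subseteq[-1,0]^n$, we have $P\cap\Z^n\subseteq\{-1,0\}^n$, so every integer point of $P$ is a Boolean vector $y$. For such a $y$ the bound constraints are automatic, while $\min_{y_i\in C_j}y_i\le-\tfrac12$ holds if and only if some coordinate of $C_j$ equals $-1$, i.e.\ if and only if clause $C_j$ is satisfied under $y_i=\mathrm{True}\Leftrightarrow y_i=-1$. Therefore $y\in P$ iff $y$ satisfies $\mathcal{F}$, giving a bijection between $P\cap\Z^n$ and the satisfying assignments of $\mathcal{F}$, whence $|P\cap\Z^n|=\#(\mathcal{F})$. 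Because this is an identity (hence parsimonious) reduction computable in polynomial time, the $\#$P-hardness of $\#$Monotone-2-SAT (Valiant) transfers directly, proving the theorem.

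The delicate point, as in the volume argument, is the control of lattice points on shared faces. In the volume construction the $2^n$ subcubes of $[-L,0]^n$ meet along their boundaries, and for a counting statement those shared facets (together with interior lattice points of each subcube when $L>1$) would introduce uncontrolled double-counting, so the naive reuse of the volume polytope does \emph{not} give a clean count. Shrinking $L$ to the unit cube removes this obstacle at a stroke: the whole of $P\cap\Z^n$ collapses onto the cube's vertices, so no interior or non-Boolean integer point survives and the count becomes exact. The only items to check carefully are thus that $[-1,0]^n$ admits no spurious non-vertex integer point and that the tropical-inequality encoding faithfully reproduces the three families of constraints; both are immediate for the unit box.
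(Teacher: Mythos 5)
Your proposal is correct and matches the paper's own proof: the paper likewise reduces from $\#$Monotone-2-SAT using the volume construction with $L=1$, obtaining exactly the polytope $P=\{x\in\R^n: -1\le x_i\le 0,\ \min_{x_i\in C_j}x_i\le -\tfrac12\}$ and the bijection between $P\cap\Z^n$ and satisfying assignments. The only (cosmetic) difference is that the paper verifies the correspondence via the sub-cube decomposition, each sub-cube containing exactly one integer vertex, whereas you argue directly that every integer point of $P$ is a Boolean vector satisfying $\mathcal{F}$.
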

\proof Similar to the proof of Theorem \ref{thm:VolumeIneqSharpP} we reduce from $\#$Monotone-2-SAT but set $L=1$.  So we have the tropical polytope $$P=\{x\in\R^n: -1\leq x_i\leq 0,  -\frac{1}{2}\geq \min_{x_i\in C_j} x_i \}.$$

Again, we partition the cube $[-1, 0]^n$ into $2^n$ sub-$n$-cubes of length $\frac{1}{2}$.
Then, $y\in\{0,1\}^n$ is a solution to $\mathcal{F}$ if and only if the sub-$n$-cube containing the integer point $x=-y$ belongs to $P$.  Since each sub-$n$-cube contains exactly one integer vertex, and there are exactly $2^n$ integer vertices of $[-1, 0]^n$,  we conclude  
$|\{y\in\{0, 1\}^n: \mathcal{F}(y)=1\}|=|P\cap\Z^n|.$
\endproof

\section{Tropical Polytopes in Fixed Dimension}

We briefly discuss the complexity of the same problems when the dimension is not part of the input.

We begin by discussing the volume.  We have $\Vol^{m-1}(P)=\sum_{s\in\sS^{m-1}}\Vol^{m-1}X_S$ where, by Proposition \ref{prop:number.faces.P}, $$|\sS^{m-1}|\leq {n+m-(m-1)-2 \choose n-(m-1)-1, m-(m-1)-1, m-1}= {n-1 \choose n-m, 0, m-1}=\mathcal{O}\left(\frac{n^m}{m!}\right)$$ which has polynomial size for fixed $m$.  Thus, under our assumption, there are only a polynomial number of cells - which are classical polytopes - whose volume needs to be calculated.  Finally, for fixed dimension, calculating the volume of a classical polytope can be achieved in polynomial time, for detail see, for example, Section 3 of the survey \cite{GK94}.

For counting integer points, it is possible to simply consider each integer point in the translation of $[0,R]^{m-1}$ which contains $P$ (see \eqref{eqn:DefR} and Proposition \ref{prop:CellsinCube} for detail).  For fixed $m$, checking each of $(R+1)^{m-1}$ points is achieved in polynomial time.

We summarise these observations in the following theorem.
\begin{thm} When the dimension is fixed, the following can be computed in polynomial time:
\begin{enumerate}
\item The number of integer points in a tropical polytope with integer generators.
\item The volume of a tropical polytope with integer generators.
\end{enumerate}
\end{thm}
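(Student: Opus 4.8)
The plan is to treat the two parts uniformly, exploiting that for fixed $m$ the tropical complex of $P=\tconv(A)$ has only polynomially many cells, so that each global quantity decomposes into a polynomial number of computations that are individually tractable in fixed dimension. For the volume I would start from the decomposition $\Vol^{m-1}(P)=\sum_{S\in\sS^{m-1}}\Vol^{m-1}(X_S)$ and bound $|\sS^{m-1}|$ by the trinomial coefficient ${n-1\choose n-m,0,m-1}=\mathcal{O}(n^m/m!)$ using Proposition~\ref{prop:number.faces.P} together with the remark following it (which makes this an upper bound without any genericity assumption). The step that actually has to be carried out is the \emph{enumeration} of these cells. The cleanest route is to observe that $\type(x)$ is constant exactly on the chambers of the hyperplane arrangement $\{x_{i'}-x_i=a_{i'j}-a_{ij}\}$, ranging over all columns $j$ and pairs $i,i'$; this is an arrangement of $\mathcal{O}(nm^2)$ hyperplanes in $\R^{m-1}$, and for fixed $m-1$ it has $\mathcal{O}(n^{m-1})$ faces, all enumerable in polynomial time. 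Each bounded full-dimensional chamber is a top-dimensional cell $X_S$ of the complex of $P$, presented as a classical polytope by the inequalities \eqref{eqn:CellInequalities}; since computing the Euclidean volume of a polytope given by inequalities is polynomial in fixed dimension (Section~3 of \cite{GK94}), I would compute each $\Vol^{m-1}(X_S)$ and sum, giving part~(2).

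For the integer points I would run the same enumeration but over \emph{all} cells of the complex, of every dimension $0,\dots,m-1$; by Proposition~\ref{prop:number.faces.P} the total number of bounded cells is still polynomial for fixed $m$, and all of them arise as bounded faces of the arrangement above. The key structural fact is that the support $P$ of a polyhedral complex is the disjoint union of the relative interiors of its cells, so
\[
|P\cap\Z^{m-1}|=\sum_{C}\bigl|\operatorname{relint}(C)\cap\Z^{m-1}\bigr|,
\]
the sum being over all bounded cells $C$. This sidesteps any double counting of lattice points lying on shared faces. Each summand is the number of lattice points in the relative interior of a rational polytope of fixed dimension, which is computable in polynomial time by Barvinok's algorithm \cite{Bar94}, applied either directly or to the closures together with Ehrhart--Macdonald reciprocity (equivalently, inclusion--exclusion over the faces, which are themselves cells of the complex). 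Summing over the polynomially many cells yields part~(1).

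The hard part, and the reason I would route the count through Barvinok's algorithm, is that the superficially simpler alternative is not actually a polynomial-time algorithm. Scanning every lattice point of an enclosing box $u+[0,R]^{m-1}$ (with $R$ as in \eqref{eqn:DefR}) takes time proportional to $(R+1)^{m-1}$, and since $R$ is of the order of $\max_{i,j}|a_{ij}|$ this is exponential in the \emph{bit}-size of the input once the entries of $A$ are large, even though the output $|P\cap\Z^{m-1}|$ itself has only $\mathcal{O}((m-1)\log R)$ bits. The genuine obstacle is therefore to count robustly for arbitrary integer entries: one must combine the polynomial cell enumeration with a fixed-dimension lattice-point counter that is polynomial in the bit-size rather than in the magnitude, and organise the computation through relative interiors so that the per-cell counts compose correctly into $|P\cap\Z^{m-1}|$. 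Once the cells have been enumerated, the volume part is comparatively routine, as it reduces to finitely many fixed-dimension volume computations whose results are simply added.
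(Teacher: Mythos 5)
Your proposal is correct, and on part (2) it coincides with the paper's own argument: decompose $\Vol^{m-1}(P)$ over the full-dimensional bounded cells, bound $|\sS^{m-1}|$ via Proposition \ref{prop:number.faces.P} so that there are only polynomially many of them for fixed $m$, and invoke a fixed-dimension classical volume algorithm (Section 3 of \cite{GK94}); your explicit enumeration of the cells through the arrangement of hyperplanes $x_{i'}-x_i=a_{i'j}-a_{ij}$ is a useful detail that the paper leaves implicit. On part (1), however, you take a genuinely different --- and in fact stronger --- route. The paper counts integer points by scanning every lattice point of a translate of the box $[0,R]^{m-1}$, with $R$ as in \eqref{eqn:DefR} and Proposition \ref{prop:CellsinCube}; as you observe, the resulting running time $(R+1)^{m-1}$ is polynomial in the \emph{magnitude} of the entries of $A$ but exponential in their bit-size, so strictly speaking the paper's argument establishes only a pseudopolynomial bound in the Turing model (the very model the paper insists on elsewhere when comparing tropical and classical complexity). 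Your alternative --- partition $P$ into the disjoint relative interiors of its bounded cells so that no lattice point on a shared face is counted twice, count the lattice points of each relative interior by Barvinok's algorithm \cite{Bar94} together with Ehrhart--Macdonald reciprocity (or inclusion--exclusion over faces, which are themselves cells), and sum over the polynomially many cells --- runs in time polynomial in the bit-size of the input, and therefore proves the theorem as stated for arbitrary integer generators. What your approach buys is precisely this robustness to large entries; what the paper's approach buys is brevity, at the cost of being valid only under a unary (or bounded-entry) reading of the input size.
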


\section{Concluding Remarks}\label{sec-comments}

We showed in Corollaries \ref{cor:Volume.Vertices.NPHard} and \ref{cor:CountingInt.Vertices.NPHard} that, for tropical polytopes described by vertices, VOLUME(TP) and $\#$INTPOINTS(TP) are NP-hard. We suspect the answer to the following question is positive.
\begin{question} Are the following problems $\#$P-hard?
\begin{enumerate}
\item Counting the number of integer points in a tropical polytope described by a list of vertices.

\item Calculating the volume of a tropical polytope described by a list of vertices.
\end{enumerate}
\end{question}

Another open problem concerns the complexity of the same questions when restricting to the class of polytropes.  Polytropes have appeared in several works including~\cite{TropOrdConvexity,DefiniteClosures,KleeneStarPolytropes}.  They are special \emph{alcoved polytopes} corresponding to the root system $A_n$ \cite{AlcovedPolytopes}. Alcoved polytopes are defined by inequalities of the form $\langle a,x\rangle\leq c$ where $a\in\Phi$ for some root system $\Phi$, and $c\in\Z$.  For more detail on alcoved polytopes, see \cite{AlcovedPolytopes}, \cite{AlcovedPolytopes2} and \cite{SymmetricAlcoved}. 
Lam and Postnikov gave in~\cite{AlcovedPolytopes} a combinatorial formula for the normalised volume of an alcoved polytope as the sum of the number of lattice points of other alcoved polytopes, but this does not lead to a polynomial time method. This raises the following problem.
\begin{question}
Determine the complexity of:
\begin{enumerate}
\item Counting the number of integer points of a polytrope.
\item Calculating the volume of a polytrope.
\end{enumerate}
\end{question}

A last open problem, which was one of the initial motivations of the present work, is the question of deciding whether a tropical polytope contains  an integer point.  Determining the existence of an integer point in a classical polytope is NP-hard \cite{GJ79} whereas, for tropical polytopes the complexity is still unknown.  This is called the integer image problem in \cite{BM13}. A pseudopolynomial algorithm is described there, and this problem is also proved to be polynomially solvable in some special cases.  The complexity is further investigated in \cite{Mac16}. 

\section{Acknowledgments}

S. Gaubert was supported by the ANR project MALTHY (ANR-13-INSE-0003), and by ajoint grant of the PGMO program of EDF, Fondation Math\'ematique Jacques Hadamard, and Labex Jacques Hadamard.

M. MacCaig  was supported by a public grant as part of the \emph{Investissement d'avenir} project, reference ANR-11-LABX-0056-LMH, LabEx LHM.

\bibliographystyle{alpha}
\bibliography{tropicalvolume}

\newcommand{\etalchar}[1]{$^{#1}$}
\begin{thebibliography}{BDLD{\etalchar{+}}05}

\bibitem[ABGJ14]{1405.4161}
Xavier Allamigeon, Pascal Benchimol, St{\'e}phane Gaubert, and Michael Joswig.
\newblock Long and winding central paths, 2014.
\newblock arXiv:1405.4161.

\bibitem[ABGJ15]{benchimol2013}
X.~Allamigeon, P.~Benchimol, S.~Gaubert, and M.~Joswig.
\newblock Tropicalizing the simplex algorithm.
\newblock {\em SIAM J. Disc. Math.}, 29(2):751--795, 2015.

\bibitem[AFG{\etalchar{+}}14]{uli2013}
X.~Allamigeon, U.~Fahrenberg, S.~Gaubert, A.~Legay, and R.~Katz.
\newblock Tropical {F}ourier-{M}otzkin elimination, with an application to
  real-time verification.
\newblock {\em International Journal of Algebra and Computation},
  24(5):569--607, 2014.

\bibitem[AGG08]{AGG08}
X.~Allamigeon, S.~Gaubert, and \'E. Goubault.
\newblock Inferring min and max invariants using max-plus polyhedra.
\newblock In {\em Proceedings of the 15th International Static Analysis
  Symposium (SAS'08)}, volume 5079 of {\em LNCS}, pages 189--204. Springer,
  Valencia, Spain, 2008.

\bibitem[AGG12]{AGG}
M.~Akian, S.~Gaubert, and A.~Guterman.
\newblock Tropical polyhedra are equivalent to mean payoff games.
\newblock {\em International Journal of Algebra and Computation}, 22(1):125001
  (43 pages), 2012.

\bibitem[AK13]{AK13}
X.~Allamigeon and R.~D. Katz.
\newblock Minimal external representations of tropical polyhedra.
\newblock {\em J. Combin. Theory Ser. A}, 120(4):907--940, 2013.

\bibitem[Bar94]{Bar94}
A.~I. Barvinok.
\newblock A polynomial time algorithm for counting integral points in polyhedra
  when the dimension is fixed.
\newblock {\em Math. Oper. Res.}, 19(4):769--779, 1994.

\bibitem[BCOQ92]{BCOQ92}
F.~Baccelli, G.~Cohen, G.~J. Olsder, and J.~P. Quadrat.
\newblock {\em Synchronization and linearity}.
\newblock Wiley Series in Probability and Mathematical Statistics: Probability
  and Mathematical Statistics. John Wiley \& Sons, Ltd., Chichester, 1992.
\newblock An algebra for discrete event systems.

\bibitem[BDLD{\etalchar{+}}05]{BLDPS05}
M.~Beck, J.~A. De~Loera, M.~Develin, J.~Pfeifle, and R.~P. Stanley.
\newblock Coefficients and roots of {E}hrhart polynomials.
\newblock In {\em Integer points in polyhedra---geometry, number theory,
  algebra, optimization}, volume 374 of {\em Contemp. Math.}, pages 15--36.
  Amer. Math. Soc., Providence, RI, 2005.

\bibitem[BF87]{BaranyFuredi}
I.~B{\'a}r{\'a}ny and Z.~F{\"u}redi.
\newblock Computing the volume is difficult.
\newblock {\em Discrete Comput. Geom.}, 2(4):319--326, 1987.

\bibitem[BH85]{BH85}
P.~Butkovi{\v{c}} and F.~Hevery.
\newblock A condition for the strong regularity of matrices in the minimax
  algebra.
\newblock {\em Discrete Appl. Math.}, 11(3):209--222, 1985.

\bibitem[BH04]{BriecHorvath04}
W.~Briec and C.~Horvath.
\newblock B-convexity.
\newblock {\em Optimization}, 53:103--127, 2004.

\bibitem[BM13]{BM13}
P.~Butkovi{\v{c}} and M.~MacCaig.
\newblock On integer eigenvectors and subeigenvectors in the max-plus algebra.
\newblock {\em Linear Algebra Appl.}, 438(8):3408--3424, 2013.

\bibitem[Bor07]{BormanThesis}
M.~Borman.
\newblock {\em The ranks of a tropical matrix}.
\newblock PhD thesis, Reed College, 2007.

\bibitem[But10]{butkovic}
P.~Butkovi{\v{c}}.
\newblock {\em Max-linear Systems: Theory and Algorithms}.
\newblock Springer Monogr. Math. Springer, London, 2010.

\bibitem[CGQ99]{ccggq99}
G.~Cohen, S.~Gaubert, and J.P. Quadrat.
\newblock Max-plus algebra and system theory: where we are and where to go now.
\newblock {\em Annual Reviews in Control}, 23:207--219, 1999.

\bibitem[CGQ04]{cgq02}
G.~Cohen, S.~Gaubert, and J.-P. Quadrat.
\newblock Duality and separation theorems in idempotent semimodules.
\newblock {\em Linear Algebra and Appl.}, 379:395--422, 2004.

\bibitem[DF88]{DyerFrieze}
M.~E. Dyer and A.~M. Frieze.
\newblock On the complexity of computing the volume of a polyhedron.
\newblock {\em SIAM J. Comput.}, 17(5):967--974, 1988.

\bibitem[DFK91]{RandomAlg}
M.~Dyer, A.~Frieze, and R.~Kannan.
\newblock A random polynomial-time algorithm for approximating the volume of
  convex bodies.
\newblock {\em J. Assoc. Comput. Mach.}, 38(1):1--17, 1991.

\bibitem[DGJ16]{DGJ16}
J.~Depersin, S.~Gaubert, and J.~Joswig.
\newblock A tropical isoperimetric inequality.
\newblock arXiv:1611.04148, 2016.

\bibitem[DKM97]{DKM97}
M.~Dyer, R.~Kannan, and J.~Mount.
\newblock Sampling contingency tables.
\newblock {\em Random Structures Algorithms}, 10(4):487--506, 1997.

\bibitem[DL05]{Loera05a}
J.~A. De~Loera.
\newblock The many aspects of counting lattice points in polytopes.
\newblock {\em Math. Semesterber.}, 52(2):175--195, 2005.

\bibitem[dlP13]{KleeneStarPolytropes}
M.~J. de~la Puente.
\newblock On tropical {K}leene star matrices and alcoved polytopes.
\newblock {\em Kybernetika (Prague)}, 49(6):897--910, 2013.

\bibitem[DLS03]{LS03}
J.~A. De~Loera and B.~Sturmfels.
\newblock Algebraic unimodular counting.
\newblock {\em Math. Program.}, 96(2, Ser. B):183--203, 2003.
\newblock Algebraic and geometric methods in discrete optimization.

\bibitem[DS04]{TropConv}
M.~Develin and B.~Sturmfels.
\newblock Tropical convexity.
\newblock {\em Doc. Math.}, 9:1--27 (electronic), 2004.

\bibitem[DSS05]{TropRank}
M.~Develin, F.~Santos, and B.~Sturmfels.
\newblock On the rank of a tropical matrix.
\newblock In {\em Combinatorial and computational geometry}, volume~52 of {\em
  Math. Sci. Res. Inst. Publ.}, pages 213--242. Cambridge Univ. Press,
  Cambridge, 2005.

\bibitem[DY07]{develin_yu}
M.~Develin and J.~Yu.
\newblock Tropical polytopes and cellular resolutions.
\newblock {\em Experimental Mathematics}, 16(3), 2007.

\bibitem[Ehr62]{Ehr62}
E.~Ehrhart.
\newblock Sur les poly\`edres rationnels homoth\'etiques \`a {$n$}\ dimensions.
\newblock {\em C. R. Acad. Sci. Paris}, 254:616--618, 1962.

\bibitem[Ele86]{Elekes}
G.~Elekes.
\newblock A geometric inequality and the complexity of computing volume.
\newblock {\em Discrete Comput. Geom.}, 1(4):289--292, 1986.

\bibitem[GJ79]{GJ79}
M.~R. Garey and D.~S. Johnson.
\newblock {\em Computers and intractability}.
\newblock W. H. Freeman and Co., San Francisco, Calif., 1979.
\newblock A guide to the theory of NP-completeness, A Series of Books in the
  Mathematical Sciences.

\bibitem[GK94]{GK94}
Peter Gritzmann and Victor Klee.
\newblock On the complexity of some basic problems in computational convexity.
  {II}. {V}olume and mixed volumes.
\newblock In {\em Polytopes: abstract, convex and computational ({S}carborough,
  {ON}, 1993)}, volume 440 of {\em NATO Adv. Sci. Inst. Ser. C Math. Phys.
  Sci.}, pages 373--466. Kluwer Acad. Publ., Dordrecht, 1994.

\bibitem[GLS93]{GLS93}
M.~Gr{\"o}tschel, L.~Lov{\'a}sz, and A.~Schrijver.
\newblock {\em Geometric algorithms and combinatorial optimization}, volume~2
  of {\em Algorithms and Combinatorics}.
\newblock Springer-Verlag, Berlin, second edition, 1993.

\bibitem[GO04]{DiscreteCompGeometry}
Jacob~E. Goodman and Joseph O'Rourke, editors.
\newblock {\em Handbook of discrete and computational geometry}.
\newblock Discrete Mathematics and its Applications (Boca Raton). Chapman \&
  Hall/CRC, Boca Raton, FL, second edition, 2004.

\bibitem[GP15]{Grigoriev}
D.~Grigoriev and V.~V. Podolskii.
\newblock Complexity of tropical and min-plus linear prevarieties.
\newblock {\em Comput. Complexity}, 24(1):31--64, 2015.

\bibitem[HRGZ97]{ConvPolytope}
M.~Henk, J.~Richter-Gebert, and G.~M. Ziegler.
\newblock Basic properties of convex polytopes.
\newblock In {\em Handbook of discrete and computational geometry}, CRC Press
  Ser. Discrete Math. Appl., pages 243--270. CRC, Boca Raton, FL, 1997.

\bibitem[IR09]{IR09}
Z.~Izhakian and L.~Rowen.
\newblock The tropical rank of a tropical matrix.
\newblock {\em Comm. Algebra}, 37(11):3912--3927, 2009.

\bibitem[JK10]{TropOrdConvexity}
M.~Joswig and K.~Kulas.
\newblock Tropical and ordinary convexity combined.
\newblock {\em Adv. Geom.}, 10(2):333--352, 2010.

\bibitem[JSY07]{JSY07}
M.~Joswig, B.~Sturmfels, and J.~Yu.
\newblock Affine buildings and tropical convexity.
\newblock {\em Albanian J. Math.}, 1(4):187--211, 2007.

\bibitem[Kar72]{Karp}
R.~M. Karp.
\newblock Reducibility among combinatorial problems.
\newblock In {\em Complexity of computer computations ({P}roc. {S}ympos., {IBM}
  {T}homas {J}. {W}atson {R}es. {C}enter, {Y}orktown {H}eights, {N}.{Y}.,
  1972)}, pages 85--103. Plenum, New York, 1972.

\bibitem[Kat07]{katz07}
R.~D. Katz.
\newblock Max-plus {(A,B)}-invariant spaces and control of timed discrete event
  systems.
\newblock {\em IEEE Trans. on Automatic Control}, 52(2):229--241, 2007.

\bibitem[KR05]{KR05}
K.~Kim and F.~Roush.
\newblock Factorization of polynomials in one variable over the tropical
  semiring.
\newblock arXiv:math/0501167, 2005.

\bibitem[LP07]{AlcovedPolytopes}
T.~Lam and A.~Postnikov.
\newblock Alcoved polytopes. {I}.
\newblock {\em Discrete Comput. Geom.}, 38(3):453--478, 2007.

\bibitem[LP12]{AlcovedPolytopes2}
T.~Lam and A.~Postnikov.
\newblock Alcoved polytopes ii.
\newblock arXiv:1202.4015v1, 2012.

\bibitem[Mac15]{MacCaigThesis}
M.~MacCaig.
\newblock {\em Integrality in Max-Linear Systems}.
\newblock PhD thesis, The University of Birmingham, 2015.

\bibitem[Mac16]{Mac16}
M.~MacCaig.
\newblock Exploring the complexity of the integer image problem in the
  max-algebra.
\newblock {\em Discrete Applied Mathematics}, 2016.

\bibitem[Mou00]{Mount00}
J.~Mount.
\newblock Fast unimodular counting.
\newblock {\em Combin. Probab. Comput.}, 9(3):277--285, 2000.

\bibitem[MS15]{maclagan_sturmfels}
D.~Maclagan and B.~Sturmfels.
\newblock {\em Introduction to Tropical Geometry}, volume 161 of {\em Grad.
  Stud. Math.}
\newblock AMS, Providence, RI, 2015.

\bibitem[MSS04]{MSS04}
R.~H. M\"ohring, M.~Skutella, and F.~Stork.
\newblock Scheduling with {AND}/{OR} precedence constraints.
\newblock {\em SIAM J. Comput.}, 33(2):393--415, 2004.

\bibitem[Ser07]{DefiniteClosures}
S.~Sergeev.
\newblock Max-plus definite matrix closures and their eigenspaces.
\newblock {\em Linear Algebra Appl.}, 421(2-3):182--201, 2007.

\bibitem[Shi77]{ShilovLA}
G.~E. Shilov.
\newblock {\em Linear algebra}.
\newblock Dover Publications, Inc., New York, english edition, 1977.
\newblock Translated from the Russian and edited by Richard A. Silverman.

\bibitem[Shi13]{Sh13}
Y.~Shitov.
\newblock On the complexity of {B}oolean matrix ranks.
\newblock {\em Linear Algebra Appl.}, 439(8):2500--2502, 2013.

\bibitem[Val79a]{Valiant}
L.~G. Valiant.
\newblock The complexity of computing the permanent.
\newblock {\em Theoret. Comput. Sci.}, 8(2):189--201, 1979.

\bibitem[Val79b]{Val79}
L.~G. Valiant.
\newblock The complexity of enumeration and reliability problems.
\newblock {\em SIAM J. Comput.}, 8(3):410--421, 1979.

\bibitem[WY14]{SymmetricAlcoved}
A.~Werner and J.~Yu.
\newblock Symmetric alcoved polytopes.
\newblock {\em Electron. J. Combin.}, 21(1):Paper 1.20, 14, 2014.

\end{thebibliography}
\end{document}